\newenvironment{tbs}{%
   \small\tt
   \begin{itemize}}{\end{itemize}}
\newcommand{\btbs}{\begin{tbs}}
\newcommand{\etbs}{\end{tbs}}
\newcommand{\hide}[1]{}
\theoremstyle{definition}
\newtheorem{definition}{Definition}[section]
\newtheorem{myfact}[definition]{Fact}
\newtheorem{remark}[definition]{Remark}
\theoremstyle{defC}
\newtheorem{myfactC}[definition]{Fact}
\newenvironment{proofof}[1]{\begin{trivlist}\item[\hskip\labelsep{\bf
Proof~of~{#1}.\ }]}{\hspace*{\fill} {\sc \qed}\end{trivlist}}
\newtheorem{claim2}{{\normalfont\sc Claim}\normalfont}
\newenvironment{claim}{\begin{claim2}\rm}{\end{claim2}\rm}
\newenvironment{claimfirst}{\setcounter{claim2}{0}
               \begin{claim2}\rm}{\end{claim2}\rm}
\newenvironment{pfclaim}{\begin{trivlist}\item[]{\sc Proof of
Claim}}{\hfill {\mbox{$\blacktriangleleft$}}\end{trivlist}}
\newcommand{\Set}{\mathsf{Set}}
\newcommand{\id}{\mathsf{id}}
\newcommand{\funP}{\mathsf{P}}
\newcommand{\funT}{\mathsf{T}}
   \newcommand{\funTX}{\mathsf{T}_{\Prop}}
   \newcommand{\fun}{\funT}
\newcommand{\funI}{\mathsf{Id}}
\newcommand{\funB}{\mathsf{B}}
\newcommand{\funM}{\mathsf{M}}
\newcommand{\funQ}{\breve{\funP}}
\newcommand{\cmp}{\circledcirc}
\newcommand{\nada}{\varnothing}  
\newcommand{\Dom}{\mathsf{Dom}}
\newcommand{\Ran}{\mathsf{Ran}}
\newcommand{\Id}{\mathsf{Id}}
\newcommand{\isdef}{\mathrel{:=}}
\newcommand{\rst}[1]{\!\!\upharpoonright_{#1}\,}
\newcommand{\sz}[1]{|#1|}
\newcommand{\NBT}{\mathit{NBT}}
\newcommand{\last}{\mathsf{last}}
\newcommand{\fl}[1]{#1^{\flat}}
\newcommand{\Sig}{\Sigma}
\newcommand{\Prop}{\ensuremath{\mathtt{X}}}        
\newcommand{\PropQ}{\mathtt{Y}}
\newcommand{\ML}{\ensuremath{\mathtt{ML}}} 
\newcommand{\MLLa}{\ensuremath{\ML_{\La}}}    
\newcommand{\muML}{\ensuremath{\mu\ML}}    
\newcommand{\muMLLa}{\ensuremath{\mu\ML_{\La}}}    
\newcommand{\Latt}{\mathtt{Latt}}          
\newcommand{\MLone}{\ensuremath{\mathtt{1ML}}} 
   \newcommand{\MLoLa}{\ensuremath{\mathtt{1ML}_{\La}}} 
\newcommand{\D}{\mathtt{D}}
\newcommand{\SigB}{\Sig_{\funB}}
\newcommand{\SigM}{\Sig_{\funM}}
\newcommand{\dia}{\Diamond}
\newcommand{\nb}{\nabla}
\newcommand{\hs}{\heartsuit}
\newcommand{\mop}[1]{\hs_{#1}}
\newcommand{\nxt}[1]{\raisebox{.3ex}{$\scriptstyle \bigcirc$}_{#1}}
\newcommand{\diff}{\mathsf{diff}}
\newcommand{\disbag}[2]{\nabla(#1 {;} #2)}
\renewcommand{\phi}{\varphi} 
\newcommand{\isbnf}{\mathrel{::=}}
\newcommand{\divbnf}{\mid}
\newcommand{\mathstr}[1]{\mathbb{#1}}
\newcommand{\bbA}{\mathstr{A}}
\newcommand{\bbS}{\mathstr{S}}
\newcommand{\bbX}{\mathstr{X}}
\newcommand{\simu}{\mathrel{\mathchoice%
{\raisebox{.3ex}{$\,
  \underline{\makebox[.7em]{$\rightarrow$}}\,$}}%
{\raisebox{.3ex}{$\,
  \underline{\makebox[.7em]{$\rightarrow$}}\,$}}%
{\raisebox{.2ex}{$\,
  \underline{\makebox[.5em]{\scriptsize$\rightarrow$}}\,$}}%
{\raisebox{.2ex}{$\,
  \underline{\makebox[.5em]{\scriptsize$\rightarrow$}}\,$}}}}
\newcommand{\beheq}{\simeq}
\newcommand{\sat}{\Vdash}
\newcommand{\satone}{\sat^{1}}
\newcommand{\satzero}{\sat^{0}}
\newcommand{\mng}[1]{[\![ #1 ]\!]}
\newcommand{\mngone}[1]{[\![ #1 ]\!]^{1}}
\newcommand{\mngzero}[1]{[\![ #1 ]\!]^{0}}
\newcommand{\sh}[1]{{#1}^{\sharp}}
\newcommand{\shA}{\sh{A}}
\newcommand{\simof}[1]{\ensuremath{\mathsf{sim}(#1)}}
\newcommand{\AG}{\mathcal{A}}
\newcommand{\eloi}{\exists}
\newcommand{\abel}{\forall}
\newcommand{\Win}{\mathsf{Win}}
\newcommand{\PM}[1]{\mathrm{PM}_{#1}}
\newcommand{\bbG}{\mathbb{G}}
\newcommand{\tup}[1]{\langle#1\rangle}
\newcommand{\ul}[1]{\underline{#1}}
\newcommand{\sse}{\subseteq}
\newcommand{\bv}{\bigvee}
\newcommand{\bw}{\bigwedge}
\newcommand{\bdual}[1]{#1^{\partial}}
\newcommand{\Si}{\Sigma}
\newcommand{\Ga}{\Gamma}
\newcommand{\De}{\Delta}
\newcommand{\Th}{\Theta}
\newcommand{\La}{\Lambda}
\newcommand{\Om}{\Omega}
\newcommand{\al}{\alpha}
\newcommand{\be}{\beta}
\newcommand{\de}{\delta}
\newcommand{\ga}{\gamma}
\newcommand{\ka}{\kappa}
\newcommand{\la}{\lambda}
\newcommand{\si}{\sigma}
\newcommand{\om}{\omega}
\newcommand{\ai}{a_{I}}
\newcommand{\ol}[1]{\overline{#1}}
\newcommand{\ti}[1]{\widetilde{#1}}
\newcommand{\wh}[1]{\widehat{#1}}
\newcommand{\Bool}{\mathtt{Bool}}
\newcommand{\psf}{\funP}
\newcommand{\oneseq}{\equiv^1}
\newcommand{\MLoneLa}{\MLoLa}
\newcommand\utimes{ \mathbin{\ooalign{$\cup$\cr%
   \hfil\raise0.42ex\hbox{$\scriptscriptstyle\times$}\hfil\cr}} }
\newcommand{\overlap}{\mathcal{O}}
\title{Disjunctive Bases: \texorpdfstring{\\}{}Normal Forms and Model Theory for Modal Logics}
\keywords{Modal logic, fixpoint logic, automata, coalgebra, graded modal logic,
Lyndon theorem, uniform interpolation, expressive completeness}
\date{}
\begin{document}

\author[S.~Enqvist]{Sebastian Enqvist\rsuper{a}}	
\address{\lsuper{a}Stockholm University, Department of Philosophy}	
\email{sebastian.enqvist@philosophy.su.se}  

\author[Y.~Venema]{Yde Venema\rsuper{b}}	
\address{\lsuper{b}University of Amsterdam, Institute for Logic, Language and Computation}	
\email{y.venema@uva.nl}  

\begin{abstract}
We present the concept of a disjunctive basis as a generic framework for normal forms in modal
logic based on coalgebra.
Disjunctive bases were defined in previous work on completeness for modal fixpoint logics, where
they played a central role in the proof of a generic completeness theorem for coalgebraic mu-calculi.
Believing the concept has a much wider significance,
here we investigate it more thoroughly in its own right.
We show that the presence of a disjunctive basis at the ``one-step'' level entails a number of good
properties for a coalgebraic mu-calculus, in particular, a simulation theorem showing that every
alternating automaton can be transformed into an equivalent nondeterministic one.
Based on this, we prove a Lyndon theorem for the full fixpoint logic, its fixpoint-free
fragment and its one-step fragment, and a Uniform Interpolation result, for both the full mu-calculus
and its fixpoint-free fragment.

We also raise the questions, when a disjunctive basis exists, and how disjunctive bases are
related to Moss' coalgebraic ``nabla'' modalities.
Nabla formulas provide disjunctive bases for many coalgebraic modal logics, but there are
cases where disjunctive bases give useful normal forms even when nabla formulas fail to do so,
our prime example being graded modal logic. We also show that disjunctive bases are preserved by forming sums, products and compositions of coalgebraic modal logics, providing  tools for modular construction of modal logics admitting disjunctive bases.
Finally, we consider the problem of giving a category-theoretic formulation of disjunctive bases, and
provide a partial solution.

This is a corrected version of the paper \url{https://arxiv.org/abs/1710.10706} published
originally on 26/3, 2019.
\end{abstract}

\maketitle

\section{Introduction}%
\label{s:intro}

The topic of this paper connects modal $\mu$-calculi, coalgebra and automata.
The connection between the modal $\mu$-calculus, as introduced by
Kozen~\cite{koze:resu83}, and automata running on infinite objects, is
standard~\cite{grae:auto02}.
Many of the most fundamental results about the modal $\mu$-calculus have been
proved by making use of this connection, including
completeness of Kozen's axiom system~\cite{walu:comp00},
and model theoretic results like expressive completeness~\cite{jani:expr96},
uniform interpolation and a Lyndon theorem~\cite{dago:logi00}.

The standard modal $\mu$-calculus was generalized to  generic, coalgebraic modal
$\mu$-calculi~\cite{vene:auto06}, of which the modal basis was provided by Moss'
original coalgebraic modality~\cite{moss:coal99}, now known as the \emph{nabla}
modality.
From a meta-logical perspective, Moss' nabla logics and their fixpoint
extensions are wonderfully well-behaved.
For example, a generic completeness theorem for nabla logics by a uniform system
of axioms was established~\cite{kupk:comp12}, and this was recently extended
to the fixpoint extension of the finitary Moss logic~\cite{enqv:comp16}.
Most importantly, the automata corresponding to the fixpoint extension of Moss'
finitary nabla logic always enjoy a \emph{simulation theorem}, allowing
arbitrary coalgebraic automata to be simulated by \emph{non-deterministic} ones;
this goes back to the work of Janin \& Walukiewicz on
$\mu$-automata~\cite{jani:auto95}.
The simulation theorem provides a very strong normal form for these logics, and
plays an important role  in the proofs of several results for coalgebraic
fixpoint logics.

The downside of this approach is that the nabla modality is rather non-standard,
and understanding what concrete formulas actually say is not always easy.
For this reason, another approach to coalgebraic modal logic has become popular,
based on so called \emph{predicate liftings}.
This approach, going back to the work of Pattinson~\cite{patt:coal03},
provides a much more familiar syntax in concrete applications, but can still be
elegantly formulated at the level of generality and abstraction that makes the
coalgebraic approach to modal logic attractive in the first place.%
\footnote{%
    For a comparison between the two approaches, see~\cite{kurz:moda12}.
}
Coalgebraic $\mu$-calculi have also been developed as extensions of the
predicate liftings based languages~\cite{cirs:expt09}, and the resulting logics
are very well behaved: for example, good complexity results were
obtained in op.~cit.
Again, the connection between formulas and automata can be formulated in this
setting~\cite{font:auto10}, but a central piece is now missing: so far, no
simulation theorem has been established for logics based on predicate liftings.
In fact, it is not trivial even to define what a non-deterministic automaton
\emph{is} in this setting.

This problem turned up in recent work~\cite{enqv:comp17}, by ourselves together
with Seifan, where we extended our earlier completeness result for Moss-style
fixpoint logics~\cite{enqv:comp16} to the predicate liftings setting.
Our solution was to introduce the concept of a \emph{disjunctive basis}, which
formalizes in a compact way the minimal requirements that a collection of
predicate liftings $\La$ must meet in order for the class of corresponding
$\La$-automata to admit a simulation theorem.
Our aim in the present paper is to follow up on this conceptual contribution,
which we believe is of much wider significance besides providing a tool to
prove completeness results.

Exemplifying this, we shall explore some of the applications of our coalgebraic
simulation theorem.
Some of these transfer known results for nabla based fixpoint logics to the
predicate liftings setting; for example, we show that a linear-size model
property holds for our non-deterministic automata  (or ``disjunctive'' automata
as we will call them), following~\cite{vene:auto06}.
We also show that uniform interpolation results hold for coalgebraic fixpoint
logics in the presence of a disjunctive basis, which was proved for the
Moss-style languages in~\cite{mart:unif15}.
We prove a Lyndon theorem for coalgebraic fixpoint logics, generalizing
a result for the standard modal $\mu$-calculus proved in~\cite{dago:logi00}: a
formula is monotone in one of its variables if and only if it is equivalent to
one in which the variable appears positively.
We also prove an explicitly \emph{one-step} version of this last result, which
we believe has some practical interest for modal fixpoint logics:
It is used to show that, given an expressively complete set of monotone
predicate liftings, its associated $\mu$-calculus has the same expressive power
as the full $\mu$-calculus based on the collection of all monotone predicate
liftings.
Finally, we show that the sum, product and composition of two modal logics that
admit disjunctive bases also admit a disjunctive basis, thus providing a tool
for modular construction of logics to which our results apply.

Next to proving these results, we compare the notion of a disjunctive basis to
the nabla based approach to coalgebraic fixpoint logics.
The connection will be highlighted in Section~\ref{s:yoneda} where we discuss
disjunctive predicate liftings via the Yoneda lemma: here the Barr lifting of
the ambient functor (on which the semantics of nabla modalities are based) comes
into the picture naturally.
However, this is not to say that disjunctive bases are just ``nablas in disguise'': it is a fundamental concept, formulated specifically to suit the approach to coalgebraic modal logics via predicate liftings, rather than nabla based languages. Furthermore, in cases where these two approaches to coalgebraic logics are not equivalent, disjunctive bases may be available even when nabla formulas fail to provide them.
We shall see that there are several concrete and natural examples of this.
A particularly interesting specimen is  \emph{graded modal logic}, which adds
counting modalities to standard modal logic.
While we will see that the standard nabla formulas of this language do not
provide a disjunctive basis, nevertheless a disjunctive basis for graded modal
logic does exist.

\paragraph{\textbf{Correction of earlier version:}}

This paper is a corrected version of the paper \url{https://arxiv.org/abs/1710.10706} published
originally on 26/3, 2019. That version included a Janin-Walukiewicz style characterization theorem for coalgebraic
$\mu$-calculi admitting a slightly stronger form of disjunctive bases called \emph{uniform disjunctive bases}, claiming that such $\mu$-calculi are expressively complete for the corresponding coalgebraic monadic second-order logics modulo bisimulation invariance (see ~\cite{enqv:mona15,enqv:expr17}). The proof of this result was incorrect. We currently do not know whether the result holds, and we leave this as an open question for future research.  We shall briefly explain the error here. 

The argument given in the original paper defined a construction that provides, for every $\funT$-model $(\bbS,s)$, a
pre-image $f_\bbS : (\bbS_*,s_*) \to (\bbS,s)$ such that any disjunctive $\Lambda$-automaton $\bbA$ accepts $(\bbS,s)$ iff 
  it strongly accepts $(\bbS_*,s_*)$. An inductively defined translation $t$ was then given, mapping each formula in the monadic second-order language $ \mathtt{MSO}_\Lambda(\Prop)$ to the language $ \muML_\Lambda(\Prop)$. The key claim about this translation was that, for any formula $\varphi$,  and any pointed $\funT$-model $(\bbS,s)$:
\[\bbS_*,s_* \sat \varphi \text{ iff } \bbS,s \sat t(\varphi).\]
The proof of this claim was by induction on the complexity of the formula $\varphi$, the crucial step being the case for an existentially quantified formula $\exists p.\alpha$, where the  translation is defined by $t(\exists p.\alpha) := \widetilde{\exists} p. t(\alpha)$. (Here $\widetilde{\exists} p$ is a bisimulation quantifier, defined as in Section \ref{s:ui}.) The problem with proving the required equivalence for this case lies in the direction from left to right. One could try to reason as follows: if $\bbS_*,s_* \sat \exists p.\alpha$, then the model $\bbS_*$ can be extended with a value $Z$ for $p$ such that we have $\bbS_*[p\mapsto Z],s_* \sat \alpha$, where $\bbS_*[p\mapsto Z]$ is the extended model. If we could then conclude that also $\bbS_*[p\mapsto Z],s_* \sat t(\alpha)$, then we would have $\bbS,s \sat \widetilde{\exists} p. t(\varphi)$ by the semantics of the bisimulation quantifier, and the argument would be done. The missing step needed to prove this is to show that the model $(\bbS_*[p \mapsto Z],s_*)$ is a pre-image, i.e. to find a model $(\bbS',s')$ such that $(\bbS_*[p\mapsto Z],s_*) = (\bbS'_*,s'_*)$; we could then argue that $\bbS',s' \sat t(\alpha)$ (by the induction hypothesis), hence $\bbS_*[p\mapsto Z],s_* \sat t(\alpha)$ (since $t(\alpha)$ is a formula of $ \muML_\Lambda(\Prop)$ and therefore bisimulation invariant). This looks plausible at first sight, since the model $\bbS_*[p\mapsto Z]$ is ``almost'' a pre-image of $\bbS$,  aside from the added value for the variable $p$. Unfortunately, the model $\bbS_*[p\mapsto Z]$ does not have to be a pre-image of any model, in general. Without going into a detailed counter-example, the reason is that typically the pre-image construction will create many bisimilar copies of states in a model, where each copy will satisfy the same propositional variables. So even though the model $\bbS_*$ is certainly a pre-image,  the value of $p$ may prevent the extended model $\bbS_*[p\mapsto Z]$ from being a pre-image -- it might not contain enough copies of the states satisfying $p$.

\paragraph{\textbf{Conference version:}}
This article is an extended version of a paper~\cite{enqv:disj17} presented
in Ljubljana at the 2017 conference on Algebra and Coalgebra in Computer
Science (CALCO 2017).
Besides the material presented there, we have provided more detailed proofs,
new examples and the new
results on modular construction of logics admitting disjunctive bases via
products, sums and compositions.


\section{Preliminaries}%
\label{s:prel}

\subsection{Basics of coalgebraic logic}

We assume that the reader is familiar with coalgebra, coalgebraic modal logic
and the basic theory of automata operating on infinite objects.
The aim of this section is merely to fix some definitions and notations on notions
related to coalgebraic modal logic.
In an appendix to this paper we provide some basic definitions related to the
theory of infinite parity games --- we shall need such games for our results on
coalgebraic modal \emph{fixpoint} logics.

First of all, throughout this paper we will use the letter $\funT$ to denote an
arbitrary \emph{set functor}, that is, a covariant endofunctor on the category
$\Set$ having sets as objects and functions as arrows.
For notational convenience we sometimes assume that $\funT$ preserves
inclusions; our arguments can easily be adapted to the more general case.
Functors of coalgebraic interest include the identity functor $\Id$, the
\emph{powerset functor} $\funP$, the monotone neighborhood functor $\funM$
and the (finitary) bag functor $\funB$ (where $\funB S$ is the collection of
\emph{weight functions} $\si: S \to \om$ with finite support).
We also need the contravariant powerset functor $\funQ$.
\begin{definition}
A \emph{$\funT$-coalgebra} is a pair $\bbS = (S,\si)$ where $S$ is a set of
objects called \emph{states} or \emph{points} and $\si : S \to \funT S$ is the
\emph{transition} or \emph{coalgebra map} of $\bbS$.
A \emph{pointed} $\funT$-coalgebra is a pair $(\bbS,s)$ consisting of a
$\funT$-coalgebra and a state $s \in S$.
We call a function $f: S' \to S$ a \emph{coalgebra homomorphism} from $(S',\si')$
to $(S,\si)$ if $\si\circ f = \funT f \circ \si'$, and write $(\bbS',s') \simu
(\bbS,s)$ if there is such a coalgebra morphism mapping $s'$ to $s$.
\end{definition}

Throughout the paper we fix an (unnamed) countable supply of propositional
letters (or variables), of which we often single out a finite subset $\Prop$.

\begin{definition}
With $\Prop$ a set of proposition letters,
a \emph{$\funT$-model over $\Prop$} is a pair $(\bbS,V)$ consisting of a
$\funT$-coalgebra $\bbS = (S,\si)$ and a \emph{$\Prop$-valuation} $V$ on $S$,
that is, a function $V: \Prop \to \funP S$.
The \emph{marking} associated with $V$ is the transpose map $\fl{V}: S \to \funP
\Prop$ given by $\fl{V}(s) \isdef \{ p \in \Prop \mid s \in V(p) \}$.
Thus the pair $(\bbS,V)$ induces a $\funT_{\Prop}$-coalgebra $(S,(\fl{V},\si))$,
where $\funTX$ is the set functor $\funP\Prop \times \funT$.

Given $\funT$-models $\bbS,\bbS'$, a map $f : \bbS \to \bbS'$ is called a
\emph{$\funT$-model homomorphism} if it is a $\funT_\Prop$-coalgebra homomorphism
for the induced $\funT_\Prop$-coalgebras, i.e., it is a $\funT$-coalgebra
homomorphism that preserves the truth values of all propositional variables.
Pointed $\funT$-models $(\bbS,s)$ and $(\bbS',s')$ are said to be
\emph{behaviorally equivalent}, written $(\bbS,s) \beheq (\bbS',s')$, if there is
a pointed $\funT$-model $(\bbS'',s'')$ and $\funT$-model homomorphisms $f :
\bbS \to \bbS''$ and $f' : \bbS' \to \bbS''$ such that $f(s) = f'(s')$.
\end{definition}

We will mainly follow the approach in coalgebraic modal logic where modalities
are associated (or even identified) with finitary predicate liftings.
\begin{definition}
A \emph{predicate lifting} of arity $n$ is a natural transformation $\la:
\funQ^{n} \Rightarrow \funQ\funT$.
Such a predicate lifting is \emph{monotone} if for every set $S$, the map
$\la_{S}: {(\funQ S)}^{n} \to \funQ\funT S$
preserves the subset order in each coordinate.
The induced predicate lifting $\bdual{\la}: \funQ^{n} \Rightarrow \funQ\funT$,
given by $\bdual{\la}_{S}(X_{1},\ldots,X_{n}) \isdef
\funT S \;\setminus\; \la_{S}(S\setminus X_{1}, \ldots, S\setminus X_{1})$,
is called the \emph{(Boolean) dual} of $\la$.
\end{definition}
\begin{definition}
A \emph{monotone modal signature}, or briefly: \emph{signature} for $\funT$ is
a set $\La$ of monotone predicate liftings for $\funT$, which is closed under
taking boolean duals.
\end{definition}

In this paper we will study coalgebraic modal logic with and without fixpoint
operators.
Given a signature $\La$, the formulas of the \emph{coalgebraic $\mu$-calculus}
$\muML_{\La}$ are given by the following grammar:
\[
\phi \isbnf p
   \divbnf \bot
   \divbnf \neg \phi
   \divbnf \phi_{0} \lor \phi_{1}
   \divbnf \hs_{\la}(\phi_{1},\ldots,\phi_{n})
   \divbnf \mu q.\phi'
\]
where $p$ and $q$ are propositional variables, $\la \in \La$ has arity $n$, and
the application of the fixpoint operator $\mu q$ is under the proviso that all
occurrences of $q$ in $\phi'$ are positive (i.e., under an even number of
negations).
We let $\MLLa$ denote the fixpoint-free fragment of $\muMLLa$, i.e., the ``basic''
coalgebraic modal logic of the signature $\La$.   We let  $\muML_{\La}(\Prop)$
denote  the set of $\muML_{\La}$-formulas taking free variables from $\Prop$,
and define the notation $\MLLa(\Prop)$ similarly.


Formulas of such coalgebraic $\mu$-calculi are interpreted in coalgebraic
models, as follows.
Let $\bbS = (S,\si,V)$ be a $\funT$-model over a set $\Prop$ of proposition
letters.
By induction on the complexity of formulas, we define a \emph{meaning function}
$\mng{\cdot}^{\bbS}: \muML_{\La}(\Prop) \to \funP S$, together with an associated
\emph{satisfaction relation} ${\sat} \sse S \times \muML_{\La}(\Prop)$ given
by $\bbS,s \sat \phi$ iff $s \in \mng{\phi}^{\bbS}$.
All clauses of this definition are standard; for instance, the one for the
modality $\hs_{\la}$ is given by
\begin{equation}
\label{eq:semhs}
\bbS,s \sat \hs_{\la}(\phi_{1},\ldots,\phi_{n}) \text{ if }
\si(s) \in \la_{S}(\mng{\phi_{1}}^{\bbS},\ldots,\mng{\phi_{n}}^{\bbS}).
\end{equation}
For the least fixpoint operator we apply the standard description of least
fixpoints of monotone maps from the Knaster-Tarski theorem and take
\[
\mng{\mu x. \phi}^{\bbS} \isdef
  \bigcap \big\{ U \in \funP S \mid \mng{\phi}^{(S,\si,V[x \mapsto U])} \sse U
   \big\},
\]
where $V[x \mapsto U]$ is given by $V[x \mapsto U](x) \isdef U$
while $V[x \mapsto U](p) \isdef V(p)$ for $p \neq x$.
A formula $\phi$ is said to be \emph{monotone} in a variable $p$ if, for every
$\funT$-model $\bbS = (S,\sigma,V)$ and all sets $Z_1 \subseteq Z_2 \subseteq S$,
we have $ \mng{\phi}^{(S,\si,V[p \mapsto Z_1])} \subseteq
\mng{\phi}^{(S,\si,V[p \mapsto Z_2])}$.

\subsection{Examples}%
\label{s:someexamples}

There are many well-known examples of modal logics that can be presented as
coalgebraic modal logics where the modalities correspond to predicate liftings
for the relevant functor. We shall not attempt to provide a complete list here,
but we provide a few basic examples that will be helpful in what follows.

\paragraph*{Next-time modality}
Probably the simplest example of a non-trivial modality that can be described as
a predicate lifting is the ``next-time'' operator of linear temporal logic.
The natural way to present models of $\mathtt{LTL}$ coalgebraically is to take
the coalgebraic type functor to be $\funI$, the identity functor on the category
of sets.
Coalgebras $(S,\sigma)$ for this functor just provide maps $\sigma : S \to S$,
which can be thought of as providing the ``next-state'' function for a discrete
linear flow of time.

The identity functor has a unary predicate lifting $\nxt{} : \funQ\to \funQ$:
the identity natural transformation defined by $\nxt{S} : Z \mapsto Z$ for
$Z \subseteq S$. A little thought shows that the evaluation of formulas
$\nxt{}\varphi$ in a model turns out as expected:
\[
    \bbS,s \sat \nxt{}\varphi \text{ iff } \bbS,\sigma(s) \sat \varphi.
\]
The lifting $\nxt{}$ is monotone and dual to itself, so $\Si_{\funI} =
\{\nxt{}\}$ is a modal signature for the identity functor.
The language $\muML_{\La}$ induced by this signature is known as the
\emph{linear-time $\mu$-calculus}.

\paragraph*{Basic modal logic}
Kripke frames for modal logic are coalgebras for the covariant powerset functor
$\funP$, or rather can be represented equivalently as such: a binary relation
$R \subseteq S \times S$ of a frame $(S,R)$ can be identified with the map
$R[-] : S \to \funP S$ sending each point in $S$ to its set of $R$-successors.
The usual Kripkean modalities $\Box$ and $\dia$ come out as predicate liftings
for $\funP$, by setting $\Box_S(Z) = \{Z' \subseteq S \mid Z' \subseteq Z\}$
and $\Diamond_S(Z) = \{Z' \subseteq S \mid Z \cap Z' \neq \nada\}$.
Unfolding the definitions we see that the naturality condition for the box
modality $\Box$ says that for all $f : S \to S'$ and $Z \subseteq S$,
$Z' \subseteq S'$, we have:
\[
f[Z] \subseteq Z'\; \Longleftrightarrow \; Z \subseteq f^{-1}[Z'].
\]
This is the familiar adjunction between direct and inverse image.
The semantics of modal formulas comes out as expected: given a Kripke model
$(S,R,V)$ represented as a $\funP$-model $\bbS$, we have $\bbS,s \sat \Box
\varphi$ if and only if $\bbS,t \sat \varphi$ for every $R$-successor $t$ of
$s$.
For the signature $\Si_{\funP} = \{\Box,\dia\}$, the language
$\muML_{\Si_{\funP}}$ is the (standard) modal $\mu$-calculus.

\paragraph*{Monotone modal logic}

Monotone modal logic generalizes normal modal logic by dropping the constraint
that the box modality should commute with conjunctions: $\Box (\varphi \wedge
\psi) \Leftrightarrow \Box \varphi \wedge \Box \psi$.
Semantics for monotone modal logic is given by coalgebras for the monotone
neighborhood functor $\funM$, which is the sub-functor of the functor $\funQ
\circ \funQ$ defined by setting:
\[
\funM S = \{F \subseteq \funQ S \mid Z \in F
\; \& \; Z \subseteq Z' \; \Rightarrow \; Z' \in F\}
\]
The lifting $\Box : \funQ \to \funQ \circ \funM$ is defined by setting $F \in
\Box_S(Z)$ iff $Z \in F$.
The dual lifting $\Diamond$ is defined by setting $F \in \Diamond_S(Z)$ iff,
for all $Z'\in F$, $Z \cap Z' \neq \nada$.

Monotone modal logic can be seen as a ``base logic'' for modal logics without
distribution of the box over conjunctions, with varying interpretations, much
as the modal logic $\mathbf{K}$ can be seen as the most basic normal modal logic.
Examples where such modalities appear are alternating-time temporal
logic~\cite{alur:alte02} and Parikh's dynamic game logic~\cite{pari:logi85}.
For  $\Si_{\funM} = \{\Box,\dia\}$, the language $\muML_{\Si_{\funM}}$ is known
as the \emph{monotone $\mu$-calculus}.
It stands in a similar relationship to alternating-time logic and game logic as
the modal $\mu$-calculus does to $\mathtt{CTL}$ and $\mathtt{PDL}$.

\paragraph*{Graded modal logic}
Graded modal logic extends basic modal logic with \emph{counting modalities}
$\dia^{k} \varphi$ and $\Box^{k}\varphi$.
These modalities, interpreted on Kripke models, are interpreted as ``at least
$k$ successors satisfy $\varphi$'' and ``there are less than $k$ successors that
do not satisfy $\varphi$''.
It is often convenient to use a slight generalization of Kripke semantics, where
successors of a state are assigned ``weights'' from $\omega$.
Such models are based on  $\funB$-coalgebras, where the ``bags'' functor $\funB$
assigns to a set $S$ the set $\funB S$ of maps $f : S \to \omega$ such that
$f(s) = 0$ for all but finitely many $s \in S$.
Given a map $h : S \to S'$ and $f \in \funB S$, the map $f' = \funB h(f)$ is
defined by:
\[
    f'(s') = \sum_{h(s) = s'} f(s)
\]
The functor $\funB$ comes with an infinite supply of predicate liftings
$\underline{k}$ and $\ol{k}$ --- one pair for each $k \in \omega$ --- given by:
\[
\begin{array}{lll}
\underline{k}_{S}: &
      U \mapsto \big\{ \si \in \funB S \mid \sum_{u \in U} \si(u) \geq k \big\}
\\[1mm]
\ol{k}_{S}: &
      U \mapsto \{ \si \in \funB S \mid \sum_{u \not\in U} \si(u) < k  \big\}.
\end{array}
\]
Over Kripke models, which can be identified with $\funB$-models $(S,\sigma,V)$
in which $\sigma(s)(v) \in \{0,1\}$ for all $s,v \in S$, it is not hard to see
that the formulas $\dia^{k} \varphi$ and $\Box^{k}\varphi$ get their expected
meanings.
For $\Si_{\funB} = \{\underline{k} \mid k \in \omega\} \cup \{\ol{k} \mid k \in
\omega\}$, the language $\muML_{\Si_{\funB}}$ is known as the \emph{graded
$\mu$-calculus}.

\subsection{One-step logic and one-step models}

A pivotal role in our approach is filled by the \emph{one-step versions} of coalgebraic
logics.  The one-step perspective on coalgebraic modal logics, developed by a number of authors over several papers~\cite{cirs:modu04,patt:coal03,schr:pspa09,schr:rank10}, has been key to proving some central results about such logics and their fixpoint extensions. In particular, it is instrumental  to the theory of coalgebraic automata. Indeed our main contribution here --- the concept of disjunctive bases --- takes place on the one-step level.
We do not assume that the reader is familiar with the framework of one-step logics, and give a self-contained introduction here.

We begin with a formal definition.
\begin{definition}
Given a signature $\Lambda$ and a set $A$ of variables, we define the set
$\Bool(A)$ of \emph{boolean formulas} over $A$ and the set $\MLoLa(A)$
of \emph{one-step $\La$-formulas} over $A$, by the following grammars:
\begin{eqnarray*}
\Bool(A) \ni \pi & \isbnf &
   a \divbnf
   \bot \divbnf \top \divbnf
   \pi \vee  \pi \divbnf \pi  \wedge  \pi
   \divbnf \neg \pi
\\ \MLoLa (A) \ni \al & \isbnf &
   \hs_\la \ol{\pi} \divbnf
   \bot \divbnf \top \divbnf
   \al \vee \al \divbnf \al \wedge \al
   \divbnf \neg \al
\end{eqnarray*}
where $a \in A$, $\la \in \Lambda$ and $\ol{\pi} = (\pi_{1},\ldots,\pi_{n})$ is
a tuple of formulas in $\Bool(A)$ of the same length as the arity of $\la$.
We will denote the positive (negation-free) fragments of $\Bool(A)$ and
$\MLoLa(A)$ as, respectively, $\Latt(A)$ and $\MLoLa^{+}(A)$:
\begin{eqnarray*}
\Latt(A) \ni \pi & \isbnf &
   a \divbnf
   \bot \divbnf \top \divbnf
   \pi \vee  \pi \divbnf \pi  \wedge  \pi
\\ \MLoLa^{+} (A) \ni \al & \isbnf &
   \hs_\la \ol{\pi} \divbnf
   \bot \divbnf \top \divbnf
   \al \vee \al \divbnf \al \wedge \al
\end{eqnarray*}

We shall often make use of substitutions: given a finite set $A$,
let $\vee_A : \funP A \to \Latt(A)$ be the map sending $B$ to $\bigvee B$, and
let $\wedge_A : \funP A \to \Latt(A)$ be the map sending $B$ to $\bigwedge B$,
and given sets $A,B$ let $\theta_{A,B} : A \times B \to \Bool(A \cup B)$ be
defined by mapping $(a,b)$ to $a \wedge b$.
\end{definition}

We need a number of properties of modal signatures, formulated in terms of the corresponding one-step logics. The first is fairly standard: expressive completeness of a modal signature $\Lambda$ means that every modality that makes sense for the functor --- formally, every predicate lifting --- can be expressed in terms of modalities in $\Lambda$.

\begin{definition}
A monotone modal signature $\La$ for $\funT$ is \emph{expressively complete} if,
for every  $n$-place predicate lifting $\la$ (not necessarily in $\La$) and for
all variables $a_{1},\ldots, a_{n}$ there is a formula $\al\in
\MLoLa(\{a_{1},\ldots, a_{n}\})$ which is equivalent to $\hs_{\la}\ol{a}$.
\end{definition}
We will also be interested in the following variant of expressive completeness:
\begin{definition}
We say that $\Lambda$ is \emph{Lyndon complete} if, for every \emph{monotone}
$n$-place predicate lifting $\la$ and variables $a_{1},\ldots, a_{n}$, there is
a \emph{positive} formula $\al\in  \MLoLa^+(\{a_{1},\ldots, a_{n}\})$ equivalent
to $\hs_{\la}\ol{a}$.
\end{definition}

We now turn to the semantics of one-step formulas, using so-called \emph{one-step models}.
\begin{definition}
A \emph{one-step $\funT$-frame} is a pair $(S,\si)$ with $\si \in \funT S$,
i.e., an object in the category $\mathcal{E}(\funT)$ of elements of $\funT$.
Similarly a \emph{one-step $\funT$-model} over a set $A$ of variables is a
triple $(S,\si,m)$ such that $(S,\si)$ is a one-step $\funT$-frame and $m:
S \to \funP A$ is an $A$-marking on $S$.

A \emph{morphism} $f : (S,\si) \to (S',\si')$ is a morphism in
$\mathcal{E}(\funT)$, that is, a map from $S$ to $S'$ such that $\funT f(\si)
= \si'$.
The map $f$ is said to be a \emph{morphism of one-step models} $f : (S,\si,m)
\to (S',\si',m')$ if, in addition, $m = m' \circ f$.
\end{definition}

Given a one-step model $(S,\si,m)$, we define the \emph{$0$-step interpretation}
$\mngzero{\pi}_{m} \sse S$ of $\pi \in \Bool(A)$ by the obvious induction:
$\mngzero{a}_m \isdef \{v \in S \mid a \in m(v)\}$, $\mngzero{\top}_m \isdef S$,
$\mngzero{\bot}_m \isdef \nada$, while we use standard clauses for $\wedge,
\vee$ and $\neg$.
Similarly, the \emph{one-step interpretation} $\mngone{\al}_m$ of $\al \in
\MLoLa(A)$ is defined as a subset of $\funT S$, with
$\mngone{\hs_\la (\pi_{1},\ldots,\pi_{n})}_m \isdef
\la_S(\mngzero{\pi_{1}}_m,\ldots,\mngzero{\pi_{n}}_m)$,
and again standard clauses apply to $\bot, \top, \wedge, \vee$ and $\neg$.
Given a one-step model $(S,\si,m)$, we write $S,\si, m \satone \al$ for
$\si \in \mngone{\al}_m$.
Notions like one-step satisfiability, validity and equivalence are defined and
denoted in the obvious way; in particular, we use $\equiv^{1}$ to denote the
equivalence of one-step formulas.

For future reference we mention the following two results, the first of which
states that the truth of one-step formulas is invariant under one-step
morphisms.

\begin{prop}%
\label{p:1invar}
Let $f: (S',\si',m') \to (S,\si,m)$ be a morphism of one-step models over $A$.
Then for every formula $\al \in \MLoLa(A)$ we have
\[
S',\si',m' \satone \al \text{ iff } S,\si,m \satone \al.
\]
\end{prop}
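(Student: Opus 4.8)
The plan is to prove the statement by induction on the structure of the one-step formula $\al \in \MLoLa(A)$. The boolean connectives $\bot$, $\top$, $\vee$, $\wedge$, $\neg$ are immediate from the induction hypothesis, so the only interesting case is the modal case $\al = \hs_\la(\pi_1,\ldots,\pi_n)$ for $\la \in \Lambda$ of arity $n$ and $\pi_i \in \Bool(A)$. By definition of the one-step interpretation, $S',\si',m' \satone \hs_\la\ol{\pi}$ means $\si' \in \la_{S'}(\mngzero{\pi_1}_{m'},\ldots,\mngzero{\pi_n}_{m'})$, and similarly on the other side with $S$, $\si$, $m$. So it suffices to relate these two.

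The first key step is a $0$-step substitution lemma: for every $\pi \in \Bool(A)$ and every morphism $f$ of one-step models, $\mngzero{\pi}_{m'} = f^{-1}[\mngzero{\pi}_m]$. This follows by a routine induction on $\pi$, using that $m = m' \circ f$ precisely for the atomic case $\pi = a$ (so $v \in \mngzero{a}_{m'}$ iff $a \in m'(v) = m(f(v))$ iff $f(v) \in \mngzero{a}_m$), and that preimage commutes with the boolean operations for the inductive steps.

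The second key step is naturality of the predicate lifting $\la$. Since $\la : \funQ^n \Rightarrow \funQ\funT$ is a natural transformation and $f : S' \to S$, naturality gives the commuting square relating $\la_{S'} \circ (\funQ f)^n$ and $\funQ\funT f \circ \la_S$; concretely, for any $X_1,\ldots,X_n \subseteq S$ we have $\la_{S'}(f^{-1}[X_1],\ldots,f^{-1}[X_n]) = (\funT f)^{-1}[\la_S(X_1,\ldots,X_n)]$. Applying this with $X_i = \mngzero{\pi_i}_m$ and using the $0$-step substitution lemma, $\mngone{\hs_\la\ol{\pi}}_{m'} = \la_{S'}(f^{-1}[\mngzero{\pi_1}_m],\ldots) = (\funT f)^{-1}[\la_S(\mngzero{\pi_1}_m,\ldots)] = (\funT f)^{-1}[\mngone{\hs_\la\ol{\pi}}_m]$. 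Finally, since $f$ is a morphism of one-step models we have $\funT f(\si') = \si$, hence $\si' \in (\funT f)^{-1}[\mngone{\hs_\la\ol{\pi}}_m]$ iff $\si \in \mngone{\hs_\la\ol{\pi}}_m$, which is exactly what we want.

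I do not expect a serious obstacle here; the proof is essentially bookkeeping. The one point demanding a little care is getting the direction of the naturality square right — predicate liftings are built from the \emph{contravariant} powerset $\funQ$, so the arrow $f : S' \to S$ induces $\funQ f : \funQ S \to \funQ S'$ acting by preimage, which is why the preimages appear on the $S'$ side of the equation rather than direct images. Keeping track of this, and of the fact that morphisms of one-step frames go "the same direction" as the underlying map while the functor-action condition $\funT f(\si') = \si$ points the other way, is the only spot where one could slip.
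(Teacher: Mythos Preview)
Your proof is correct and is exactly the standard argument one would expect; the paper itself states this proposition without proof, treating it as routine. One tiny slip: you write ``$m = m' \circ f$'' but, since $f: (S',\si',m') \to (S,\si,m)$, the morphism condition is $m' = m \circ f$ --- your subsequent computation $m'(v) = m(f(v))$ uses the correct direction, so this is just a typo in the surrounding prose.
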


The second proposition is a standard observation about the semantic counterpart
of the syntactic notion of substitution.

\begin{prop}%
\label{p:1subst}
Let $(S,\si,m)$ be a one-step model over $A$, and let $\si: B \to \Bool(A)$ be
a substitution.
Then for every formula $\al \in \MLoLa(B)$ we have
\[
S,\si,m_{\si} \satone \al \text{ iff } S,\si,m \satone \al[\si],
\]
where $m_{\si}$ is the $B$-marking given by $m_{\si}(b) \isdef
\mngzero{\si_{b}}_{m}$.
\end{prop}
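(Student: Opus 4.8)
The plan is to proceed by a straightforward induction on the complexity of the one-step formula $\al \in \MLoLa(B)$, with the crucial base case being a separate sublemma about the $0$-step interpretation of boolean formulas under substitution. First I would establish this sublemma: for every boolean formula $\pi \in \Bool(B)$, we have $\mngzero{\pi}_{m_{\si}} = \mngzero{\pi[\si]}_{m}$ as subsets of $S$. This is itself proved by induction on $\pi$. The base case $\pi = b$ for $b \in B$ holds by the very definition of $m_{\si}$: $\mngzero{b}_{m_{\si}} = \{v \in S \mid b \in m_{\si}(v)\} = m_{\si}(b) = \mngzero{\si_{b}}_{m} = \mngzero{b[\si]}_{m}$. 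The cases $\pi = \top$ and $\pi = \bot$ are immediate since substitution fixes these constants and both interpretations yield $S$ and $\nada$ respectively. The boolean connective cases $\wedge, \vee, \neg$ follow because substitution commutes with the connectives and the $0$-step interpretation is defined by the standard boolean clauses, so the induction hypothesis lifts through directly.

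With the sublemma in hand, I would turn to the main induction on $\al \in \MLoLa(B)$. The modal base case is $\al = \hs_{\la}(\pi_{1},\ldots,\pi_{n})$ with each $\pi_{i} \in \Bool(B)$. Here $\al[\si] = \hs_{\la}(\pi_{1}[\si],\ldots,\pi_{n}[\si])$, so unravelling the one-step semantics gives
\[
\mngone{\al}_{m} = \la_{S}(\mngzero{\pi_{1}[\si]}_{m},\ldots,\mngzero{\pi_{n}[\si]}_{m})
= \la_{S}(\mngzero{\pi_{1}}_{m_{\si}},\ldots,\mngzero{\pi_{n}}_{m_{\si}})
= \mngone{\hs_{\la}(\pi_{1},\ldots,\pi_{n})}_{m_{\si}},
\]
where the middle equality is exactly the sublemma applied coordinatewise. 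Hence $\si \in \mngone{\al[\si]}_{m}$ iff $\si \in \mngone{\al}_{m_{\si}}$, which is the claim for this case. The constant cases $\bot, \top$ and the boolean cases $\al_{0} \vee \al_{1}$, $\al_{0} \wedge \al_{1}$, $\neg \al_{0}$ again go through mechanically: substitution distributes over the connectives and $\mngone{\cdot}$ is defined by the standard clauses, so the induction hypothesis for the immediate subformulas yields the result.

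I do not expect any genuine obstacle here; the proposition is a routine "substitution lemma" of the kind ubiquitous in the semantics of logics, and the only point requiring a moment's care is getting the two levels of interpretation ($0$-step for $\Bool$, one-step for $\MLoLa$) and the two markings ($m$ on $A$, $m_{\si}$ on $B$) lined up correctly — which is precisely what the sublemma on boolean formulas handles. One could alternatively phrase the sublemma as the observation that $m_{\si}$ is designed to make the $0$-step interpretation of $b \in B$ match that of $\si_{b}$, and that this property propagates through boolean structure; either way the argument is the same.
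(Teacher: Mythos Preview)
Your proposal is correct, and the paper in fact omits the proof entirely, calling the proposition ``a standard observation about the semantic counterpart of the syntactic notion of substitution.'' Your two-level induction (first the $0$-step sublemma on $\Bool(B)$, then the induction on $\MLoLa(B)$-formulas) is exactly the expected argument; the only minor point to tidy up is that the paper defines $m_{\si}$ in valuation form ($b \mapsto \mngzero{\si_{b}}_{m}$) rather than marking form ($S \to \funP B$), so in your base case you are implicitly passing through the transpose --- but this is harmless and your equalities are correct.
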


\subsection{Graph games}%
\label{sec:games}

For readers unfamiliar with the theory of infinite games, we provide some of
the basic definitions here, referring to~\cite{grae:auto02} for a survey.

\begin{definition}%
\label{d:game}
A \emph{board game} is a tuple $\bbG = (G_{\eloi},G_{\abel},E,W)$ where
$G_{\eloi}$ and $G_{\abel}$ are disjoint sets, and, with $G \isdef G_{\eloi}
\cup G_{\abel}$ denoting the \emph{board} of the game, the binary relation
$E \subseteq G^2$ encodes the moves that are admissible to the respective
players, and $W\subseteq G^\omega$ denotes the \emph{winning condition}
of the game.
In a \emph{parity game}, the winning condition is determined by a parity map
$\Om: G \to \om$ with finite range, in the sense that the set
$W_{\Om}$ is given as the set of $G$-streams $\rho \in G^{\om}$
such that the maximum value occurring infinitely often in the stream
${(\Om\rho_{i})}_{i\in\om}$ is even.

Elements of $G_{\eloi}$ and $G_{\abel}$ are called \emph{positions} for the
players $\eloi$ and $\abel$, respectively; given a position $p$ for player
$\Pi \in \{ \eloi, \abel\}$, the set $E[p]$ denotes the set of \emph{moves}
that are \emph{legitimate} or \emph{admissible to} $\Pi$ at $p$.
In case $E[p] = \nada$ we say that player $\Pi$ \emph{gets stuck} at $p$.

An \emph{initialized board game} is a pair consisting of a board game $\bbG$
and a \emph{initial} position $p$, usually denoted as $\bbG@p$.
\end{definition}

\begin{definition}%
\label{d:match}
A \emph{match} of a graph game $\bbG = (G_{\eloi},G_{\abel},E,W)$ is
nothing but a (finite or infinite) path through the graph $(G,E)$.
Such a match $\rho$ is called \emph{partial} if it is finite and $E[\last\rho]
\neq\nada$, and \emph{full} otherwise.
We let $\PM{\Pi}$ denote the collection of partial matches $\rho$ ending in a
position $\last(\rho) \in G_{\Pi}$, and define $\PM{\Pi}@p$ as the set of
partial matches in $\PM{\Pi}$ starting at position $p$.

The \emph{winner} of a full match $\rho$ is determined as follows.
If $\rho$ is finite, then by definition one of the two players got stuck at
the position $\last(\rho)$, and so this player looses $\rho$, while the opponent
wins.
If $\rho$ is infinite, we declare its winner to be $\eloi$ if $\rho
\in W$, and $\abel$ otherwise.
\end{definition}

\begin{definition}
A \emph{strategy} for a player $\Pi \in \{ \eloi,\abel \}$ is a map $\chi:
\PM{\Pi} \to G$.
A strategy is \emph{positional} if it only depends on the last position of a
partial match, i.e., if $\chi(\rho) = \chi(\rho')$  whenever $\last(\rho) =
\last(\rho')$; such a strategy can and will be presented as a map $\chi:
G_{\Pi} \to G$.

A match $\rho = {(p_{i})}_{i<\kappa}$ is \emph{guided} by a $\Pi$-strategy
$\chi$ if $\chi(p_{0}p_{1}\ldots p_{n-1}) = p_{n}$ for all $n<\kappa$
such that $p_{0}\ldots p_{n-1}\in \PM{\Pi}$ (that is, $p_{n-1}\in G_{\Pi}$).
Given a strategy $f$, we say that a position $p$ is $f$-reachable if $p$
occurs on some $f$-guided partial match.
A $\Pi$-strategy $\chi$ is \emph{legitimate} in $\bbG@p$ if the moves that it
prescribes to $\chi$-guided partial matches in $\PM{\Pi}@p$ are always
admissible to $\Pi$, and \emph{winning for $\Pi$} in $\bbG@p$ if in addition
all $\chi$-guided full matches starting at $p$ are won by $\Pi$.

A position $p$ is a \emph{winning position} for player $\Pi \in \{ \eloi, \abel
\}$ if $\Pi$ has a winning strategy in the game $\bbG@p$; the set of these
positions is denoted as $\Win_{\Pi}$.
The game $\bbG = (G_{\eloi},G_{\abel},E,W)$ is \emph{determined} if every
position is winning for either $\eloi$ or $\abel$.
\end{definition}

When defining a strategy $\chi$ for one of the players in a board game, we
will often confine ourselves to defining $\chi$ for partial matches
that are themselves guided by $\chi$.
The following fact, independently due to Emerson \& Jutla~\cite{emer:tree91}
and Mostowski~\cite{most:game91}, will be quite useful to us.

\begin{myfact}[Positional Determinacy]%
\label{f:pdpg}
Let $\bbG = (G_{\eloi},G_{\abel},E,W)$ be a graph game.
If $W$ is given by a parity condition, then $\bbG$ is determined, and both
players have positional winning strategies.
\end{myfact}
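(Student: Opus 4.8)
The plan is to run the classical proof via attractor constructions, organised as an induction on the number of priorities (rather than on the size of the board, which is what keeps it valid for infinite $\bbG$). Two preliminaries are needed. First, we may eliminate stuck positions: adjoin two fresh positions with self-loops, one carrying an even and one an odd priority, and redirect every position at which a player gets stuck to the sink favouring the \emph{opponent}; this preserves $\Win_\eloi$, $\Win_\abel$ and positionality, so from now on $E[p]\neq\nada$ for all $p$. Second, the attractor lemma: for a player $\Pi$ and a target $U\sse G$ define $\mathsf{Attr}_\Pi(U)$ by transfinite recursion --- start from $U$; at a successor stage add every $\Pi$-position with some successor already in, and every $\ol\Pi$-position all of whose successors are already in; take unions at limits --- until it stabilises. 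Then $\Pi$ has a positional strategy on $\mathsf{Attr}_\Pi(U)$ driving the rank strictly down and hence forcing a visit to $U$, while on the complement $\ol\Pi$ has a positional strategy keeping the play there forever; in particular $G\setminus\mathsf{Attr}_\Pi(U)$ is total, and $\Pi$ cannot force the play out of it.

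Now I would prove, by induction on $n=|\Ran(\Om)|$, that $G$ partitions as $\Win_\eloi\sqcup\Win_\abel$ with positional winning strategies on each side --- which is exactly the claim. If $n\le 1$ there is a single priority and (after the reduction) every full match is infinite, so one player wins from every position outright. For $n\ge 2$ let $d=\max\Ran(\Om)$ and assume $d$ is even (the odd case being symmetric under interchanging $\eloi$ and $\abel$), so that $\eloi$ profits from priority $d$. Let $W$ be the union of all $X\sse G$ that are \emph{$\eloi$-traps} --- sets the play cannot be forced out of by $\eloi$ --- on which $\abel$ has a positional strategy winning everywhere in $\bbG\restr{X}$. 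This family is closed under unions: along a match consistent with the strategy ``use the strategy of the least-indexed trap currently containing the play'', the index is non-increasing (a trap is never left against $\abel$'s will, and $\abel$'s moves stay inside the chosen trap), hence eventually constant, and the tail is then a winning $\abel$-match inside a single trap. So $W$ is itself such a trap, and $\abel$ wins from every position of $W$ in $\bbG$. One checks $W=\mathsf{Attr}_\abel(W)$ --- a position whose successors meet $W$ in the pattern required to enter the attractor could be added to $W$ without destroying the trap or the strategy, contradicting maximality --- so $U:=G\setminus W$ is an $\abel$-trap and $\bbG\restr{U}$ is total.

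Inside $\bbG\restr{U}$ let $N_d$ be the set of priority-$d$ positions, let $B$ be the $\eloi$-attractor of $N_d$ computed within $\bbG\restr{U}$, and put $U':=U\setminus B$. Then $\bbG\restr{U'}$ is a total game whose priorities lie in $\Ran(\Om)\setminus\{d\}$, so the induction hypothesis applies to it. I claim $\abel$ wins nothing there: if $P$ were $\abel$'s winning region in $\bbG\restr{U'}$, a short chase of the trap conditions shows $P\cup W$ is again an $\eloi$-trap in $\bbG$ carrying a positional $\abel$-winning strategy (use $\abel$'s $\bbG\restr{U'}$-strategy on $P$ and the old one on $W$: any infinite consistent match either enters $W$ and then stays there, or remains in $P$ and is a winning $\bbG\restr{U'}$-match), contradicting the maximality of $W$; hence $P=\nada$ and $\eloi$ has a positional strategy $\tau$ winning all of $\bbG\restr{U'}$. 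Finally glue: on $B$ let $\eloi$ play the attractor strategy towards $N_d$, on $U'$ let $\eloi$ play $\tau$. An infinite match consistent with this either eventually stays in $U'$, and is then a $\tau$-match won by $\eloi$, or re-enters $B$ infinitely often; but each sojourn in $B$ is forced through $N_d$ (the attractor strategy drives the rank down, and $\abel$ cannot leave $B$ except via $N_d$), so $N_d$ is visited infinitely often and, $d$ being even and maximal, $\eloi$ wins. Since $U$ is an $\abel$-trap, this strategy also wins from $U$ inside $\bbG$; together with $\abel$'s positional strategy on $W$ this yields the partition, completing the induction.

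The two delicate points --- and the places where I expect to spend the real effort --- are the well-definedness of $W$, whose union-of-strategies bookkeeping genuinely relies on the trap property so that the losing player is never forced to switch strategies, and the correctness of $\eloi$'s glued strategy: one must rule out a match that is dragged in and out of $B$ infinitely often \emph{without} ever touching $N_d$, which is precisely where it matters that $B$ is an \emph{attractor} of the priority-$d$ positions and not just some set containing them. Over an infinite board the only extra price is that the attractors and the set $W$ are built by transfinite recursion, which introduces no new idea. An alternative is to go via parity progress measures, which hands one an explicit positional $\eloi$-strategy directly, but proving that $\eloi$'s non-winning region is positionally won by $\abel$ needs the same determinacy content, so I would keep to the Zielonka-style argument above.
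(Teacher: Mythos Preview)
The paper does not prove this statement at all: Fact~\ref{f:pdpg} is quoted as a background result, attributed to Emerson \& Jutla and to Mostowski, and is used without proof. Your sketch is the standard Zielonka attractor argument and is essentially correct, so you have supplied considerably more than the paper does; but since the paper treats positional determinacy as an imported fact, there is no ``paper's own proof'' to compare against.

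One small remark on your write-up: when you describe $\eloi$'s glued strategy on $U$, you specify the attractor strategy on $B$ and $\tau$ on $U'$, but you do not say what $\eloi$ does from positions in $N_d$ itself (the attractor strategy is only defined on $B\setminus N_d$). The usual fix is trivial --- from $N_d$ let $\eloi$ play any legal move inside $U$, which is possible since $U$ is an $\abel$-trap --- but it is worth stating explicitly, since your case analysis (``each sojourn in $B$ is forced through $N_d$'') depends on knowing where the play goes immediately after $N_d$. With that patched, the argument is sound.
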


\subsection{Automata}%
\label{ss:aut}

Given a state $s$ in a coalgebra $\sigma : S \to \funT S$, consider
the one-step frame $(S,\sigma(s))$ --- this is the local ``window'' into the
structure of the coalgebra that is directly visible from $s$.
The main function of one-step models  for our purposes here is to provide a
neat framework for \emph{automata} running on $\funT$-models.
The idea is the following: at any stage in the run of an automaton $\bbA$ on
a model $\bbS = (S,\si,V)$, the automaton reads some point $s$ in $\bbS$ and
takes the set $V^{\flat}(s) \sse \Prop$, consisting of those propositional
variables that are true at $s$, as input.
Next the automaton decides whether to continue the computation or to reject.
To decide this, the automaton checks the directly visible part of the model
$\bbS$ at $s$, modelled as the one-step frame $\sigma(s)$, and looks for an
admissible way to continue the computation one step further.
The ways in which the run may continue are constrained by a one-step formula
$\alpha$, which depends on the current state $a$ of the automaton and the last
input $V^{\flat}(s)$ read, and is built up using states of the automaton as
variables.
The run continues if a marking $m$ can be found that makes $\alpha$ true in
the one-step model $(S,\sigma(s),m)$, and the marking $m$ then determines which
states in $\bbS$ may be visited in the next stage of the run, and the possible
next states of the automaton.

\begin{definition}
A \emph{$(\La,\Prop)$-automaton}, or more broadly, a \emph{coalgebra automaton},
is a quadruple $(A,\Th,\Om,\ai)$ where $A$ is a finite set of \emph{states},
with \emph{initial state} $\ai \in A$,
$\Th : A \times \funP\Prop \to \MLoLa^{+}(A)$ is the \emph{transition map}
and $\Om : A \to \omega$ is the \emph{priority map} of $\bbA$.
\end{definition}
%
The  semantics of such an automaton is given in terms of a two-player infinite parity game:
With $\bbS = (S,\si,V)$ a $\funT$-model over a set $\PropQ \supseteq \Prop$,
the \emph{acceptance game} $\AG(\bbA,\bbS)$ is the parity game given by the
table below.

\begin{center}
\begin{tabular}{|l|c|l|c|} 
\hline 
    Position             & Player
    &  Admissible moves  & Priority
\\ \hline 
   $(a,s)\in A\times S$
   & $\eloi$
   & $\{m : S \to \funP A \mid (S,\si(s),m) \satone \Th(a,\Prop\cap V^{\flat}(s)) $\}
   & $\Om(a)$
\\ $m: S \to \funP A$
   & $\abel$
   & $\{(b,t)\mid b \in m(t) \}$
   & 0
\\ \hline 
\end{tabular}
\end{center}

We say that $\bbA$ \emph{accepts} the pointed $\funT$-model $(\bbS,s)$,
notation: $\bbS,s \sat \bbA$, if $(\ai,s)$ is a winning position for $\eloi$
in the acceptance game $\AG(\bbA,\bbS)$.

The connection with coalgebraic modal logic is given by the following
result.

\begin{myfactC}[\cite{font:auto10}]
There are effective constructions transforming a formula in $\muMLLa(\Prop)$
into an equivalent $(\La,\Prop)$-automaton, and vice versa.
\end{myfactC}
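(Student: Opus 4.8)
The plan is to prove the two directions separately, following the standard automata-theoretic translation adapted to the coalgebraic setting with predicate liftings. Both directions proceed by structural induction, and the key technical tool is the one-step logic framework just developed, together with the game-theoretic semantics of automata.

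\textbf{From formulas to automata.} First I would handle the translation of a $\muMLLa(\Prop)$-formula $\phi$ into an equivalent $(\La,\Prop)$-automaton. The standard approach is to first bring $\phi$ into a suitable normal form: guarded (every fixpoint variable occurs in the scope of a modality) and with the fixpoint variables made distinct. Then one reads off an automaton whose states are (roughly) the subformulas of $\phi$, with the initial state being $\phi$ itself. The priority map is determined by the alternation depth of the fixpoint variables, assigning even priorities to $\mu$-variables and odd priorities to $\nu$-variables in a way that respects the nesting of fixpoints (the Niwi\'nski-style priority assignment). The transition map $\Th(a, c)$ for a state $a$ and a subset $c \sse \Prop$ of propositional letters is obtained by ``unfolding'' the formula $a$ one modal step: propositional letters in $c$ become $\top$, those not in $c$ become $\bot$, Boolean connectives are preserved, a modality $\hs_\la(\psi_1,\dots,\psi_n)$ becomes $\hs_\la(\psi_1,\dots,\psi_n)$ viewed as a one-step formula with the $\psi_i$ as state-variables, and a fixpoint formula $\mu q.\psi$ (or the bound variable $q$) is replaced by its unfolding. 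The monotonicity proviso on fixpoints ensures the resulting one-step formulas land in the positive fragment $\MLoLa^+(A)$, as required by the definition of automaton. Correctness --- that $\bbS,s \sat \phi$ iff $\bbS,s \sat \bbA_\phi$ --- is proved by relating winning strategies in the acceptance game $\AG(\bbA_\phi,\bbS)$ to the evaluation of $\phi$ via the standard unfolding/fixpoint game for $\muMLLa$, using positional determinacy of parity games (Fact~\ref{f:pdpg}) and the substitution lemma (Proposition~\ref{p:1subst}) to move between the one-step semantics of $\Th(a,c)$ and the genuine semantics of subformulas.

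\textbf{From automata to formulas.} Conversely, given a $(\La,\Prop)$-automaton $\bbA = (A,\Th,\Om,\ai)$, I would construct an equivalent formula by the Arnold-Niwi\'nski style ``vectorial fixpoint'' method: associate to each state $a \in A$ a formula $\phi_a$, defined by a system of (possibly nested) mutual fixpoint equations $\phi_a = \Th(a, -)$ with the propositional letters in $\Prop$ re-inserted as a big case distinction $\bigvee_{c \sse \Prop}\big(\bigwedge_{p \in c} p \wedge \bigwedge_{p \notin c}\neg p \wedge \widehat{\Th(a,c)}\big)$, where $\widehat{\Th(a,c)}$ is the one-step formula $\Th(a,c)$ reinterpreted as a modal formula with its state-variables now ranging over the $\phi_b$. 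One solves this system by iterated $\mu$/$\nu$-abstraction, ordering the variables according to the priority map $\Om$ so that states with the highest priority are abstracted outermost, using $\mu$ for even and $\nu$ for odd priorities (or the reverse, depending on conventions --- one must check the parity convention in Definition~\ref{d:game}). The desired formula is $\phi_{\ai}$. Again correctness is a game-comparison argument, matching runs of the automaton with the fixpoint-unfolding game of the resulting formula.

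\textbf{Main obstacle.} The routine-but-delicate part is getting the priority/parity bookkeeping exactly right: matching the parity condition of the acceptance game (max priority seen infinitely often is even) with the alternation of $\mu$ and $\nu$ in the vectorial solution, and conversely assigning priorities to subformula-states that correctly encode the fixpoint nesting. I expect this, rather than any conceptual difficulty, to be where the real care is needed; the semantic equivalence proofs are then a matter of translating winning strategies back and forth, which is standard once the syntactic translations are pinned down. Since the statement cites \cite{font:auto10} for the general result, I would in fact mostly appeal to that reference, sketching only enough of the construction to make the paper self-contained and to fix the version of automaton (with transition map valued in $\MLoLa^+(A)$ and input alphabet $\funP\Prop$) that the rest of the paper uses.
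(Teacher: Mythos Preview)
The paper does not prove this statement at all: it is recorded as a \emph{Fact} with a citation to \cite{font:auto10} and nothing more. Your final paragraph correctly anticipates this --- the paper simply appeals to the reference and moves on, so there is no ``paper's own proof'' to compare against.

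Your sketch of the two constructions is a reasonable outline of the standard argument and would be appropriate if one wanted to make the paper self-contained. One small slip worth flagging: under the parity convention fixed in Definition~\ref{d:game} ($\eloi$ wins if the maximum priority seen infinitely often is \emph{even}), least fixpoints $\mu$ should receive \emph{odd} priorities and greatest fixpoints $\nu$ \emph{even} ones, the opposite of what you wrote first; you hedge on this later, but it is worth getting right from the start since the rest of the paper relies on exactly this convention.
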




\section{Disjunctive formulas and disjunctive bases}%
\label{s:dis}
\subsection{Disjunctive formulas}
In this section, we present the main conceptual contribution of the paper, and
define disjunctive bases. We then consider a number of examples.

As a first step, we begin by presenting disjunctive \emph{formulas}, originally
introduced in~\cite{enqv:comp17}, as a class of one-step formulas for a given
modal signature characterized by a model-theoretic property, expressed in terms
of the one-step semantics.

\begin{definition}%
\label{d:dj}%
\label{d:disfma}
A one-step formula $\al \in \MLoLa^{+}(A)$ is called \emph{disjunctive}
if for every one-step model $(S,\si,m)$ such that $S,\si,m \satone \al$ there is
a one-step frame  $ (S',\si')$ together with a one-step frame morphism $f : (S',\si') \to (S,\si)$ and a marking $m': S'\to
\funP A$, such that:
\begin{enumerate}
    \item $S',\si',m' \satone \al$;
    \item $m'(s') \sse m(f(s'))$, for all $s' \in S'$;
    \item $\sz{m'(s')} \leq 1$, for all $s' \in S'$.
\end{enumerate}
We sometimes refer to the one-step frame $(S',\si')$ together with the map $f$
as a \emph{cover} of $(S,\si)$, and to the one-step model $(S',\si',m')$ together
with the  map $f$ as a \emph{dividing cover} of $(S,\si,m)$ for $\al$.
\end{definition}

The intuition behind disjunctive formulas is that, in a certain sense, they
never ``force'' two distinct propositional variables to be true together, i.e.
any one-step model for a disjunctive formula $\delta$ in $\MLoLa^{+}(A)$ can be
transformed into one in which every point satisfies at most one propositional
variable from $A$.
Moreover, ``transformed into'' here does not just mean ``replaced by'': we
cannot arbitrarily change the one-step model, the output of the construction
must be closely related to the one-step model that we started with.

A trivial example of a disjunctive formula is $\nxt{} a$ for $a \in A$, where
we recall that $\nxt{}$ was the next-time modality viewed as a predicate
lifting for the identity functor $\mathsf{Id}$.
A one-step model for this functor is a triple $S,s,m$ consisting of a set $S$,
an element $s \in S$ and a marking $m : S \to \funP A$. Then $S,s,m \satone
\nxt{} a$ if, and only if, $a \in m(s)$.
But then, no elements in $S$ besides $s$ are relevant to the evaluation of
$\nxt{}a$, and for $s$ we can just forget about all other variables: set
$m'(s) =  \{a\}$ and $m'(v) = \nada$ for all $s \in S\setminus\{v\}$.
We have $S,s,m' \satone \nxt{} a$, $m'(v) \subseteq m(v)$ and $\sz{m'(v)}
\leq 1$ for all $v \in S$.

For an example of a one-step formula that is \emph{not} disjunctive, consider
$\Box a \land \dia b$ (where $\dia$ and $\Box$ are the standard modalities for
Kripke structures, i.e., coalgebras for the power set functor $\funP$).
Observe that a one-step model for this functor is a triple $(S,\si,m)$ with
$\si \sse S$.
It should be obvious that for the formula $\Box a \land \dia b$ to hold at such
a structure, $\si$ needs to have an element $s$ where $b$ holds, while at the
same time \emph{every} element of $\si$, including $s$, must satisfy $a$.
There is no escape here: we can only have $(S,\si,m) \satone \Box a \land \dia b$
if there is an element $s$ making \emph{both} $a$ and $b$ true.

This is very different if we consider
the typical disjunctive formulas for basic modal logic, which are of the
form:
\[
    \dia a_1 \wedge \dots \wedge \dia a_n \wedge \Box (a_1 \vee \dots \vee a_n)
\]
This is often abbreviated as $\nabla \{a_1,\dots,a_n\}$.
The operator $\nabla$ is known as the \emph{cover modality}, or sometimes
``nabla modality''.
The formula $\nabla B$ says about a one-step model $(S,\si,m)$ for $\funP$,
that the following ``back-and-forth'' conditions hold: for all $s \in \si$
there is some $a \in B$ with $a \in m(s)$, and conversely, for every $a \in B$
there is some $s \in \si$ with $a \in m(s)$.
These formulas are indeed disjunctive, but less trivially so than the next-time
formulas.
For example, consider the formula $\nabla \{a,b,c\}$, which is true in a
one-step model $(S,s,m)$ with $S = \{u,v\}$, $m(u) = \{a,b\}$ and $m(v) =
\{b,c\}$.
But there is no way to simply shrink the marking $m$ to a dividing marking
$m'$ so that $(S,s,m) \satone \nabla \{a,b,c\}$:
there are too many pigeons and too few pigeon holes --- so the obvious solution
is to make more pigeon holes!
One way to do this is to ``split'' the points in $\{ u,v \}$
so that we make room for each variable to be witnessed at a separate point.
More formally, let $S' = \{ (u,a), (u,b), (v,b), (v,c) \}$ and define  $m' : S'
\to \{a,b,c\}$ via the projection on $\{ a,b,c \}$, i.e., $m'(u,a) = \{a\}$,
etc.
This one-step model satisfies $\nabla \{a,b,c\}$ and it has the obvious covering
map $h$ being the projection $\pi_{S}$ on $S$.

\begin{figure}
\begin{center}
\[
\xymatrixrowsep{0.5pt}\xymatrix{
   \{a\} & (u,a)  \ar@{.>}^{\pi_{S}}[rrd]& & &
\\ & & & u & \{a,b\} \\
   \{b\}& (u,b) \ar@{.>}_{\pi_{S}}[rru]& & &
\\ \{b\} & (v,b)  \ar@{.>}^{\pi_{S}}[rrd]& & &
\\ & & & v & \{b,c\} \\
 \{c\}& (v,c) \ar@{.>}_{\pi_{S}}[rru]& & &
}
\]
\caption{The cover $h : (S',m') \to (S,m)$}\label{f:simplecover}
\end{center}
\end{figure}

The cover modality was arguably the starting point of coalgebraic modal logic.
In the seminal paper~\cite{moss:coal99}, Moss defined nabla modalities for all
functors that preserve weak pullback squares, generalizing the cover modality.
The idea is to apply the coalgebraic type functor $\funT$ to sets of formulas
$\Psi$ and form new formulas from objects $\Gamma$ in $\funT \Psi$.
That is,  $\nabla \Gamma$ counts as a formula if $\Psi$ is a set of formulas
and $\Gamma \in \funT \Psi$.
The semantics is given in terms of the ``Barr extension'' $\overline{\funT}$ of
the functor $\funT$, which is a relation lifting defined for $R \subseteq X
\times A$ by setting:
\[
\overline{\funT}R = \{(\xi,\al) \in \funT X \times \funT A \mid
  \exists \rho \in \funT R: \; \funT\pi_X(\rho) = \xi \; \& \;
  \funT\pi_A(\rho) = \al \}
\]
We then evaluate the nabla modality by applying this relation lifting to the
satisfaction relation:
\[
\bbS,s \sat \nabla \Gamma \text{ iff }
(\sigma(s),\Gamma) \in {\overline{\funT}({\sat})}.
\]

In terms of one-step formulas, we would count $\nabla \Gamma$ as a one-step
formula with variables in $A$, for all $\Gamma \in \funT \Bool(A)$.
In particular, formulas of the form $\nabla \Gamma$ for $\Gamma \in \funT A$
count as one-step formulas.

These formulas are in fact disjunctive:
\begin{prop}
Let $\funT$ be a set functor that preserves weak pullbacks.
Then every formula of the form $\nb\Ga$, where $\Ga \in \funT A$, is disjunctive.
\end{prop}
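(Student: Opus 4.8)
The plan is to unwind the one-step semantics of the nabla modality and read a dividing cover directly off the witness supplied by the Barr extension. Suppose $(S,\si,m)$ is a one-step model over $A$ with $S,\si,m \satone \nabla\Ga$. By the one-step interpretation of $\nabla$, this says precisely that $(\si,\Ga) \in \ol{\funT}(R_m)$, where $R_m \isdef \{(s,a) \in S \times A \mid a \in m(s)\}$ is the ``membership relation'' of the marking $m$, with projections $\pi_S \colon R_m \to S$ and $\pi_A \colon R_m \to A$. Spelling out the definition of $\ol{\funT}$, there is some $\rho \in \funT R_m$ with $\funT\pi_S(\rho) = \si$ and $\funT\pi_A(\rho) = \Ga$. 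This $\rho$ is the object that carries the whole construction.

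I would then take the cover of $(S,\si)$ to be $(S',\si') \isdef (R_m,\rho)$ with covering map $f \isdef \pi_S \colon S' \to S$ — which is a morphism of one-step frames exactly because $\funT\pi_S(\rho) = \si$ — and define the marking $m' \colon S' \to \funP A$ by $m'(s,a) \isdef \{a\}$. Conditions (2) and (3) of Definition~\ref{d:disfma} are then immediate: $\sz{m'(s,a)} = 1 \le 1$, and $m'(s,a) = \{a\} \sse m(s) = m(f(s,a))$ since $(s,a) \in R_m$ means $a \in m(s)$. So the only thing left to verify is condition (1), i.e.\ that $S',\si',m' \satone \nabla\Ga$.

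This last step is where the (small amount of) genuine work lies. By the one-step semantics again, condition (1) amounts to $(\rho,\Ga) \in \ol{\funT}(R_{m'})$, where $R_{m'} \isdef \{(s',b) \in S' \times A \mid b \in m'(s')\}$. Because $m'(s,a) = \{a\}$, the first projection $p_1 \colon R_{m'} \to S'$ is a bijection, and writing $p_2 \colon R_{m'} \to A$ for the second projection we have $p_2 \circ p_1^{-1} = \pi_A$, so $R_{m'}$ is the graph of the function $\pi_A \colon S' \to A$. Now for any set functor, the Barr extension of the graph of a function $g \colon X \to Y$ equals the graph of $\funT g \colon \funT X \to \funT Y$: this follows straight from the definition of $\ol{\funT}$, using only that $\funT$ sends the bijection $p_1 \colon \Gr(g) \to X$ to a bijection. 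Applying this with $g = \pi_A$ gives that $\ol{\funT}(R_{m'})$ is the graph of $\funT\pi_A$, and since $\funT\pi_A(\rho) = \Ga$ by the choice of $\rho$, condition (1) holds. I expect this graph-of-a-function computation to be the one load-bearing lemma of the argument — weak pullback preservation, by contrast, enters only as the standing hypothesis under which $\nabla$-formulas are defined in the first place, and does not really do any work in this particular proof; everything else is bookkeeping.
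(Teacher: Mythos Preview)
Your proof is correct and follows essentially the same construction as the paper: take the witness $\rho \in \funT R_m$ supplied by the Barr extension, and use $(R_m,\rho)$ with the singleton marking $m'(s,a)=\{a\}$ (which is the paper's $\eta\circ\pi_A$) as the dividing cover. The paper leaves the verification of condition~(1) as a ``simple exercise'', whereas you spell it out via the observation that $\ol{\funT}$ of the graph of a function is the graph of its $\funT$-image; this is a clean way to discharge that step, and your side remark that weak pullback preservation is not actually invoked in the argument is accurate.
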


To see this, suppose that a one-step model $X,\xi,m$ satisfies the formula
$\nabla \Gamma$.
This means that $(\xi,\Gamma) \in \ol{\funT}R$, where $R = {\satzero}
\cap (X \times A)$ is defined by $(u,a) \in R$ iff $a \in m(u)$.
By definition of the Barr extension, this means we find some $\rho \in \funT R$
such that $\funT \pi_A(\rho) = \Gamma$ and $\funT\pi_X (\rho) = \xi$.
But then we have our covering one-step model right in front of us already: the
triple $(R,\rho,\eta \circ \pi_A)$ is a one-step model where $\eta : A \to
\funP A$ is the singleton map $a \mapsto \{a\}$, i.e., the unit of the powerset
monad.
The covering map is just the projection $\pi_X : R \to X$.
It is a simple exercise to check that $(R,\rho,\eta \circ \pi_A) \satone
\nabla \Gamma$, and clearly $\sz{\eta (\pi_A(u,a))} = 1$ and $\eta (\pi_A(u,a))
\subseteq m(u)$ for all $(u,a) \in R$.
So we see that disjunctiveness of the nabla modalities is not incidental: it is
hardwired into their semantics.

\subsection{Disjunctive bases}

We now turn to a discussion of the key notion of this paper, viz., the
disjunctive bases originating with~\cite{enqv:comp17}.
For their definition, we recall the substitutions $\wedge_{A}: B \mapsto \bw B$
and $\theta_{A,B}: (a,b) \mapsto a \land b$ defined in the Preliminaries.
It will also be convenient to introduce the abbrevation
\[
    A \utimes B : = (A \times B) \cup A \cup B
\]
for any two sets $A,B$.

\begin{definition}%
\label{d:disbas}
Let $\D$ be an assignment of a set of positive one-step formulas
$\D(A) \subseteq \MLoLa^{+}(A)$ for all sets of variables $A$.
Then $\D$ is called a \emph{disjunctive basis} for $\La$ if each
formula in $\D(A)$ is disjunctive, and the following conditions
hold:
\begin{enumerate}
    \item $\D(A)$ contains $\top$ and is closed under finite disjunctions (in particular, also $\bot = \bv\nada \in \D(A)$).
    \item $\D$ is \emph{distributive over $\La$}:  for every one-step formula of the form $\hs_{\la}\ol{\pi}$ in $ \MLoLa^{+}(A)$ there is a  formula $\delta \in \D(\psf(A))$ such that $\hs_{\la} \ol{\pi} \oneseq \delta[\land_A]$.
    \item $\D$ \emph{admits a binary distributive law}: for any two formulas $\al \in \D(A)$ and $\beta \in \D(B)$, there is a formula $\gamma \in \D(A \utimes B )$ such that $\al \wedge \beta \oneseq \gamma[\theta_{A,B}]$.
\end{enumerate}
%
\end{definition}

The key property of disjunctive bases is captured by the following normal form
theorem, which is easy to derive from the definition.

\begin{prop}%
\label{p:db}
Suppose $\D$ is a disjunctive basis for $\La$.
Then for every one-step formula $\alpha \in  \MLoLa^{+}(A)$ there is a formula
$\delta \in \D(\funP(A))$ such that $\alpha \oneseq \delta[\land_A]$.
\end{prop}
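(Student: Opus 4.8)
The plan is to prove the proposition by structural induction on $\alpha \in \MLoLa^{+}(A)$, keeping at each step exactly the invariant asked for: a formula $\delta \in \D(\funP(A))$ with $\alpha \oneseq \delta[\land_A]$. Three of the five clauses of the grammar are dispatched by condition~(1) of Definition~\ref{d:disbas}. For $\alpha = \top$ I would take $\delta := \top \in \D(\funP(A))$, noting $\top[\land_A] = \top$; for $\alpha = \bot$, $\delta := \bigvee \nada = \bot \in \D(\funP(A))$; and for $\alpha = \alpha_1 \vee \alpha_2$ the induction hypothesis supplies $\delta_1, \delta_2 \in \D(\funP(A))$ with $\alpha_i \oneseq \delta_i[\land_A]$, so $\delta := \delta_1 \vee \delta_2$ is again in $\D(\funP(A))$ by closure under finite disjunctions, and $\delta[\land_A] = \delta_1[\land_A] \vee \delta_2[\land_A] \oneseq \alpha$ since substitution commutes with $\vee$. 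The modal clause $\alpha = \hs_{\la}\ol{\pi}$ is exactly condition~(2): recalling that in $\MLoLa^{+}(A)$ the arguments $\ol{\pi}$ of $\hs_{\la}$ range over tuples from $\Latt(A)$, distributivity of $\D$ over $\La$ directly yields $\delta \in \D(\funP(A))$ with $\hs_{\la}\ol{\pi} \oneseq \delta[\land_A]$.

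The one remaining clause, $\alpha = \alpha_1 \wedge \alpha_2$, is where the work is, because a disjunctive basis is by design \emph{not} closed under conjunction. By the induction hypothesis I get $\delta_1, \delta_2 \in \D(\funP(A))$ with $\alpha_i \oneseq \delta_i[\land_A]$, and then apply condition~(3) --- the binary distributive law --- with both of its parameter sets taken to be $\funP(A)$, obtaining $\gamma \in \D(\funP(A) \utimes \funP(A))$ with $\delta_1 \wedge \delta_2 \oneseq \gamma[\theta_{\funP(A),\funP(A)}]$, where $\theta_{\funP(A),\funP(A)}$ is understood to act as the identity on the two $\funP(A)$-summands of $\funP(A) \utimes \funP(A)$. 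Substituting $\land_A$ on both sides and using that substitution commutes with $\wedge$ and preserves one-step equivalence (a consequence of Proposition~\ref{p:1subst}), this gives $\alpha_1 \wedge \alpha_2 \oneseq \delta_1[\land_A] \wedge \delta_2[\land_A] = (\delta_1 \wedge \delta_2)[\land_A] \oneseq \gamma[\theta_{\funP(A),\funP(A)}][\land_A]$. The hard part is now to present this last formula as $\delta[\land_A]$ for one $\delta \in \D(\funP(A))$, the snag being that $\gamma$ lives over $\funP(A) \utimes \funP(A)$, not $\funP(A)$. The crucial observation is that the substitution obtained by composing $\theta_{\funP(A),\funP(A)}$ with $\land_A$ already factors, up to equivalence, through $\land_A$: on a pair variable it sends $(B_1,B_2)$ to $(\bigwedge B_1) \wedge (\bigwedge B_2)$, which is equivalent to $\bigwedge(B_1 \cup B_2) = \land_A(B_1 \cup B_2)$, while on a summand variable $B$ it is simply $\land_A(B)$. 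So, writing $u : \funP(A) \utimes \funP(A) \to \funP(A)$ for the reindexing with $u(B_1,B_2) := B_1 \cup B_2$ and $u(B) := B$ on the summands, and setting $\delta := \gamma[u]$, I would check $\delta[\land_A] = \gamma[u][\land_A] \oneseq \gamma[\theta_{\funP(A),\funP(A)}][\land_A] \oneseq \alpha_1 \wedge \alpha_2$, again using that substituting equivalent formulas for variables preserves equivalence.

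The one point that is not bookkeeping, and which I expect to be the main obstacle, is justifying that $\delta = \gamma[u]$ still belongs to $\D(\funP(A))$ --- i.e.\ that the basis is closed under reindexing of variables. I would first verify the analogous statement for the class of \emph{all} disjunctive positive one-step formulas: given a reindexing $v : B \to B'$ and a disjunctive $\beta \in \MLoLa^{+}(B)$, one transports a dividing cover of a one-step model for $\beta[v]$ back along $v$ to a dividing cover for $\beta$, then pushes the marking forward along $v$, the inequalities $m'(s') \subseteq m(f(s'))$ and the singleton bounds surviving, and the truth conditions being handled by Proposition~\ref{p:1subst} together with monotonicity of positive formulas. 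Granting this, $\gamma[u]$ is disjunctive; and since a disjunctive basis inherits the same closure under reindexing --- or, if one prefers, since the reindexing-closure of a disjunctive basis is again a disjunctive basis --- we conclude $\gamma[u] \in \D(\funP(A))$. Everything else is routine: extending $\theta_{A,B}$ to $A \utimes B$ by the identity on the non-product summands, the substitution-composition identity $\phi[s][t] = \phi[t \circ s]$, and the fact that one-step equivalence is a congruence for the boolean connectives and for the monotone modalities $\hs_{\la}$.
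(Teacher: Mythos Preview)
Your structural induction is the natural argument and presumably the one the paper has in mind (the paper itself gives no proof beyond ``easy to derive from the definition''). The base cases and the $\vee$-case are routine; the modal atom case is condition~(2) verbatim; and for the $\wedge$-case you correctly invoke the binary distributive law with both parameter sets equal to $\funP A$, and reduce everything to the question whether the reindexed formula $\gamma[u]$ lies in $\D(\funP A)$. Your computation that $[\theta_{\funP A,\funP A}]$ followed by $[\land_{A}]$ agrees, up to equivalence, with $[\land_{A}]\circ u$ where $u(B_{1},B_{2}) = B_{1}\cup B_{2}$ is correct and is exactly the right observation.

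The argument breaks down at the final step, and you have essentially flagged this yourself. Neither of the two justifications you offer establishes $\gamma[u] \in \D(\funP A)$. The first --- that ``a disjunctive basis inherits the same closure under reindexing'' --- is precisely the statement at issue, and nothing in Definition~\ref{d:disbas} grants $\D$ any closure under variable-for-variable substitution. The second --- that ``the reindexing-closure of a disjunctive basis is again a disjunctive basis'' --- even if true, produces a \emph{larger} assignment $\D'$, and from $\gamma[u] \in \D'(\funP A)$ you cannot conclude $\gamma[u] \in \D(\funP A)$; the proposition is a claim about the given $\D$. Your sketch that \emph{disjunctiveness as a property of formulas} is preserved under reindexing is fine and can be made rigorous along the lines you indicate, but $\D(\funP A)$ is a distinguished proper subset of the disjunctive formulas over $\funP A$, and membership in it is strictly stronger than being disjunctive. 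So you have correctly located the only non-trivial point in the proof, but you have not closed the gap; as written the conclusion $\gamma[u]\in\D(\funP A)$ is unjustified.
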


Note that the requirement $\top \in \D(A)$ is needed for this proposition to
hold: we have $\top \in \MLoLa^{+}(A)$, so for the proposition to be true there
has to be a formula $\delta \in \D(\funP(A))$ such that $\top \oneseq
\delta[\land_A]$.
Since $\top \in \D(\funP(A))$ we can take $\delta = \top$ --- without having
$\top \in \D(\funP(A))$ there would be no guarantee in general that an
appropriate formula $\delta$ can be found.

We have seen that nabla formulas are disjunctive,  and we shall soon see that
they provide disjunctive bases for many modal signatures.
In order for the approach of constructing disjunctive bases via nabla formulas
to work however, it is necessary that the ambient functor $\funT$ preserves weak
pullbacks.
This is one sense in which disjunctive bases generalize nabla formulas: we shall
soon see an example of a functor which does not preserve weak pullbacks, but
does have a signature admitting a disjunctive basis.
However, we want to stress that the main motivation behind disjunctive bases is
not to ``get by without weak pullback preservation'' --- in fact, most natural
examples of disjunctive bases we can think of apply to functors that do
preserve weak pullbacks.
Rather, the point is that the existence of a disjunctive basis is a property of
\emph{modal signatures}, not of  functors.
The choice of a modal signature for a functor is not arbitrary, it can
substantially affect the properties of the modal languages that we get.
In particular, even for functors that do preserve weak pullbacks, we can have
natural choices of modal signatures for which the nabla formulas associated
with the functor do not provide a disjunctive basis.
Typically, this will happen when the modal signature is (a) infinite, and (b)
not expressively complete.
An example of this situation that will be given special attention is graded
modal logic, to be presented in the next subsection.
But there are much simpler cases, as we shall see.

\subsection{Examples}

In this subsection we present a number of examples of modal signatures admitting
disjunctive bases, and one example of a modal signature that provably does not
admit a disjunctive basis.

\subsubsection{Disjunctive bases for weak pullback preserving functors}
In the previous section we noted that disjunctive formulas generalize the Moss
modalities.
In many interesting cases this suffices to find a disjunctive basis.

\begin{prop}%
\label{p:wpp-db}
Let $\La$ be a signature for a weak-pullback preserving functor $\funT$.
If $\La$ is Lyndon complete, then it admits a disjunctive basis.
\end{prop}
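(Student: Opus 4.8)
The plan is to take as $\D$ (the finite-disjunction closure of) the translations into $\La$ of Moss's nabla formulas $\nb\Ga$ with $\Ga\in\funT A$, together with $\top$. The key preliminary observation is that such nabla formulas can always be ``brought into $\La$'': for a finite set $B$ of variables and $\Ga\in\funT B$, the one-step truth of $\nb\Ga$ in a one-step model $(S,\si,m)$ depends only on the tuple of $0$-step denotations $\big(\mngzero{b}_{m}\big)_{b\in B}$ — since the relation $\{(s,b)\mid b\in m(s)\}$ to which the Barr extension is applied is determined by this tuple — so $\nb\Ga$ is one-step equivalent to $\hs_{\la_{\Ga}}\ol{b}$ for a predicate lifting $\la_{\Ga}$ of arity $\sz{B}$, and $\la_{\Ga}$ is monotone because the Barr extension is monotone in its relation argument. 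Lyndon completeness of $\La$ then supplies a positive one-step formula $\al_{\Ga}\in\MLoLa^{+}(B)$ with $\al_{\Ga}\oneseq\nb\Ga$. I now set $\D(A)$ to be the closure under finite disjunction of $\{\top\}\cup\{\al_{\Ga}\mid\Ga\in\funT A\}$: every member is positive by construction; it is disjunctive because $\top$ is disjunctive (shrink the marking to the empty one), each $\nb\Ga$ — hence each $\al_{\Ga}$ — is disjunctive by the earlier proposition (which uses weak-pullback preservation), disjunctiveness is invariant under one-step equivalence, and a dividing cover for one disjunct is one for the whole disjunction. Thus item~(1) of Definition~\ref{d:disbas} holds outright.

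For items~(2) and~(3) I would invoke the two classical normal-form results of Moss logic for weak-pullback-preserving functors, together with the translation above. For item~(2): the \emph{disjunctive normal form theorem} says that every monotone one-step formula over $A$ — in particular $\hs_{\la}\ol{\pi}$ with $\la\in\La$ and $\ol{\pi}$ positive, which is monotone since $\la$ is — is one-step equivalent to a finite disjunction $\bigvee_{i}\nb\Ga_{i}$ in which each $\Ga_{i}\in\funT(\funP A)$ and an element $C\in\funP A$ occurring in $\Ga_{i}$ is read conjunctively as $\bw C$; regarding $\Ga_{i}$ as an element of $\funT$ of the variable set $\funP A$, this reading is exactly the substitution $(\nb\Ga_{i})[\wedge_{A}]$. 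Replacing each $\nb\Ga_{i}$ by its Lyndon translate $\al_{\Ga_{i}}\in\MLoLa^{+}(\funP A)$ and putting $\delta\isdef\bigvee_{i}\al_{\Ga_{i}}\in\D(\funP A)$, we get $\delta[\wedge_{A}]\oneseq\hs_{\la}\ol{\pi}$, using that substitution respects one-step equivalence (by Proposition~\ref{p:1subst}) and commutes with disjunction.

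For item~(3), let $\al=\bigvee_{j}\al_{\Ga_{j}}\in\D(A)$ and $\be=\bigvee_{k}\al_{\Ga'_{k}}\in\D(B)$; disjuncts equal to $\top$ are disposed of directly ($\top\wedge\be\oneseq\be$, and $B\sse A\utimes B$ with $\theta_{A,B}$ acting as the identity on $B$). Distributing $\wedge$ over $\vee$ reduces the task to conjunctions $\al_{\Ga_{j}}\wedge\al_{\Ga'_{k}}\oneseq\nb\Ga_{j}\wedge\nb\Ga'_{k}$ with $\Ga_{j}\in\funT A$, $\Ga'_{k}\in\funT B$, and here the \emph{redistribution lemma} for weak-pullback-preserving functors expresses $\nb\Ga_{j}\wedge\nb\Ga'_{k}$ as a finite disjunction of formulas $\nb\De$ with $\De\in\funT(A\utimes B)$, read conjunctively via $\theta_{A,B}$ (so that the disjunction is of formulas $(\nb\De)[\theta_{A,B}]$). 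Taking the disjunction, over all $j,k$ and all redistributions, of the Lyndon translates $\al_{\De}\in\MLoLa^{+}(A\utimes B)$ yields $\gamma\in\D(A\utimes B)$ with $\al\wedge\be\oneseq\gamma[\theta_{A,B}]$, again because substitution respects $\oneseq$ and commutes with $\vee$.

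The main obstacle is to have the two invoked normal-form results available — the disjunctive normal form and the binary redistribution lemma for $\nb$ — and, just upstream of the first, the fact that a monotone predicate lifting of $\La$ can be written in Moss's nabla language at all, i.e.\ that over a weak-pullback-preserving functor $\nb$ is expressively strong enough. These are classical for (finitary) weak-pullback-preserving functors but genuinely rely on weak-pullback preservation, and one must take some care that the disjunctions produced remain finite (this is where reducing to finite variable sets and, if necessary, to a finitariness hypothesis on $\funT$ enters). Everything else — $\oneseq$-invariance and finite-disjunction-closure of disjunctiveness, disjunctiveness of $\top$, that substitution respects $\oneseq$ and commutes with $\vee$, and that the conjunctive-reading conventions inside $\nb$ coincide with the substitutions $\wedge_{A}$ and $\theta_{A,B}$ — is routine.
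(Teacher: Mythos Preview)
Your overall strategy---build $\D$ from Lyndon translates of nabla formulas---is the right one and matches the paper's, but there is a genuine gap concerning the finiteness of the disjunctions you invoke for items~(2) and~(3). You take $\D(A)$ to be the finite-disjunction closure of $\{\top\}\cup\{\al_{\Ga}\mid\Ga\in\funT A\}$, i.e.\ translates of \emph{single} nablas only. For item~(2) you then need $\hs_{\la}\ol{\pi}$ to be a \emph{finite} disjunction $\bigvee_{i}(\nb\Ga_{i})[\wedge_{A}]$ with $\Ga_{i}\in\funT(\funP A)$, and for item~(3) you need $\nb\Ga\wedge\nb\Ga'$ to be a \emph{finite} disjunction of formulas $(\nb\De)[\theta_{A,B}]$. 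But the proposition carries no finitariness hypothesis on $\funT$: already for $\funT=\funB$ the set $\funT(\funP A)$ is infinite whenever $A\neq\nada$, and the canonical normal-form disjunction is genuinely infinite. Neither ``reducing to finite variable sets'' (the variable sets \emph{are} finite; that does not make $\funT$ of them finite) nor standard finitariness of $\funT$ (which does not force $\funT$ of a finite set to be finite) rescues this.

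The paper closes this gap precisely by enlarging $\D$. It works through the \emph{infinitary} auxiliary language $\MLone^{\infty}_{\nb}(A)$, generated by $\top$, $\nb\Ga$ and \emph{arbitrary} disjunctions, observes that every such formula---including an infinite disjunction---still defines a single monotone predicate lifting (monotone liftings being closed under arbitrary disjunctions), and invokes Lyndon completeness once to obtain a single formula in $\MLoLa^{+}(A)$. The basis $\D_{\nb}(A)$ is then taken to consist of finite disjunctions of $\top$ and of $\La$-formulas equivalent to \emph{some} member of $\MLone^{\infty}_{\nb}(A)$, not merely to a single $\nb\Ga$. With this definition, the possibly infinite disjunction $\bigvee\{\nb\Ga\mid\Ga\in\funT(\funP A),\ \funP A,\Ga,\id\satone\hs_{\la}\ol{\pi}\}$ is itself one formula of $\MLone^{\infty}_{\nb}(\funP A)$ and hence corresponds to a \emph{single} element of $\D_{\nb}(\funP A)$; the binary law is handled the same way. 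Disjunctiveness of the enlarged $\D_{\nb}(A)$ still holds because disjunctiveness is preserved under arbitrary (not just finite) disjunctions. Your argument becomes correct once you make exactly this enlargement.
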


\begin{proof}
We shall use the (infinitary) nabla based logic for $\funT$ as an auxiliary
language.
Let $\MLone^\infty_{\nb}(A)$ be defined by the  grammar:
\[
    \al ::= \top \mid \bigvee \Phi \mid \nb \Ga
\]
where $\Phi \subseteq \MLone^\infty_{\nb}(A) $ and $\Ga \in \funT A$.
Semantics of formulas in $\MLone^\infty_{\nb}(A)$ in a one-step model is defined
by the obvious recursion, where the interpretation of $\nb \Ga$ is as before.

We first claim that for every formula $\al \in \MLone^\infty_{\nb}(A) $, where
$A = \{a_1,\dots,a_n\}$, there is a monotone $n$-place predicate lifting $\la$
for $\funT$ such that  $\al$ is equivalent to the formula $\hs_\la (a_1,\dots,a_n)$.
This follows from results in~\cite{leal:pred08}, together with the easy
observation that (monotone) predicate liftings are closed under arbitrary
disjunctions.
We can safely assume that $\hs_\la (a_1,\dots,a_n) \in \MLoneLa^+(A)$ since $\La$
was assumed to be Lyndon complete.%
\footnote{%
    Strictly speaking Lyndon completeness only guarantees there exists some formula
    in $\MLoneLa^+(A)$ that is equivalent to $\hs_\la (a_1,\dots,a_n)$, but taking
    care with this distinction would only complicate notation.
}

With this in mind, let $\D_{\nb}(A)$ be the set of all finite disjunctions of
formulas of the form $\top$ or of the form $\hs_\la (a_1,\dots,a_n)$, where $\la$
is the $n$-place predicate lifting associated with some $\al \in
\MLone^\infty_\nb(A)$.
As mentioned, all formulas of the form $\nb\Ga$ are
disjunctive, and since disjunctivity is closed under taking arbitrary
disjunctions, all formulas in $\MLone^\infty_\nb(A)$ are disjunctive too ---
hence all formulas in  $\D_{\nb}(A)$ are disjunctive.

It remains to prove that $\D_{\nb}(A)$ is a basis for $\La$, so we need to show
that $\D_{\nb}(A)$ is distributive over $\La$ and admits a binary distributive
law. For the purpose of proving distributivity over $\La$ it suffices to show
that any formula $\beta \in \MLoLa^{+}(A)$ is equivalent to a formula
$\al[\wedge_{A}]$, where $\al \in \MLone^\infty_\nb(\funP A)$.
In other words we want to prove, for an arbitrary formula $\beta \in
\MLoLa^{+}(A)$:
\begin{equation}
\label{eq:nbdb}
\beta \equiv^{1} \bv \{ (\nb \Ga)[\wedge_A] \mid \Ga \in \funT \funP A \; \& \; \funP A, \Ga, \id \satone \beta
\},
\end{equation}
where $\id : B \mapsto B$ denotes the canonical marking on the set $\funP A$.
The notation here may need a bit of explanation: given $\Ga \in \funT \funP A$,
we can apply the functor $\funT$ to the map $\wedge_A : \funP A \to \Latt (A)$
and apply this to $\Gamma$ to obtain $(\funT \wedge_A)\Gamma \in \funT \Latt
(A)$.
With this established we define:
\[
    (\nb \Ga )[\wedge_A] := \nb (\funT \wedge_A)\Ga.
\]

For a proof of the left-to-right direction of~\eqref{eq:nbdb}, assume that
$S,\si,m \satone \beta$.
From this it follows by invariance (Proposition~\ref{p:1invar}) that $\funP A,
(\funT m)\si, \id \satone \beta$, so that $\nb \Ga$ where $\Ga \isdef (\funT m) \si \in \funT\funP
A$ provides a candidate disjunct on the right hand side of~\eqref{eq:nbdb}.
It remains to show that $S,\si,m \satone \nb (\funT m) \si[\wedge_A]$, but this
is immediate by definition of the semantics of $\nb$.

For the opposite direction of~\eqref{eq:nbdb}, let $\Ga \in \funT\funP A$ be
such that $\funP A, \Ga, \id \satone \beta$.
In order to show that $\beta$ is a one-step semantic consequence of
$\nb \Ga[\wedge_A]$, let $(S,\si,m)$ be a one-step model such that
$S,\si,m \satone \nb(\funT\wedge_{A})\Ga$.
Without loss of generality we may assume that $(S,\si,m) = (\funP A,\De,\id)$
for some $\De \in \funT\funP A$.

By the semantics of $\nb$ it then follows from $\funP A,\De,\id  \satone
\nb\Ga[\wedge_A]$ that $(\De, (\funT\wedge_{A})\Ga) \in \ol{\funT}(\satzero)$.
But since $(B,\wedge_{A}(C)) \in {\satzero}$ implies that $C \sse B$,
we easily obtain that $(\Ga,\De) \in \ol{\funT}(\sse)$. We can now apply the following claim, the proof of which we leave as an exercise:

\begin{claimfirst}
Let $(S,\si,m)$ and ($S',\si',m')$ be two one-step models, and let $Z \sse S
\times S'$ be a relation such that $(\si,\si') \in \ol{\funT}Z$, and $m(s) \sse
m'(s')$, for all $(s,s') \in Z$.
Then for all $\al \in \MLoLa^{+}(A)$:
\[
S,\si,m \satone \al \text{ implies } S',\si',m' \satone \al.
\]
\end{claimfirst}
It is easy to see that the claim is applicable to the one-step models
$(\funP A,\Ga,\id)$ and $(\funP A,\De,\id)$, and the relation $Z = \;\sse$: we already showed that $(\Ga,\De) \in \ol{\funT}(\sse)$. Furthermore, if  $B \sse B'$ then trivially $\id(B) \sse \id(B')$.
Hence it follows from $\funP A, \Ga, \id \satone \beta$ that $\funP A, \De, \id
\satone \beta$.

For the binary distributive law, we leave it to the reader to check that for $\Ga \in \funT A$ and $\Ga' \in \funT B$, the conjunction $\nb \Ga \wedge \nb \Ga'$ is equivalent to the possibly infinite disjunction of all formulas of the form $\nb \Ga'' [\theta_{A,B}]$ such that $\Ga''\in \funT (A \times B)$, $\funT \pi_A(\Ga'') = \Ga$ and $\funT \pi_B(\Ga'') = \Ga'$. With this claim in place, the binary distributive law is easily established.
\end{proof}

\subsubsection{A simple example}

The simplest non-trivial example of a disjunctive basis we can think of, that
is not a special case of the nabla-based approach of the previous paragraph,
is the following.
Consider the functor $\funT = \Sigma \times \funI$, where $\Sigma$ is a
countably infinite alphabet. For this functor, $\funT$-coalgebras are (up to
unfolding) just $\Sigma$-streams, or infinite words for which the  alphabet is
contained in $\Sigma$.
They can be viewed as triples $(S,\sigma_1,\sigma_2)$ where $\sigma_1 : S \to
\Sigma$ and $\sigma_2 : S \to S$.
A simple and reasonably natural modal signature $\La$ for this functor is the
following: we have one nullary modality $!l$ for each $l \in \Sigma$ with the
interpretation: $\bbS,s \sat {!l}$ iff $\sigma_1(s) = l$, and we have a single
one-place modality $\nxt{}$ with the interpretation $\bbS,s \sat \nxt{}\phi$
iff $\bbS,\sigma_2(s) \sat \phi$.
We can easily describe this language as a modal signature $\Lambda$ for the
functor $\funT$, and it is not hard to show that $\Lambda$ has a disjunctive
basis consisting of the formulas $!l$ for $l \in \Sigma$ and $\nxt{}a$ for
$a \in A$.

The one-step nabla formulas for $\funT$, which are of the form $\nb (l,a)$, with
$(S, (l',s),m) \satone \nb(l,a)$ iff $l' = l$ and $s \satzero a$, can easily
be expressed as $\La$-formulas by writing $\nb (l,a) \isdef {!l} \land \nxt{} a$.
However, these formulas do not form a disjunctive basis for $\La$. To see why, just consider the set of variables $ A = \{a\}$. The property of distributivity of disjunctive bases would then require that the formula $\nxt{}a$ should be equivalent to a finite disjunction of formulas of the form:
\[ {!l} \land \nxt{} a \]
 where $l \in \Sigma$. But the formula $\nxt{}a$ corresponds to an infinite disjunction:
\[
\bigvee \{{!l} \land \nxt{} a \mid l \in \Sigma\}
\]
It is fairly obvious that this is not equivalent to any finitary disjunction of the required shape.
It is not hard to come up with other, similar examples.

\subsubsection{Graded modal logic}

A much more involved example of a modal logic that admits a disjunctive basis
which cannot be reduced to nabla formulas,  is graded modal logic.
The bag functor (defined in Section~\ref{s:someexamples}) does preserve weak pullbacks, and so its Moss modalities are
disjunctive, and so the set of all monotone liftings for $\funB$ does admit a
disjunctive basis as an instance of Proposition~\ref{p:wpp-db}.
Note, however, that the latter proposition does not apply to graded modal logic,
since its signature $\Si_{\funB}$ is not expressively complete;
this was essentially shown in~\cite{pacu:majo04}.
It was observed  in~\cite{berg:moss09} that very simple formulas in the
one-step language $\MLone_{\Si_{\funB}}$ are impossible to express in the
(finitary) Moss language; consequently, the Moss modalities for the bag functor
are not suitable to provide disjunctive normal forms for graded modal logic.
Still, the signature $\Si_\funB$ does have a disjunctive basis.

\begin{definition}
We say that a one-step model for the finite bag functor is \emph{Kripkean}
if all states have multiplicity $1$.
Note that a Kripkean one-step model $(S,\si,m)$ can also be seen as a structure
(in the sense of standard first-order model theory) for a first-order signature
consisting of a monadic predicate for each $a \in A$:
Simply consider the pair $(S,V_{m})$, where $V_{m}: A \to \funP S$ is the
interpretation given by putting $V_{m}(a) \isdef \{ s \in S \mid a \in m(s) \}$.
We consider special basic formulas of monadic first-order logic of the form:
\[
\be(\ol{a},B) \isdef
\exists \ol{x} (\diff(\ol{x}) \wedge \bw_{i \in I} a_i(x_i)
   \wedge \forall y (\diff(\ol{x},y) \rightarrow \bv_{b \in B}b(y))),
\]
where $\diff(x_{1},\ldots,x_{n})$ abbreviates the formula
$\diff(x_{1},\ldots,x_{n}) = \bigwedge_{1\leq i<j\leq n} x_{i} \neq x_{j}$. To be explicit, we allow the case where $\ol{a}$ is the empty sequence $\epsilon$, in which case we get:
\[
    \be(\epsilon,B) = \forall x  \bv_{b \in B}b(x).
\]
It is not hard to see that any Kripkean one-step $\funB$-model $(S,\si,m)$
satisfies:
\begin{equation}
\label{eq:nbB1}
S,\si,m \satone \be(\ol{a},B) \text{ implies }
S,\si,m' \satone \be(\ol{a},B) \text{ for some } m' \sse m
   \text{ with } \Ran(m') \sse \funP_{\leq 1}A.
\end{equation}

We can turn the formula $\be(\ol{a},B)$ into a modality $\disbag{\ol{a}}{B}$
that can be interpreted in \emph{all} one-step $\funB$-models, using the
observation that every one-step $\funB$-frame $(S,\si)$ (with $\si: S \to\om$)
has a unique Kripkean cover $(\ti{S},\ti{\si})$ defined by putting
$\ti{S} \isdef \bigcup\{ s \times \si(s) \mid s \in S\}$,
and $\ti{\si} \in \funB \ti{S}$ is defined by $\ti{\si}(s,i) \isdef 1$ for all $s \in S$ and $i \in \si(s)$ (where we view
each finite ordinal as the set of all smaller ordinals).
Then we can define, for an arbitrary one-step $\funB$-model $(S,\si)$
\begin{equation}
\label{eq:Bnb}
S,\si,m \satone \disbag{\ol{a}}{B} \text{ if }
   \ti{S},\ti{\si},m\circ\pi_{S} \satone \be(\ol{a},B),
\end{equation}
where $\pi_{S}$ is the projection map $\pi_{S}: \ti{S} \to S$.
It is then an immediate consequence of~\eqref{eq:nbB1} that $\disbag{\ol{a}}{B}$
is a disjunctive formula.
\end{definition}
Our main aim in this section is to show that the modalities $\disbag{\ol{a}}{B}$
provide a disjunctive basis for the signature $\SigB$.
As far as we know, this result is new.
The hardest part in proving it is to show that the modalities
$\disbag{\ol{a}}{B}$ can be expressed as one-step formulas in
$\MLone_{\SigB}^+(A)$.
The reason that this is not so easy is subtle; by contrast, it is fairly
straightforward to show that formulas of the form $\disbag{\ol{a}}{B}$ can be
expressed in $\MLone_{\SigB}(A)$, using Ehrenfeucht-Fra\"{\i}ss{\'e} games, see
e.g.~Fontaine \& Place~\cite{font:fram10}.
However, a proper disjunctive basis as we have defined it has to consist of
\emph{positive} formulas, and this will be crucial for applications to modal
fixpoint logics%
\footnote{The same subtlety appears in Janin \& Lenzi~\cite{jani:rela01},
   where the translation of the language $\D_{\funB}$ into
   $\MLone_{\SigB}^+$ is required to prove that the graded $\mu$-calculus is
   equivalent, over trees, to monadic second-order logic.
   Proposition~\ref{p:jan} in fact fills a minor gap in this proof.
   }.

\begin{prop}%
\label{p:jan}
Every formula $\disbag{\ol{a}}{B}$ is one-step equivalent to a
formula in $\MLone^{+}_{\SigB}(A)$.
\end{prop}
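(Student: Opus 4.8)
The plan is to write down an explicit formula of $\MLone^{+}_{\SigB}(A)$ equivalent to $\disbag{\ol{a}}{B}$, obtained by converting the first-order meaning of this modality into a bipartite matching problem and then reading off the matching criteria as graded-modal statements. The first step is to unfold \eqref{eq:Bnb}: writing $\ol{a} = (a_{i})_{i \in I}$, a one-step $\funB$-model $(S,\si,m)$ satisfies $\disbag{\ol{a}}{B}$ iff, in the Kripkean cover $(\ti{S},\ti{\si})$ --- whose elements I will call \emph{units} --- there is an injection $f : I \to \ti{S}$ with $f(i)$ lying over an $a_{i}$-state for each $i$, and with every unit outside $\Ran(f)$ lying over a state satisfying some disjunct of $\bigvee B$. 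Phrased in graph-theoretic terms: in the bipartite graph $G$ whose left vertices are the indices in $I$, whose right vertices are the units of $\ti{S}$, and which has an edge between $i$ and a unit $u$ exactly when $u$ lies over an $a_{i}$-state, this says that $G$ has a matching that saturates $I$ and also saturates the set $N$ of units lying over states satisfying \emph{no} disjunct of $\bigvee B$ (such units are forced into $\Ran(f)$).

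The second step is to split this requirement. By the Dulmage--Mendelsohn theorem, $G$ has a matching saturating $I \cup N$ iff it has a matching saturating $I$ and a matching saturating $N$, so the two halves can be treated separately. A little weighted bookkeeping shows that, for a boolean formula $\pi$, the number of units lying over $\pi$-states equals the total $\si$-weight of $\mngzero{\pi}_{m}$; hence $\hs_{\underline{k}}(\pi)$ is one-step true at $(S,\si,m)$ iff at least $k$ units lie over $\pi$-states, and $\hs_{\ol{k}}(\pi)$ iff fewer than $k$ units lie over states \emph{not} satisfying $\pi$. Hall's marriage theorem turns ``$G$ has a matching saturating $I$'' into the requirement that for every $W \subseteq I$ at least $|W|$ units lie over $\bigl(\bigvee_{i \in W} a_{i}\bigr)$-states, i.e. into
\[
\gamma_{1} \;:=\; \bigwedge_{W \subseteq I} \hs_{\underline{|W|}}\Bigl( \bigvee_{i \in W} a_{i} \Bigr) .
\]
For ``$G$ has a matching saturating $N$'' I would use the contrapositive form of Hall: this matching fails to exist iff there is $J \subseteq I$ such that more than $|J|$ units of $N$ have all their $G$-neighbours inside $J$; and a unit of $N$ has all its neighbours inside $J$ precisely when it lies over a state satisfying $\neg\bigl(\bigvee B \vee \bigvee_{i \in I \setminus J} a_{i}\bigr)$. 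Hence ``$G$ has a matching saturating $N$'' is exactly
\[
\gamma_{2} \;:=\; \bigwedge_{J \subseteq I} \hs_{\ol{|J|+1}}\Bigl( \bigvee B \vee \bigvee_{i \in I \setminus J} a_{i} \Bigr) ,
\]
and in total $\disbag{\ol{a}}{B} \equiv^{1} \gamma_{1} \wedge \gamma_{2}$. Since every modality occurring here is some $\underline{k}$ or $\ol{k}$ applied to a positive boolean combination of variables from $A$, this formula lies in $\MLone^{+}_{\SigB}(A)$, as required.

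The step I expect to be the real obstacle is exactly the ``matching saturating $N$'' half, and in particular forcing it into \emph{positive} form. A direct reading of that condition produces lower bounds on numbers of $a_{i}$-states, which push negations inside the modalities and only land in $\MLone_{\SigB}(A)$, not its positive fragment --- this is exactly the subtlety flagged in the footnote above. The remedy is to pass to the complementary families of units ``trapped inside $J$'', which recasts the condition as \emph{upper} bounds on numbers of states satisfying a negated disjunction, and these are precisely what the co-graded modalities $\ol{k}$ express when applied to a positive formula. (That $\disbag{\ol{a}}{B}$ is monotone is a useful consistency check but does not by itself yield the positive normal form, since $\SigB$ is not Lyndon complete; one genuinely has to use the matching structure.) The remaining points --- the bookkeeping between units of the Kripkean cover and $\si$-weights, and checking that the Dulmage--Mendelsohn theorem applies verbatim --- are routine but do need to be in place for the equivalence $\disbag{\ol{a}}{B} \equiv^{1} \gamma_{1} \wedge \gamma_{2}$ to hold exactly.
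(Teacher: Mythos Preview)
Your proposal is correct, and in fact your formula $\gamma_{1}\wedge\gamma_{2}$ coincides (after the reindexing $J\leftrightarrow I\setminus J$ in $\gamma_{2}$, and the paper's reduction to a singleton $B=\{b\}$) with the paper's formula $\bigwedge_{J\subseteq I}\chi_{J}$ where $\chi_{J}=\dia^{|J|}\bigvee_{i\in J}a_{i}\wedge\Box^{n+1-|J|}(\bigvee_{i\in J}a_{i}\vee b)$. So the target is the same; the route is genuinely different.

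The paper's argument treats the hard direction in one shot: it enlarges the left vertex set $I$ by a fresh index set $K$ of size $|\ti{S}|-n$, with each $k\in K$ adjacent to every $b$-unit, and then verifies Hall's condition for the single bipartite graph on $(I\cup K,\ti{S})$. A full matching of this graph is then automatically a bijection, which simultaneously delivers the witnesses for the $a_{i}$'s and forces all remaining units into $b$. Your argument instead keeps the original bipartite graph on $(I,\ti{S})$ and splits the requirement via Mendelsohn--Dulmage into ``saturate $I$'' and ``saturate $N$'', each handled by its own Hall application. What your decomposition buys is conceptual transparency: the two conjuncts $\gamma_{1}$ and $\gamma_{2}$ visibly correspond to the two halves, and the passage to positive form is explained structurally (upper bounds on ``trapped'' $N$-units become $\ol{k}$-formulas). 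What the paper's single-graph trick buys is economy of tools: only Hall is needed, at the cost of the slightly ad hoc choice of $K$. Both approaches rely on finiteness of $\ti{S}$, which holds because $\funB$ is finitary.
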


Our main tool in proving this proposition will be Hall's Marriage Theorem,
which can be formulated as follows.
A \emph{matching} of a bi-partite graph $\bbG = (V_1,V_2,E)$ is a subset $M$ of
$E$ such that no two edges in $M$ share any common vertex.
$M$ is said to \emph{cover} $V_1$ if $\Dom M = V_1$.

\begin{myfact}[Hall's Marriage Theorem]%
\label{f:HMT}
Let $\bbG$ be a finite bi-partite graph, $\bbG = (V_1,V_2,E)$.
Then $\bbG$ has a matching that covers $V_1$ iff, for all $U \sse V_1$,
$\sz{U} \leq \sz{E[U]}$, where $E[U]$ is the set of vertices
in $V_2$ that are adjacent to some element of $U$.
\end{myfact}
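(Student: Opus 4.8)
The plan is to unfold the definition of $\disbag{\ol a}{B}$ via its Kripkean cover~\eqref{eq:Bnb} and to observe that both sides of the claimed equivalence depend only on the \emph{weighted colour counts}: for a one-step model $(S,\si,m)$ and a colour $c \sse A$, set $N_c \isdef \sum_{\{u \in S \,:\, m(u) = c\}} \si(u)$. Indeed, the Kripkean cover $(\ti{S},\ti{\si},m \circ \pi_S)$ of~\eqref{eq:Bnb} realizes exactly $N_c$ distinct multiplicity-one copies of each colour $c$, while a one-step graded formula reads these counts directly: $S,\si,m \satone \underline{k}(\pi)$ iff $\sum_{c \models \pi} N_c \geq k$, and $S,\si,m \satone \ol{k}(\pi)$ iff $\sum_{c \not\models \pi} N_c < k$. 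Writing $b \isdef \bv B$ and $\ol{a} = (a_i)_{i \in I}$, the formula $\be(\ol{a},B)$ holds in the cover iff there is a system of distinct representatives $(x_i)_{i \in I}$ with $x_i \models a_i$ such that every remaining point satisfies $b$; equivalently, viewing the copies of the cover as one side $V_2$ of a bipartite graph $\bbG = (I,V_2,E)$ with $(i,p) \in E$ iff $p \models a_i$, iff $\bbG$ admits a matching that saturates $I$ and whose range contains every \emph{bad} point (one failing $b$).

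The heart of the argument is to convert this matching condition into a Boolean combination of counting inequalities by Fact~\ref{f:HMT}. For the pure saturation-of-$I$ part, Hall's criterion says a matching covering $I$ exists iff $\sz{J} \leq \sz{E[J]}$ for every $J \sse I$, and $\sz{E[J]} = \sum_{c \cap \{a_j : j \in J\} \neq \nada} N_c$ is precisely the count expressed by $\underline{\sz{J}}\big(\bv_{j \in J} a_j\big)$; so this part contributes the positive conjunction $\bw_{J \sse I} \underline{\sz{J}}(\bv_{j \in J} a_j)$. The coverage-of-bad-points part is governed by an \emph{upper} bound: all bad points must lie in the range of a size-$\sz{I}$ matching, so in particular there can be at most $\sz{I}$ of them, a condition of the form ``fewer than $\sz{I}+1$ points fail $b$'', which is exactly $\ol{\sz{I}+1}(b)$. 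The crucial point, and the reason a \emph{positive} formula exists at all, is that this ``few bad points'' requirement --- which looks negative, since it refers to points failing $b$ --- is captured positively by the $\ol{k}$ modalities that are already part of $\SigB$; no negation of $b$ is needed. Assembling the pieces yields a finite disjunction (over the combinatorial choices described below) of conjunctions of formulas $\underline{k}(\pi)$ and $\ol{k}(b)$ with $\pi \in \Latt(A)$, which is a formula of $\MLone^{+}_{\SigB}(A)$.

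The step I expect to be the main obstacle is getting this finite positive formula to capture the \emph{interaction} between saturating $I$ and covering the bad points, since a matching must simultaneously assign distinct indices to the (forced) bad points and distinct good representatives to the remaining indices. I plan to handle this by applying Hall's theorem to a graph in which the bad points are made mandatory --- e.g.\ by a case split over the set $J_0 \sse I$ of indices used on bad points, or over the number of bad points, so that within each case the problem splits into two independent systems-of-distinct-representatives problems (one matching the bad points into $J_0$, one matching $I \setminus J_0$ into the good points), each handled by Fact~\ref{f:HMT}. Each case is again a positive conjunction of $\underline{k}$- and $\ol{k}$-constraints, and taking the disjunction over the finitely many cases gives the required formula. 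The two directions of the equivalence are then routine: from a witnessing matching one reads off the counting inequalities and, via~\eqref{eq:nbB1}, a dividing marking; conversely, given the inequalities, Hall's theorem produces the matching and hence a system of distinct representatives witnessing $\be(\ol{a},B)$ in the cover.
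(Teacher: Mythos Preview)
There is a mismatch between the statement and your proposal. The statement given is Hall's Marriage Theorem itself, which the paper records as a \emph{Fact} and does not prove; there is nothing to compare on that count. What you have actually written is a sketch of Proposition~\ref{p:jan} (that $\disbag{\ol{a}}{B}$ is equivalent to a formula in $\MLone^{+}_{\SigB}(A)$), \emph{using} Hall's theorem as a tool. I will comment on your sketch in that light.

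Your plan has a genuine gap at exactly the point you flag as the main obstacle. In your case split over $J_{0}\sse I$, the two sub-matchings you want to apply Hall to are ``bad points into $J_{0}$'' and ``$I\setminus J_{0}$ into good points''. The Hall condition for either of these asks, for each $J$, how many \emph{bad} (respectively \emph{good}) points satisfy $\bv_{j\in J}a_{j}$. That count is $\underline{k}\big(\neg b \land \bv_{J}a_{j}\big)$ (respectively $\underline{k}\big(b \land \bv_{J}a_{j}\big)$), which is not a formula of $\MLone^{+}_{\SigB}(A)$: the separation into bad and good points reintroduces the very negation of $b$ you are trying to avoid. The same problem arises if you split over the number $m$ of bad points, since ``exactly $m$ bad points'' is not positively expressible, and ``at most $m$'' alone does not let you count \emph{within} the bad set. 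So the claim ``each case is again a positive conjunction of $\underline{k}$- and $\ol{k}$-constraints'' is where the argument breaks.

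The paper sidesteps this interaction entirely with a single conjunction, no case split: take
\[
\gamma \;=\; \bw_{J\sse I}\Big(\dia^{\sz{J}}\bv_{j\in J}a_{j}\;\land\;\Box^{\,n+1-\sz{J}}\big(\bv_{j\in J}a_{j}\lor b\big)\Big),
\]
which is manifestly in $\MLone^{+}_{\SigB}(A)$. The second conjunct is the trick: it says ``at most $n-\sz{J}$ points fail $\bv_{J}a_{j}\lor b$'', entangling the bad-point bound with the Hall-style lower bound in one positive constraint. For the hard direction the paper applies Hall once, to a single bipartite graph whose left side is $I$ augmented by $\sz{S}-n$ fresh indices each demanding $b$; a matching that covers this enlarged set is then automatically a bijection onto $S$ and forces every point outside the $I$-image to satisfy $b$. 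This is precisely the device that lets one avoid ever isolating the bad points inside the formula.
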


\noindent
\begin{proofof}{Proposition~\ref{p:jan}}
We will show this for the simple case where $B$ is a singleton $\{b\}$.
The general case is an immediate consequence of this (consider the
substitution $B \mapsto \bv B$).

Let  $\ol{a} = (a_{1},\ldots,a_{n})$. Define $I \isdef \{ 1, \ldots, n \}$.
For each subset $J \sse I$, let $\chi_{J}$ be the formula
\[
\chi_{J} \isdef
\dia^{\sz{J}} \bigvee_{i \in J} a_i
\wedge \Box^{n +1 - \sz{J}} (\bigvee_{i \in J} a_i \vee b),
\]
and let $\gamma$ be the conjunction
\[
\gamma \isdef \bw \{\chi_J \mid J \sse I \}.
\]
What the formula $\chi_J$ says about a Kripkean (finite) one-step model is
that at least $\sz{J}$ elements satisfy the disjunction of the set
$\{a_i \mid i \in J\}$, while all but at most $n - \sz{J}$ elements satisfy
the disjunction of the set $\{a_i \mid i \in J\}\cup \{b\}$.
Abbreviating $\disbag{\ol{a}}{b} \isdef \disbag{\ol{a}}{\{b\}}$, we claim that
\begin{equation}
\label{eq:bagnb}
\ga \equiv^{1} \disbag{\ol{a}}{b},
\end{equation}
and to prove this it suffices to consider Kripkean one-step models.

It is straightforward to verify that the formula $\gamma$ is a semantic one-step
consequence of $\disbag{\ol{a}}{b}$.
For the converse, consider a Kripkean one-step model $(S,\si,m)$ in which
$\gamma$ is true.
Let $K$ be an index set of size $\sz{S} - n$, and disjoint from the index set
$I = \{1,\dots,n\}$.
Clearly then, $\sz{I \cup K} = \sz{I} + \sz{K} = \sz{S}$.
Furthermore, let $a_{k} \isdef b$, for all $k \in K$.
To apply Hall's theorem, we define a bipartite graph
$\bbG \isdef (V_{1},V_{2},E)$ by setting
$V_{1}  \isdef  I \cup K$,
$V_{2}  \isdef  S$, and
$E      \isdef  \{ (j,s) \in (I \cup K) \times S \mid a_{j} \in m(s) \}$. Note that this graph is finite: by definition of the bag functor only finitely many elements may have non-zero multiplicity in a one-step model, so every Kripkean one-step model has to be finite. Hence $S$ is finite, and so both $I$ and $K$ are finite since  $\sz{I \cup K} = \sz{S} $.

\begin{claimfirst}%
\label{cl:gml1}
The graph $G$ has a matching that covers $V_{1}$.
\end{claimfirst}

\begin{pfclaim}
We check the Hall marriage condition for an arbitrary subset $H \sse V_{0}$.
In order to prove that the size of $E[H]$ is greater than that of $H$ itself,
we consider the formula $\chi_{H \cap I}$.
We make a case distinction.
\medskip

\noindent
\textbf{Case 1:} $H \sse I$.
Then $\chi_{H \cap I} = \chi_{H}$ implies $\dia^{\sz{H}} \bigvee_{i \in H} a_i$.
This means that at least $\sz{H}$ elements of $S$ satisfy at least one variable
in the set $\{a_i \mid i \in H\}$.
By the definition of the graph $\bbG$, this is just another way of saying that
$\sz{H} \leq \sz{E[H]}$, as required.
\medskip

\noindent
\textbf{Case 2:} $H \cap K \neq \nada$.
Let $J \isdef H \cap I$, then the formula $\chi_{H \cap I} = \chi_{J}$ implies
the formula
\[
\Box^{n+1 - \sz{J}} (\bigvee_{j \in J} a_j \vee b).
\]
Now, if $s \in S$ satisfies either $b$ or some $a_j$ for $j \in J$, then by the
construction of $\bbG$ we have $s \in E[H]$.
We now see that $\sz{S \setminus E[H]} \leq n - \sz{J}$. Hence we get:
\[
\sz{E[H]} \geq \sz{S} - (n - \sz{J}) = \sz{S} - n +\sz{J}.
\]
But note that $H = J \cup (H \cap K)$, so that we find
\[
\sz{H} \leq \sz{J} + \sz{H \cap K} \leq \sz{J} + \sz{K}
= \sz{J} + (\sz{S} - n),
\]
From these two inequalities it is immediate that $\sz{H} \leq \sz{E[H]}$, as
required.
\end{pfclaim}

Now consider a matching $M$ that covers $V_1$.
Since the size of the set $V_1$ is the same as that of $V_2$, any matching $M$
of $\bbG$ that covers $V_1$ is (the graph of) a bijection between these two sets.
Furthermore, it easily follows that such an
$M$ restricts to a bijection between $I$ and a subset
$\{s_1,\dots,s_n\}$ of $S$ such that $a_i \in m(s_{i})$ for each $i \in I$, and
that $b \in m(t)$ for each $t \notin \{u_1,\dots,u_n\}$.
Hence $\disbag{\ol{a}}{b}$ is true in $(S,\si,m)$, as required.
\end{proofof}
\medskip

In light of this proposition, we shall continue to use the notation
$\disbag{\ol{a}}{B}$ for the equivalent formula in $\MLone^{+}_{\SigB}(A)$
provided in the proof.

\begin{definition}
We define $\D_{\funB}(A)$ by the following grammar:
\[ \delta ::= \top \mid \disbag{\ol{a}}{B} \mid \delta \vee \delta \]
where $\ol{a}$ is a tuple of elements from $A$ and $B \subseteq A$.
\end{definition}

\begin{thm}%
\label{t:bagdisbas}
The assignment $\D_{\funB}$ provides a disjunctive basis for the signature
$\SigB$.
\end{thm}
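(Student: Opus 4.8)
The plan is to verify the three defining conditions of a disjunctive basis (Definition~\ref{d:disbas}) for the assignment $\D_{\funB}$, the hardest of which will be distributivity over $\SigB$. First, disjunctiveness of each formula in $\D_{\funB}(A)$ is immediate: each generating formula $\disbag{\ol a}{B}$ is disjunctive by~\eqref{eq:nbB1}--\eqref{eq:Bnb} (together with Proposition~\ref{p:jan}, which lets us regard it as a genuine one-step $\SigB^{+}$-formula), $\top$ is trivially disjunctive, and disjunctiveness is preserved under finite (indeed arbitrary) disjunctions, so clause~(1) of the definition holds as well since $\D_{\funB}(A)$ is by construction closed under finite disjunctions and contains $\top$.

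For distributivity over $\SigB$ (clause~(2)), I need to show that every modality $\hs_{\la}\ol\pi$ with $\la \in \SigB$ and $\ol\pi$ a tuple from $\Latt(A)$ is one-step equivalent to $\delta[\land_A]$ for some $\delta \in \D_{\funB}(\psf(A))$. By first substituting $\pi_j \mapsto \bigwedge (\text{the set } \pi_j \text{ denotes})$, it suffices to treat the generators $\la \in \{\underline k, \ol k \mid k \in \om\}$ applied to single variables; that is, I must express each of $\dia^{k} a$ and $\Box^{k} a$ as a disjunction of formulas $\disbag{\ol c}{C}$ with entries from $\psf(A)$, read off under $\land_A$. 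The natural candidates are: $\dia^{k} a \equiv^{1} \disbag{(a,\ldots,a)}{A}$ with $k$ copies of $a$ (``at least $k$ points satisfy $a$, and every point satisfies $\top = \bigvee A$ read under $\land_A$ --- note $\land_A(\nada) = \top$ if $\nada \in A$, or more safely take $C$ containing a witness for $\top$''), and $\Box^{k} a \equiv^{1} \bv \{ \disbag{\ol c}{\{a\}} \mid \ol c \text{ a tuple of length} \le \ldots \}$ capturing ``all but at most $k-1$ points satisfy $a$'' --- here the existential part $\ol c$ can be taken empty, giving $\Box^{1} a \equiv^{1} \disbag{\epsilon}{\{a\}}$, and for larger $k$ one takes a disjunction over which at-most-$(k-1)$ exceptional points are allowed, using variables in $\psf(A)$ to name arbitrary conjunctions. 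I expect this bookkeeping --- getting the indices and the $\land_A$-substitution exactly right, and checking the boundary case of the empty tuple --- to be the main obstacle, though conceptually it is routine.

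For the binary distributive law (clause~(3)), given $\al \in \D_{\funB}(A)$ and $\be \in \D_{\funB}(B)$ I must find $\gamma \in \D_{\funB}(A \utimes B)$ with $\al \wedge \be \equiv^{1} \gamma[\theta_{A,B}]$. Distributing disjunctions, it suffices to handle $\al = \disbag{\ol a}{B_0}$ and $\be = \disbag{\ol b}{B_1}$. On Kripkean one-step models the conjunction says: there are $|\ol a|$ distinct points carrying (the $a_i$'s, one each) all satisfying $\bv B_0$, and $|\ol b|$ distinct points carrying the $b_j$'s all satisfying $\bv B_1$, and every point satisfies both $\bv B_0$ and $\bv B_1$. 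Since the two existential families of points may overlap in any pattern, $\gamma$ should be a disjunction over all ``overlap patterns'' --- which of the $a$-witnesses coincide with which $b$-witnesses --- each disjunct being a single $\disbag{\ol c}{C}$ over $A \utimes B$ where a coinciding pair $(a_i,b_j)$ contributes the variable $(a_i,b_j) \in A\times B \subseteq A\utimes B$ (sent by $\theta_{A,B}$ to $a_i \wedge b_j$), a lone $a$-witness contributes $a_i \in A \subseteq A \utimes B$, a lone $b$-witness contributes $b_j \in B$, and $C$ is chosen so that $\bigvee C$ under $\theta_{A,B}$ becomes (equivalent to) $\bv B_0 \wedge \bv B_1$ --- e.g. $C = B_0 \times B_1$. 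I would verify the equivalence~\eqref{eq:bagnb}-style, reducing to Kripkean models and matching witnesses on both sides; Hall's theorem is not needed here, only the combinatorics of partial injections between the two witness sets. Finally, I would note that since $\D_{\funB}$ is manifestly closed under finite disjunctions and contains $\top$, and all constructed $\gamma$, $\delta$ lie in the appropriate $\D_{\funB}(\cdot)$, all three clauses hold, completing the proof.
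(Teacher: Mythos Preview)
Your overall strategy matches the paper's: verify the three clauses of Definition~\ref{d:disbas}, exploiting the fact that $\land_A(\nada)=\top$ for clause~(2) and enumerating overlap patterns for clause~(3). However, both of these verifications contain a genuine error.

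For clause~(3), your disjunct corresponding to a given overlap pattern is wrong. A lone $a$-witness (an $a_i$-point that is not matched with any $b_j$) cannot simply contribute the variable $a_i \in A \subseteq A\utimes B$: from the point of view of the conjunct $\disbag{\ol b}{B_1}$, such a point is one of the ``other'' points and must therefore satisfy $\bigvee B_1$. Consequently your disjunct $\disbag{\ldots,a_i,\ldots}{B_0\times B_1}[\theta_{A,B}]$ does \emph{not} imply $\disbag{\ol b}{B_1}$, and the claimed equivalence fails. The paper repairs this by enriching the overlap data to a triple $(O,c_1,c_2)$, where $c_1$ assigns to each lone $a$-index $i$ an element $c_1(i)\in B_1$ and symmetrically $c_2$ assigns to each lone $b$-index $j$ an element $c_2(j)\in B_0$; the entry for a lone $a_i$ is then the pair $(a_i,c_1(i))\in A\times B$, and one takes the disjunction over all such triples. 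Your choice $C=B_0\times B_1$ for the ``tail'' set is correct and agrees with the paper.

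For clause~(2), the ``reduction to single variables'' is not valid: $\dia^k$ does not distribute over the disjunctions inside $\pi\in\Latt(A)$, so handling $\dia^k a$ alone and substituting does not yield a formula in $\D_{\funB}(\funP A)$. Your concrete candidate $\disbag{(a,\ldots,a)}{A}$ is also a formula over $A$ rather than $\funP A$, and it imposes the spurious constraint that every remaining point satisfy $\bigvee A$, which need not be $\top$. The paper treats $\pi$ directly, writing
\[
\dia^k\pi \;\equiv^1\; \bigvee\bigl\{\disbag{B_1,\ldots,B_k}{\{\nada\}}\ \bigm|\ \textstyle\bigwedge B_i \vDash \pi \text{ for each }i\bigr\}[\land_A]
\]
and
\[
\Box^k\pi \;\equiv^1\; \bigvee_{m<k}\disbag{\underbrace{\nada,\ldots,\nada}_{m}}{\{B\subseteq A\mid \textstyle\bigwedge B\vDash\pi\}}[\land_A],
\]
using exactly the variable $\nada\in\funP A$ (with $\land_A(\nada)=\top$) that you anticipated.
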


\begin{proof}
It remains to prove that  $\D_{\funB}$ is distributive over $\SigB$, and admits
a binary distributive law.
For the first part, consider the formula $\dia^k \pi  \in \MLone^{+}_{\SigB}(A)$.
Note that $\nada$ is a variable in $\funP A$ and that $\bigvee \{\nada\}
[\wedge_A] = \bigwedge \nada = \top$.
With this in mind, it is not hard to see that $\dia^k\pi$ can be rewritten
equivalently as:
\[
\dia^k \pi = \bigvee \{\disbag{B_1,\dots,B_k}{\{\nada\}}
\mid \bigwedge B_i \vDash \pi, \text{ for each } i \}[\wedge_A]
\]
Here, $\vDash$ denotes propositional consequence between formulas in $\Bool(A)$.
Next, consider the formula $\Box^k \pi \in \MLone^{+}_{\SigB}(A)$. Keeping in mind that $\bigwedge \nada = \top$, the reader can verify that this is equivalent to:
\[
    \bigvee_{m < k} \disbag{\underbrace{\nada,\dots,\nada}_{m \text{ times}}}{ \{B \subseteq A \mid \bigwedge B \vDash \pi\}}[\wedge_A]
\]
To establish the binary distributive law, let $\delta \in \D_{\funB}(A)$ and $\delta' \in \D_\funB(B)$. Then $\delta$ is of the form $\alpha_1 \vee \dots \vee \alpha_n$ and $\delta'$ is of the form $\beta_1 \vee \dots \vee \beta_m$, where each $\alpha_i$ is either equal to $\top$ or of the form $\disbag{\ol{a}}{A'}$ for some $\ol{a},A'$ and each  $\beta_j$ is either equal to $\top$ or of the form $\disbag{\ol{b}}{B'}$ for some $\ol{b},B'$. By distributing the conjunction over disjunctions, we can rewrite the formula $\delta \wedge \delta'$ as an equivalent disjunction of formulas $\alpha_i \wedge \beta_j$ for $1 \leq i \leq n$ and $1 \leq j \leq m$. So it suffices to show that the required distributive law holds for conjunctions of this shape.

In the case where $\alpha_i = \beta_j = \top$, we have  $\alpha_i \wedge \beta_j \oneseq \top$. But $\top \in \D(A \utimes B)$ and $\top = \top[\theta_{A,B}]$, so this case is done. If  one and only one of $\alpha_i,\beta_j$ is equal to $\top$, say $\alpha_i = \disbag{\ol{a}}{A'}$ and $\beta_j = \top$, then:
\[
    \alpha_i \wedge \beta_j \oneseq \disbag{\ol{a}}{A'} \in \D(A) \subseteq \D(A \utimes B).
\]
But  $\disbag{\ol{a}}{A'}[\theta_{A,B}] = \disbag{\ol{a}}{A'}$, so we are done in this case as well.

Finally, the interesting case is where $\alpha_i$ is of the form $\disbag{a_1,\dots,a_k}{A'}$ and $\beta_j$ is of the form $\disbag{b_1,\dots,b_l}{B'}$. For this case, we need some definitions: we fix the formulas $\alpha_i = \disbag{a_1,\dots,a_k}{A'}$ and $\beta_j = \disbag{b_1,\dots,b_l}{B'}$. Let an \emph{overlap record} be a subset $O \subseteq \{1,\dots,k\} \times \{1,\dots,l\}$ such that $O$ is the graph of a bijection from some subset of $\{1,\dots,k\}$ onto some subset of $\{1,\dots,l\}$. We denote the set of such overlap records by $\overlap$. Given an overlap record $O$, an \emph{$O$-pair} is a pair $(c_1,c_2)$ of functions $c_1 : \{1,\dots,k\}\setminus \pi_1[O] \to B'$ and $c_2 : \{1,\dots,l\}\setminus \pi_2[O] \to A'$. Finally, a \emph{case description} is a triple $(O,c_1,c_2)$ where $O$ is an overlap record and $(c_1,c_2)$ is an $O$-pair. Note that there are finitely many case descriptions.

Given a case description $(O,c_1,c_2)$, let $\{1,\dots,k\}\setminus \pi_1[O] = \{p_1,\dots,p_s\}$, let $ \{1,\dots,l\}\setminus \pi_2[O] = \{q_1,\dots,q_t\}$, and let $\vec{O}$ be a list of the pairs in $O$ in some arbitrary order. Now define the formula  $\chi(O,c_1,c_2)$ to be:
\[
    \disbag{(a_{p_1},c_1(p_1)),\dots,(a_{p_s},c_1(p_s)),(b_{q_1},c_2(q_1)),\dots,(b_{q_t},c_2(q_t)),\vec{O}\;}{\;A'\times B'}
\]
This is a formula in $\D_{\funB}(A\utimes B)$, in fact it is a formula in  $\D_{\funB}(A\times B)$.
It is a fairly straightforward task to verify that the formula $\alpha_i \wedge
\beta_j$ is equivalent to the disjunction:
\[
    \bigvee \{\chi(O,c_1,c_2) \mid (O,c_1,c_2) \text{ is a case description}\}
\]
We leave this as an exercise for the reader.
\end{proof}

\subsubsection{An example without weak pullback preservation}
Here is an example of a functor that does not preserve weak pullbacks, but still has a natural modal signature that admits a disjunctive basis.
Let $\mathsf{F}$ be the subfunctor of $\funP^2$  given by setting $\mathsf{F} X$ to be the set of pairs $(Y,Z) \in {(\funP X)}^2$ such that at least one of the sets $Y,Z$ is finite. That is:
\[
    \mathsf{F}X = (\funP_\omega X \times \funP X) \cup (\funP X \times \funP_\omega X).
\]
 This is a well defined subfunctor of $\funP^2$ since the direct image of a finite subset $Y \subseteq X$ under any given map $f : X \to X'$ is finite.

$\mathsf{F}$ does not preserve weak pullbacks. To see why, consider the constant map $c : \om \to \{0\}$, and consider the following two objects in $\mathsf{F} \om$: $\alpha = (\{0\},\om)$ and $\beta = (\om,\{0\})$. Clearly, $\mathsf{F}c(\alpha) = \mathsf{F}c(\beta) = (\{0\},\{0\})$. But given that $(P,p_1,p_2)$ is the pullback of the diagram: $\om \longrightarrow \{0\} \longleftarrow \om$ where both maps are equal to  $c$, if $\mathsf{F}$ preserves weak pullbacks then  there should be some  pair $\chi = (Y,Z) \in \mathsf{F}P$  such that $\mathsf{F}p_1(\chi) = \alpha$  and $\mathsf{F}p_2(\chi) = \beta$. But then $p_1[Z] = \om$ and likewise $p_2[Y] = \om$.
So  both $Y$ and $Z$ would have to be infinite, contradicting our definition of $\mathsf{F}$.

Consider the modal signature consisting of the usual labelled diamond modalities $\Diamond^0$ and $\Diamond^1$, quantifying over the left and right set in a pair $(Y,Z) \in \mathsf{F}X$ respectively, and their dual box modalities. Then we can define corresponding (finitary) nablas $\nabla^0$ and $\nabla^1$ in the usual way. Then, finite disjunctions of $\top$ and conjunctions of the form $\nabla^0 A_0 \wedge \nabla^1 A_1$ for finite sets $A_0,A_1$ will provide a disjunctive basis for this signature. The reason that the formulas $\nabla^0 A_0 \wedge \nabla^1 A_1$ are still disjunctive is this: given a one-step model $(X,(Y,Z),m)$ satisfying the formula $\nabla^0 A_0 \wedge \nabla^1 A_1$, where $(Y,Z) \in \mathsf{F}X$ and $m$ is a marking, we can construct an appropriate one-step frame morphism $f : (X',(Y',Z')) \to (X,(Y,Z))$ and a marking $m'$ on $X'$ as follows. Take:
\begin{eqnarray*}
X' & := &  (X \times A_0) \cup (X \times A_1), \\
  Y' & := &  \{(u,a) \in Y \times A_0 \mid a \in m(u) \}, \\
 Z' & := & \{(u,a) \in Z \times A_1 \mid a \in m(u) \}.
\end{eqnarray*}
The map $f$ is of course just the projection to $X$, and the marking $m'$ is defined by $m'(u,a) = \{a\}$. The crucial point is that since one of $Y,Z$ is finite, and since $A_0,A_1$ are both finite sets, one of the sets $Y',Z'$ is finite too, so $(Y',Z') \in \mathsf{F}X'$ as required. The proof that the formula $\nabla^0 A_0 \wedge \nabla^1 A_1$ also holds in the one-step model $(X',(Y',Z'),m')$ is routine.

\subsubsection{Monotone modal logic: a signature without disjunctive basis}

Finally, we provide an example of a signature that does not admit any
disjunctive basis:

\begin{prop}%
\label{p:mon-nodb}
The signature $\SigM$ consisting of the box- and diamond liftings for $\funM$
does not have a disjunctive basis.
\end{prop}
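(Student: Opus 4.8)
The plan is to argue by contradiction: suppose $\D$ is a disjunctive basis for $\SigM$, and derive a contradiction from the combination of disjunctivity with the distributivity conditions. The key observation is that the monotone neighbourhood functor $\funM$ has one-step frames $(S,\si)$ with $\si \in \funM S$ a \emph{set of upward-closed sets of states}, and the crucial difficulty for covers is that a one-step frame morphism $f : (S',\si') \to (S,\si)$ must satisfy $\funM f(\si') = \si$; since $\funM$ acts on a neighbourhood by taking, for each neighbourhood $Z \in \si'$, its image $f[Z]$ and then closing upward, morphisms can \emph{merge} and \emph{enlarge} neighbourhoods but the back-and-forth structure is rigid in a way that clashes with the ``shrink the marking to singletons'' requirement of Definition~\ref{d:disfma}.

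Concretely, I would first identify a small one-step formula that is forced to lie in the disjunctive basis by Proposition~\ref{p:db} — for instance, take $\alpha \isdef \dia a \wedge \dia b \in \MLoLa^{+}(\{a,b\})$ (with $\dia$ the monotone diamond lifting). By the normal form theorem, $\alpha$ must be one-step equivalent to $\delta[\wedge_{\{a,b\}}]$ for some $\delta \in \D(\funP\{a,b\})$, and since each disjunct of $\delta$ must itself be disjunctive, it suffices to focus on one such disjunct $\delta_0$ that is satisfied by a suitable witnessing one-step model. Then I would exhibit a one-step $\funM$-model $(S,\si,m)$ satisfying $\delta_0[\wedge_{\{a,b\}}]$ in which $a$ and $b$ are, of necessity, made true \emph{at a common state} — mirroring the analysis of $\Box a \wedge \dia b$ for $\funP$ given earlier in the excerpt, but exploiting the fact that the monotone diamond $\dia$ quantifies over \emph{all} neighbourhoods in $\si$: for $\dia a \wedge \dia b$ to hold, every neighbourhood must meet $\mngzero{a}_m$ and every neighbourhood must meet $\mngzero{b}_m$, and by choosing $\si$ to consist of a single minimal neighbourhood $\{s\}$ (upward closed), we force $a$ and $b$ both true at $s$. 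The heart of the argument is then to show that no cover $f : (S',\si') \to (S,\si)$ can repair this: any neighbourhood $Z' \in \si'$ has $f[Z']{\uparrow} \in \si$, so $f[Z']$ generates $\{s\}{\uparrow}$, forcing $Z'$ to contain a point mapping to $s$; running the same back-and-forth forces that point to satisfy both $a$ and $b$ under $m'$, contradicting $\sz{m'(s')} \le 1$.

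I expect the main obstacle to be pinning down precisely \emph{which} formula $\delta_0 \in \D(\funP\{a,b\})$ one is forced to reckon with, since the basis could in principle contain exotic positive one-step formulas, and one must rule out that some clever disjunct avoids the ``$a$ and $b$ together'' trap while still combining (via the binary distributive law and distributivity over $\SigM$) to recover $\dia a \wedge \dia b$. The clean way around this is to argue semantically rather than syntactically: take the specific one-step $\funM$-model $\bbT = (S,\si,m)$ with $S = \{s\}$, $\si = \{\{s\}\}$ (the only upward-closed nonempty family), $m(s) = \{a,b\}$; check $\bbT \satone \dia a \wedge \dia b$; conclude $\bbT \satone \delta[\wedge_{\{a,b\}}]$ for the basis formula $\delta$ provided by Proposition~\ref{p:db}; hence $\bbT \satone \delta'[\wedge_{\{a,b\}}]$ for some disjunct $\delta'$ of $\delta$ with $\delta'$ disjunctive; and then show directly that \emph{any} disjunctive $\delta' \in \MLoLa^{+}(\funP\{a,b\})$ true at $\bbT$ (under the $\wedge$-substitution) must have a dividing cover, which — by the rigidity of $\funM$-morphisms sketched above — cannot exist. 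Formalising ``$\funM$-morphisms out of $\bbT$'s only cover send some state to $s$ carrying both variables'' is the one genuinely technical lemma, and it rests on the elementary fact that if $\funM f(\si') = \{\{s\}{\uparrow}\}$ then for every $Z' \in \si'$ we have $\emptyset \neq f[Z'] \subseteq \{s\}$, so $s \in f[Z']$; combined with the one-step truth of the diamonds at $(S',\si',m')$ this pins a state above each $s$-preimage to both $a$ and $b$.
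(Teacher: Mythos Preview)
Your proposal has a genuine gap: the test formula $\dia a \wedge \dia b$ does not work, because it \emph{is} disjunctive. Concretely, take your one-step model $\bbT = (\{s\},\{\{s\}\},m)$ with $m(s)=\{a,b\}$. A dividing cover exists: let $S'=\{s_1,s_2\}$, $f(s_i)=s$, $\si'=\{\{s_1,s_2\}\}$ (upward closed in $\funP S'$), $m'(s_1)=\{a\}$, $m'(s_2)=\{b\}$. Then $\funM f(\si')=\{Z\sse\{s\}\mid f^{-1}[Z]\in\si'\}=\{\{s\}\}=\si$, and $(S',\si',m')\satone \dia a \wedge \dia b$ since the sole neighbourhood $\{s_1,s_2\}$ meets both $\mngzero{a}_{m'}=\{s_1\}$ and $\mngzero{b}_{m'}=\{s_2\}$. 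So the ``rigidity'' you sketch is illusory: yes, every $Z'\in\si'$ must contain a preimage of $s$, but it may contain \emph{several}, and the two diamonds can be witnessed at distinct preimages. Nothing forces a single point of the cover to carry both $a$ and $b$.

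More generally, any positive one-step formula in $\MLone_{\SigM}^{+}$ built from diamonds alone is disjunctive, for exactly this reason; to obstruct dividing covers one needs the \emph{box} as well, so that the existential-over-neighbourhoods and universal-over-neighbourhoods quantifications interact. The paper's proof exploits precisely this interaction: it takes three variables $\{a,b,c\}$ and the formula $\nabla\{\{a,b\},\{c\}\}$ (which involves both $\Box$ and $\dia$), and runs a simulation argument between carefully chosen one-step models to show that any putative disjunctive $\delta$ with $\psi\equiv^{1}\delta[\wedge_A]$ would force, in some cover, a point satisfying two incompatible variables. The argument also has to bridge between markings valued in $\funP A$ and in $\funP\funP A$ (your sketch glosses over this passage), which the paper handles via its Claim~2. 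In short, the shape of your argument is reasonable, but the specific formula you chose is too weak, and the cover analysis for $\funM$ is subtler than you indicate.
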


\begin{proof}
Let $L$ be the standard relation lifting for the monotone neighborhood functor.
Given two one-step models $X,\xi,m$ and $X',\xi',m'$ over a set of variables $A$,
we write  $ u \preceq u'$ if $m(u) \sse m'(u')$ for $u \in X$ and $u' \in
X'$, and we say that $X',\xi',m'$ \emph{simulates} $X,\xi,m$ if $(\xi,\xi') \in
L(\preceq)$.
A straightforward proof will verify the following claim.

\begin{claimfirst}%
\label{cl:presprop}
If $X',\xi',m'$ simulates $X,\xi,m$ then for every one-step formula $\alpha \in
\MLoneLa^{+}(A)$, $X,\xi,m\satone \alpha$ implies $X',\xi',m' \satone \alpha$.
\end{claimfirst}

Given a set $A$, let $\eta_A : A \to \funP{A}$ denote the map given by the unit of
the powerset monad, i.e., it is the singleton map $\eta_A : a \mapsto \{a\}$.
Furthermore, recall that $\land_A$ is the subsitution mapping $B \in \funP A$
to $\bw B$.

\begin{claim}%
\label{cl:theprop}
Let $\alpha$ be any one-step formula in $\MLoneLa(\funP A)$ and let $(X,\xi,m)$
be a one-step model with $m : X \to \funP{A}$.
Consider the map $\eta_{\funP A} : \funP A \to \funP \funP A$, so that $\eta_{\funP A}
\circ m$ is a marking of $X$ with variables from $\funP A$.
\begin{enumerate}
\item If $X,\xi, \eta_{\funP A} \circ m \satone \alpha $ then $X ,\xi ,m \satone
\alpha[\land_A]$.
\item If $X,\xi,m \satone \alpha[\land_A]$ and the empty set does not appear as
a variable in $\alpha$, and furthermore $m(u)$ is a singleton for each $u \in X$,
then $X,\xi, \eta_{\funP A} \circ m \satone \alpha $.
\end{enumerate}
\end{claim}

\begin{pfclaim}
For the first part of the proposition, it suffices to note that
$\mngone{B}_{\eta_{\funP A} \circ m} \sse \mngone{\bw B}_m$ for each
$B \in \funP A$, and the result then follows by monotonicity of the predicate
lifting corresponding to the one-step formula $\alpha$.

For the second part, it suffices to note that under the additional constraint
that $m(u)$ is a singleton for each $u \in X$ and the empty set does not appear
as a variable in $\alpha$,  we have $\mngone{\bw B}_m \sse
\mngone{B}_{\eta_{\funP A} \circ m}$ for each $B \in \funP A$ that appears as a
variable in $\alpha$.
To prove this,  suppose that $u \in \mngone{\bw B}_m$.
Since $B$ appears in $\alpha$ it is non-empty, and since $m(u)$ is a singleton,
say $m(u) = \{b\}$, it follows that we must in fact have $B = \{b\}$. Hence:
\[
    B \in \{\{b\}\} = \{m(u)\} = \eta_{\funP A}(m(u))
\]
so $u \in \mngone{B}_{\eta_{\funP A}\circ m}$ as required.
\end{pfclaim}

Now, let $A = \{a,b,c\}$ and consider the formula $\psi = \nabla \{\{a,b\},\{c\}
\}$.
If $\MLoneLa$ admits a disjunctive basis, then there is a disjunctive formula
$\delta$  in $\MLoneLa(\funP A)$ such that $\psi = \delta[\land_A]$.

So suppose $\delta \in \MLoneLa(\funP A)$ is disjunctive, and suppose that
$\psi = \delta[\land_A]$.
We may in fact assume w.l.o.g.\ that the empty set does not appear as a variable
in $\delta$, since otherwise we just use instead the formula $\delta[\top/\nada]$,
which is still disjunctive (this is easy to prove).
We have $\delta[\top/\nada][\land_A] = \delta[\land_A]$ since $\land_A(\nada) =
\bw \nada = \top$.

With this in mind, consider the one-step model $X,\xi,m$ where $X = \{x_1,x_2,
x_3\}$, $\xi = \{\{x_1,x_2\},\{x_3\},X\}$ and $m(x_1) = \{a\}$, $m(x_2) = \{b\}$
and $m(x_3) = \{c\}$.
It is easy to see that $X,\xi,m \satone \psi$, so by assumption $X,\xi,m\satone
\delta[\land_A]$.
But since the marking $m$ maps every element of $X$ to a singleton, item 2 of
Claim~\ref{cl:theprop} gives us that  $X,\xi,\eta_{\funP A} \circ m \satone
\delta$.

Now, define a new one-step model $X,\xi,h$ where as before $X = \{x_1,x_2,x_3\}$
and $\xi = \{\{x_1,x_2\},\{x_3\},X\}$ but where the marking $h : X \to \funP
\funP A$ (with respect to variables in $\funP A$) is defined by setting $h(x_1)
= \{\{a\}\}$, $h(x_2) = \{\{a\},\{c\}\}$ and $h(x_3) = \{\{c\}\}$.
It is a matter of simple verification to check that $X,\xi,h$ in fact simulates
$X,\xi,\eta_{\funP A} \circ m$, so by Proposition~\ref{cl:presprop} we get
$X,\xi,h \satone \delta$.

Since $\delta$ is disjunctive, there should be a one-step model $X',\xi',h'$ and
a map $f : X' \to X$ such that:
$X',\xi',h' \satone \delta$,
$\funM f(\xi') = \xi$,
$h'(u) \sse h(f(u))$ for all $u \in X'$ and
$h'(u)$ is at most a singleton for each $u \in X'$.
By monotonicity of $\delta$ we can in fact assume w.l.o.g.\ that $h'(u)$ is
precisely a singleton for each $u \in X'$: if $h'(u) = \nada$, just pick some
element $e$ of $h(f(u))$ (since $h(v)$ is non-empty for each $v \in X$) and set
$h'(u) = \{e\}$.
The resulting marking still satisfies all the conditions above.

But this means that we can define a marking $n : X' \to \funP{A}$ by taking each
$n(u)$ for $u \in X'$ to be the unique $B \sse A$ such that $h'(u) = \{B\}$.
Clearly, $h' = \eta_{\funP A} \circ n$, so by the first part of
Claim~\ref{cl:theprop}, we get $X',\xi',n \satone \delta[\land_A]$, hence
$X',\xi',n \satone \psi$, i.e. $X',\xi',n \satone \nabla \{\{a,b\},\{c\}\}$.
But from the definition of the marking $h$, the condition that $h'(u) \sse
h(f(u))$ for all $u \in X'$ and from the definition of $n$ it is clear that, for
all $u \in X'$, we have $n(u) = \{a\}$ or $n(u) = \{c\}$.
So to finally reach our desired contradiction, it suffices to prove the
following.

\begin{claim}
Let $X,\xi,m$ be any one-step model such that $X,\xi,m \satone \nabla \{\{a,b\},
\{c\}\}$.
Then either there is some $u \in X$ with $\{a,c\} \sse m(u)$, or there is some
$u \in X$ with $b \in m(u)$.
\end{claim}

\begin{pfclaim}
Suppose there is no $u \in X$ with $b \in m(u)$.
Then there is some set $Z \in \xi$ such that every $v \in Z$ satisfies $a$.
Furthermore there must be some $B \in \xi$ such that every $l \in B$ is
satisfied by some member of $Z$.
The only choice possible for this is $\{c\}$, hence some member of $Z$ must
satisfy both $a$ and $c$.
\end{pfclaim}

This finishes the proof of Proposition~\ref{p:mon-nodb}.
\end{proof}

Interestingly, we may ``repair'' the failure of admitting a disjunctive basis
by moving to the \emph{supported companion functor} $\underline{\funM}$.
This functor is defined as the subfunctor of $\funP \times \funM$, given (on
objects) by $\underline{\funM}S \isdef \{ (U,\si) \in \funP S \times \funM S
\mid U \text{ supports } \si \}$, where the
notion of support is defined in the standard way, relative to an arbitrary set functor $\fun$, as follows:
\begin{definition}%
\label{d:supp}
Given $\xi \in \funT X$, a \emph{support} of $\xi$ is a subset $X' \subseteq X$
such that there is some $\xi' \in \funT X'$ with $\funT \iota (\xi') = \xi$,
where $\iota : X' \to X$ is the inclusion map.
\end{definition}
If we add to the signature $\Si_{\funM}$ of monotone modal logic a box and
diamond modality accessing the support, we obtain a signature which does admit
a disjunctive basis. This is connected with Proposition~\ref{p:wpp-db} and the fact that the functor $\underline{\funM}$ preserves weak pullbacks, which was proved in~\cite{enqv:comp17}.
The supported companion functor features prominently in our completeness proof
(with Seifan) for the monotone $\mu$-calculus~\cite{enqv:comp17}.


\section{Disjunctive automata and simulation}%
\label{s:sim}

We now introduce disjunctive automata, which serve as a coalgebraic
generalization of non-deterministic automata for the modal $\mu$-calculus.
We refer to the sections~\ref{ss:aut} and~\ref{sec:games} for background on,
respectively, automata and their connection with $\mu$-calculi, and the
infinite games in which their semantics is formulated.

\begin{definition}
A $(\La,\Prop)$-automaton $\bbA = (A,\Th,\Om,\ai)$ is said to be
\emph{disjunctive}  if $\Th(c,a)$ is disjunctive, for all colors $c \in \funP\Prop$ and all states $a \in A$.

Given a disjunctive basis $\D$, we say that $\bbA$ is a $\D$-automaton if
$\Th(c,a) \in \D(A)$, for all colors $c \in \funP\Prop$ and all states $a \in A$.
\end{definition}

\begin{definition}
Let $\bbA$ be a $\La$-automaton and let $(\bbS,s_{I})$ be a pointed
$\funT$-model.
A strategy $f$ for $\eloi$ in $\AG(\bbA,\bbS)@(a,s)$ is \emph{dividing} if for
every $t$ in $\bbS$ there is at most one state $b$ in $\bbA$ such that the
position $(b,t)$ is \emph{$f$-reachable} (i.e., occurs in some $f$-guided
match).
We say that
$\bbA$ \emph{strongly accepts} $(\bbS,s_{I})$, notation:
$\bbS,s_{I} \sat_{s} \bbA$ if $\eloi$ has a dividing winning strategy
in the game $\AG(\bbA,\bbS)@(a_{I},s_{I})$.
\end{definition}

Disjunctive automata are very well behaved.
For instance, the following observation states a \emph{linear-size} model
property.
This observation goes back to Janin \& Walukiewicz~\cite{jani:auto95}, who
proved the linear-size model property for so-called disjunctive formulas of
the modal $\mu$-calculus.
Their result was generalised to the coalgebraic setting of $\nb$-based
automata by Kupke \& Venema~\cite{kupk:coal08}.

\begin{thm}
Let $\bbA = (A,\Th,a_I,\Om)$ be a disjunctive automaton for a set functor
$\funT$.
If $\bbA$ accepts some pointed $\funT$-model, then it accepts one of which the
carrier $S$ satisfies $S \sse A$.
\end{thm}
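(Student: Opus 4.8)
The plan is to combine positional determinacy with the defining property of disjunctive formulas, passing in two steps from an arbitrary accepting model to one carried by a subset of the automaton's state set $A$.

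First I would invoke Positional Determinacy (Fact~\ref{f:pdpg}): since $\bbA$ accepts some pointed model $(\bbS,s)$, player $\eloi$ has a positional winning strategy $f$ in $\AG(\bbA,\bbS)@(a_I,s)$, which we present as an assignment of a marking $m_{a,t}:S\to\funP A$ to each $f$-reachable position $(a,t)$, with $(S,\sigma(t),m_{a,t})\satone\Th(a,\Prop\cap V^\flat(t))$. Because $\bbA$ is disjunctive, every transition formula $\Th(a,\cdot)$ is a disjunctive one-step formula, so each of these one-step models admits a \emph{dividing cover}: a one-step frame morphism onto $(S,\sigma(t))$ together with a $\leq$-singleton marking still satisfying $\Th(a,\cdot)$ and refining $m_{a,t}$. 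Gluing these covers together along the run of $f$ --- substituting, at each $f$-reachable position, the dividing cover of the local one-step model --- yields a pointed $\funT$-model $(\bbS^\dagger,s^\dagger)$ on which $\bbA$ \emph{strongly} accepts, i.e.\ $\eloi$ has a \emph{dividing} winning strategy $g$ (the model $\bbS^\dagger$ may well be infinite). A useful bookkeeping remark is that in any dividing strategy every prescribed marking is automatically $\leq$-singleton-valued, since any automaton state occurring in a marking value yields a $g$-reachable position, and dividingness forbids two such states at one point.

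The second step is the collapse. Let $r$ be the partial map sending a $g$-reachable state of $\bbS^\dagger$ to its unique ``active'' automaton state, and set $A^{\ast}:=\{a_I\}\cup\Ran(r)\sse A$; this carries the small model. Using the $\leq$-singleton markings, each marking prescribed by $g$ can be read as (essentially) a partial map into $A^{\ast}$; totalising these with a harmless default value in $A^{\ast}$ lets us push the coalgebra map $\sigma^\dagger$ forward to a map $\sigma^{\ast}:A^{\ast}\to\funT A^{\ast}$, and we take as valuation the marking of a chosen representative state. Player $\eloi$'s candidate moves in $\AG(\bbA,(A^{\ast},\sigma^{\ast},V^{\ast}))$ are the pushed-forward markings; that these are legal --- i.e.\ still satisfy the relevant transition formula over the smaller one-step frame --- follows from the invariance claim for positive one-step formulas established (as an exercise) inside the proof of Proposition~\ref{p:wpp-db}, applied to the graph of the collapsing map. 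It is worth stressing that this claim, unlike the use of nabla modalities, needs no weak-pullback preservation, so the construction works for an arbitrary set functor, as the theorem requires. One then checks $\sz{A^{\ast}}\leq\sz{A}$ and that $(a_I,a_I)$ is the relevant initial position, since $(a_I,s^\dagger)$ is $g$-reachable.

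The hard part will be the last point: verifying that $\eloi$'s strategy in the collapsed game is \emph{winning}, i.e.\ that every infinite play it guides meets the parity condition. The subtlety is that after the collapse a single automaton state $a\in A^{\ast}$ stands for many $\bbS^\dagger$-states with different local one-step structure, so a play in the small game need not project onto a single $g$-guided play of $\AG(\bbA,\bbS^\dagger)$, and a priori $\abel$ could exploit this slack. Since the priority of a position depends only on its automaton-state component, what must be shown is that every infinite automaton-state trace arising in the collapsed game is the trace of some $g$-guided (hence $\eloi$-won) play in $\bbS^\dagger$. Achieving this requires setting up the collapse and $\eloi$'s (possibly non-positional) strategy carefully enough that a play in the small model can be decoded step by step into a matching $g$-guided play --- this is exactly the heart of the linear-size model arguments of Janin \& Walukiewicz~\cite{jani:auto95} and Kupke \& Venema~\cite{kupk:coal08}, which we would follow. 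Once the decoding is in place the parity condition transfers, $\eloi$ wins $\AG(\bbA,(A^{\ast},\sigma^{\ast},V^{\ast}))@(a_I,a_I)$, and since $A^{\ast}\sse A$ the theorem follows.
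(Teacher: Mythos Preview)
The paper does not actually prove this theorem: it is stated without proof, with attribution to Janin \& Walukiewicz~\cite{jani:auto95} and to Kupke \& Venema~\cite{kupk:coal08} for the coalgebraic generalisation. So there is no ``paper's own proof'' to compare against directly.

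That said, your outline is the right one and matches the structure of those cited proofs. Your first step --- passing from acceptance to \emph{strong} acceptance on a pre-image model --- is exactly Proposition~\ref{p:dj1}, which the paper \emph{does} prove in detail; you could simply invoke it rather than redo the gluing of dividing covers. Your bookkeeping remark that a dividing strategy forces all prescribed markings to be $\leq$-singleton-valued is correct and is the key observation enabling the collapse.

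Two small comments on the second step. First, for transferring one-step satisfaction along the collapse map you appeal to the Claim inside the proof of Proposition~\ref{p:wpp-db}. That claim is phrased for a general relation $Z$ and the Barr lifting $\ol{\funT}$; in full generality its proof leans on properties of $\ol{\funT}$ that may need weak pullback preservation. For your purpose $Z$ is the graph of a function, and then the argument reduces to naturality of predicate liftings plus monotonicity --- equivalently, to Proposition~\ref{p:1invar} after first enlarging the (possibly empty-valued) marking $m_{t}$ to the total marking $s\mapsto\{h(s)\}$. Citing Proposition~\ref{p:1invar} makes the independence from weak pullback preservation transparent. Second, you correctly flag the genuinely delicate point: showing that every infinite play in the collapsed game can be decoded as the $A$-trace of some $g$-guided play in $\bbS^{\dagger}$, so that the parity condition transfers. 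This is indeed where the work lies, and following~\cite{kupk:coal08} for the decoding (maintaining, along a play in the small game, a compatible $g$-guided play in the large one) is the standard and correct route.
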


The main property of disjunctive automata, which we will use throughout the
remainder of this paper, is the following.

\begin{prop}%
\label{p:dj1}
Let $\bbA$ be a disjunctive $\La$-automaton.
Then any pointed $\funT$-model which is accepted by $\bbA$ has a pre-image
model which is strongly accepted by $\bbA$.
\end{prop}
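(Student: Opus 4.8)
The plan is to take a pointed $\funT$-model $(\bbS, s_I)$ with $\bbS = (S,\si,V)$ accepted by the disjunctive automaton $\bbA = (A,\Th,\Om,\ai)$, fix a positional winning strategy $f$ for $\eloi$ in $\AG(\bbA,\bbS)@(\ai, s_I)$ (which exists by Positional Determinacy, Fact \ref{f:pdpg}), and then build a new pointed model $(\bbS', s_I')$ together with a $\funT$-model homomorphism $h : \bbS' \to \bbS$ with $h(s_I') = s_I$ (so that $(\bbS',s_I')$ is a ``pre-image model'' of $(\bbS,s_I)$), on which $\eloi$ has a \emph{dividing} winning strategy. The idea is to unravel $\bbS$ following $f$ so that at each point we remember which state of $\bbA$ we are ``processing'' there, thereby splitting a point of $S$ into several copies — one for each automaton state that can reach it — exactly as in the pigeonhole/cover constructions illustrated for the nabla formulas in Section \ref{s:dis}.

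Concretely, first I would set up the carrier of $\bbS'$ as a set of pairs (or of finite histories) $(b, t)$ with $t \in S$ and $b$ a state such that $(b,t)$ is $f$-reachable from $(\ai,s_I)$, rooting the construction at $(\ai, s_I)$. The crucial step is to define the coalgebra map $\si'$ on this carrier. At a point $(b,t)$, the strategy $f$ prescribes a marking $m_{b,t} : S \to \funP A$ with $(S,\si(t),m_{b,t}) \satone \Th(\fl{V}(t)\cap\Prop, b)$; since $\Th(\fl{V}(t)\cap\Prop,b)$ is \emph{disjunctive}, Definition \ref{d:disfma} gives a one-step frame $(S_{b,t}, \tau_{b,t})$, a one-step frame morphism $g_{b,t} : (S_{b,t},\tau_{b,t}) \to (S,\si(t))$, and a marking $m'_{b,t}$ with $m'_{b,t}(x) \sse m_{b,t}(g_{b,t}(x))$ and $|m'_{b,t}(x)| \le 1$ for all $x$. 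I would use these local dividing covers to assemble $\si'$ and the projection $h$: glue the frames $(S_{b,t},\tau_{b,t})$ along the new points, using $m'_{b,t}$ to read off, for each successor point, the unique automaton state (if any) that it carries, and let this determine which point of $\bbS'$ we move to. One has to be a little careful that the frame morphism condition $\funT h \circ \si' = \si \circ h$ holds globally; this follows from the local conditions $\funT g_{b,t}(\tau_{b,t}) = \si(t)$ and the way the copies are indexed. One also sets the valuation on $\bbS'$ by pulling back $V$ along $h$, which makes $h$ a $\funT$-model homomorphism by construction.

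It then remains to check that $\eloi$ wins the acceptance game on $(\bbS', s_I')$ with a dividing strategy. The winning strategy is the obvious ``copycat'': at a position $(b,(b,t))$ in $\AG(\bbA,\bbS')$, play the marking on the successor points of $(b,t)$ induced by $m'_{b,t}$; since $(S_{b,t},\tau_{b,t},m'_{b,t}) \satone \Th(\fl{V}(t)\cap\Prop,b)$ by clause (1) of Definition \ref{d:disfma} and this one-step model is built into $\si'((b,t))$, this is a legal move, and the condition $m'_{b,t}(x)\sse m_{b,t}(g_{b,t}(x))$ guarantees that every $f$-guided match in $\bbS'$ projects via $h$ to an $f$-guided match in $\bbS$, so $\eloi$ wins every resulting infinite match by parity (the priorities $\Om$ are the same along the projected match). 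The \emph{dividing} property is where clause (3) pays off: at any point $(b,t)$ of $\bbS'$, the only automaton states that can label an $f$-guided successor are those in $m'_{b,t}(x)$ for $x$ ranging over the successor points, and by $|m'_{b,t}(x)|\le 1$ each such successor point carries at most one state — so at each point of $\bbS'$ at most one automaton state is reachable, which is exactly the dividing condition. The main obstacle, and the part deserving the most care, is the global assembly of $\si'$ and $h$ from the local covers $g_{b,t}$ so that $h$ is genuinely a $\funT$-coalgebra homomorphism and the new model is well-defined; once that bookkeeping is in place, legality, winning, and dividing are all direct consequences of the three clauses of disjunctiveness together with the fact that $f$ was winning.
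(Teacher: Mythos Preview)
Your proposal is correct and follows essentially the same strategy as the paper: fix a positional winning strategy $f$, and use the disjunctiveness of each $\Th(a,c)$ to replace $\eloi$'s marking at every $f$-reachable position by a dividing cover, assembling these local covers into a pre-image model on which the induced ``copycat'' strategy is both winning (because matches project along $h$ to $f$-guided matches in $\bbS$) and dividing (because of clause~(3) of Definition~\ref{d:disfma}).

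The only noteworthy difference is in the bookkeeping for the carrier. The paper builds $\bbS'$ as a \emph{tree}: starting from a fresh root $x_I$, the children of a node $t$ are taken to be the (fresh) elements of the cover set $R(t)$ produced by disjunctiveness at $(a_t,s_t)$, so each node carries a unique $f$-guided partial match $\Si_t$ and hence a unique automaton state $a_t$. Your proposal instead takes the carrier to be the set of $f$-reachable pairs $(b,t)$, with the automaton state recorded in the first coordinate. Both work; the tree version is slightly cleaner on one technical point that you (and the paper, in a footnote) gloss over: when the dividing marking $m'_{b,t}$ assigns $\nada$ to some $x \in S_{b,t}$, you still need to send $x$ somewhere in $S'$ to define $\si'((b,t))$, and there may be no $f$-reachable pair with second coordinate $g_{b,t}(x)$. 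In the tree construction this is a non-issue since each $R(t)$ consists of fresh objects that are simply added to the carrier; with pairs you need to enlarge $S'$ by dummy points (or fall back on the finite-history variant you mention, which is effectively the tree). This is routine, but it is exactly the ``global assembly'' bookkeeping you flag as the delicate step.
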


\begin{proof}
Let $\bbS = (S,\si,V)$ be a pointed $\funT$-model, let $s_{I} \in S$, and let
$f$ be a winning strategy for $\eloi$ in the acceptance game $\AG \isdef
\AG(\bbA,\bbS)@(\ai,s_{I})$.
By Fact~\ref{f:pdpg} we may without loss of generality assume that $f$ is
positional.
We will construct (i) a pointed $\funT$-model $(X,\xi,W,x_{I})$,
(ii) a tree $(X,R)$ which is rooted at $x_{I}$ (in the sense that for every $t
\in X$ there is a unique $R$-path from $x_{I}$ to $x$) and supports $(X,\xi)$
(in the sense that $\xi(x) \in \funT R(x)$, for every $x \in X$), and
(iii) a morphism $h: (X,\xi,W) \to (S,\si,V)$ such that $h(x_{I}) = s_{I}$.
In addition $(X,\xi,W,x_{I})$ will be strongly accepted by $\bbA$.

More in detail, we will construct all of the above step by step, and by a
simultaneous induction we will associate, with each
$t \in X$ of depth $k$,
a (partial) $f$-guided match $\Si_{t}$ of length $2k +1$; we will denote the
final position of $\Si_{t}$ as $(a_{t},s_{t})$, and will define $h(t) \isdef
s_{t}$.

For the base step of the construction we take some fresh object $x_{I}$, we
define $\Si_{x_{I}}$ to be the match consisting of the single position $(\ai,
s_{I})$, and set $h(x_{I}) \isdef s_{I}$.

Inductively assume that we are dealing with a node $t \in X$ of depth $k$,
and that $\Si_{t}$, $a_{t}$ and $s_{t}$ are as described above.
Since $\Si_{t}$ is an $f$-guided match and $f$ is a winning strategy in $\AG$,
the pair $(a_{t},s_{t})$ is a winning position for $\eloi$ in $\AG$.
In particular, the marking $m_{t}: S \to \funP A$ prescribed by $f$ at this
position satisfies
\[
S,\si(s_{t}),m_{t} \satone \Th(a_{t},V^{\flat}(s_{t})).
\]
Now by disjunctiveness of the automaton $\bbA$ there is a set $R(t)$ (that we
may take to consist of fresh objects), an object $\xi(t) \in \funT R(t)$,
an $A$-marking $m'_{t}: R(t) \to \funP A$ and a map $h_{t}: R(t) \to S$, such
that\footnote{%
   To simplify our construction, we strengthen clause (3) in
   Definition~\ref{d:dj}.
   This is not without loss of generality, but we may take care of the general
   case using a routine extension of the present proof.}
$\sz{m(u)} = 1$ and $m'_{t}(u) \sse m_{t}(h_{t}(u))$ for all $u\in R(t)$,
$(\funT h_{t}) \xi(t) = \si(s_{t})$ and
\[
R(t),\xi(t),m'_{t} \satone \Th(a_{t},V^{\flat}(s_{t})).
\]
Let $a_{u}$ be the unique object such that $m'_{t}(u) = \{ a_{u}\}$, define
$s_{u} \isdef h_{t}(u)$, and let $\Si_{u}$ be the match $\Si_{u} \isdef \Si_{t}
\cdot m_{t} \cdot (a_{u},s_{u})$.

With $(X,R,x_{I})$ the tree constructed in this way, and observing that $\xi(t)
\in R(t) \sse X$, we let $\xi$ be the coalgebra map on $X$.
Taking $h: X \to S$ to be the union $\{(x_{I},s_{I})\} \cup \{ h_{t} \mid t \in
X \}$, we can easily verify that $h$ is a surjective coalgebra morphism.
Finally, we define the valuation $W: \Prop \to \funP X$ by putting $W(p) \isdef
\{ x \in X \mid hx \in V(p) \}$.

It remains to show that $\bbA$ strongly accepts the pointed $\funT$-model
$(\bbX,x_{I})$, with $\bbX = (X,\xi,W)$; for this purpose consider the following
(positional) strategy $f'$ for $\eloi$ in $\AG(\bbA,\bbX)$.
At a position $(a,t) \in A \times X$ such that $a \neq a_{t}$ $\eloi$ moves
randomly (we may show that such a position will not occur);
on the other hand, at a position of the form $(a_{t},t)$, the move suggested by
the strategy $f'$ is the marking $m'_{t}$.
Then it is obvious that $f'$ is a dividing strategy; to see that $f'$ is
winning from starting position $(\ai,x_{I})$, consider an infinite match
$\Si$ of $\AG(\bbA,\bbX)@(\ai,x_{I})$ (finite matches are left to the reader).
It is not hard to see that $\Si$ must be of the form
$\Si = (a_{0},x_{0})m'_{x_{0}}(a_{1},x_{1})m'_{x_{1}}\cdots$, where
$\Si^{-} = (a_{0},h(x_{0}))m_{x_{0}}(a_{1},h(x_{1}))m_{x_{1}}\cdots$
is an $f$-guided match of $\AG$.
From this observation it is immediate that $\Si$ is won by $\eloi$.
\end{proof}


We now come to our main application of disjunctive bases, and fill in the main
missing piece in the theory of coalgebraic automata based on predicate liftings:
a simulation theorem.
As mentioned in the introduction, Janin \& Walukiewicz' simulation
theorem~\cite{jani:auto95} is one of the key tools in the theory of the standard
modal $\mu$-calculus, see for instance~\cite{vene:lect12} for many examples.
At the coalgebraic level of generality, a first simulation theorem was proved
by Kupke \& Venema~\cite{kupk:coal08} for $\nb$-based automata.

\begin{thm}[Simulation]
Let $\La$ be a monotone modal signature for the set functor $\funT$ and
assume that $\La$ has a disjunctive basis $\D$.
Then there is an effective construction transforming an arbitrary $\La$-automaton
$\bbA$ into an equivalent $\D$-automaton $\simof{\bbA}$.
\end{thm}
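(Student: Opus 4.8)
The plan is to follow the classical Janin--Walukiewicz strategy for the simulation theorem, adapted to the coalgebraic setting using the machinery already set up. The overall architecture has two stages: first, turn an arbitrary $\La$-automaton $\bbA$ into an equivalent \emph{disjunctive} automaton $\bbB$, and second, use Proposition~\ref{p:db} to turn $\bbB$ into an equivalent $\D$-automaton $\simof{\bbA}$. The second stage is essentially bookkeeping: by Proposition~\ref{p:db}, each transition formula $\Th_{\bbB}(c,a) \in \MLoLa^{+}(A)$ is one-step equivalent to $\delta[\land_A]$ for some $\delta \in \D(\funP(A))$; we then take $\simof{\bbA}$ to have state set $\funP(A)$, with transitions obtained by composing these $\delta$'s with the substitution and tracking priorities appropriately (giving a set $B \in \funP(A)$ the maximal priority among its elements, or handling this via a product construction with a deterministic parity automaton if a naive priority assignment does not preserve the acceptance condition). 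The equivalence of $\bbB$ and $\simof{\bbA}$ follows because $\delta[\land_A]$ and $\delta$ describe the same winning moves once the markings are related by the substitution $\land_A$, using Proposition~\ref{p:1subst}.

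The heart of the matter is the first stage. Here I would do a \emph{powerset-style construction} on $\bbA = (A,\Th,\Om,\ai)$, guided by the fact that disjunctive bases are closed under finite disjunctions and admit a binary distributive law. The idea is that a state of $\bbB$ should be a \emph{macro-state}, i.e., (roughly) a subset of $A$, or more precisely a subset together with some bookkeeping data to recover the parity condition (a ``latest appearance record'' or the Safra/Muller--Schupp-style trees, depending on how carefully one wants to optimize; for a clean existence proof one can use the determinization of the $\omega$-automaton on the ``streams of macro-states''). For the transition map: given a macro-state and a colour $c$, one takes the conjunction $\bigwedge_{a \in \text{macrostate}} \Th(c,a)$, which lives in $\MLoLa^{+}(A)$; iterating the binary distributive law (condition (3) of Definition~\ref{d:disbas}) one rewrites this conjunction, up to substitution, as a disjunctive formula over a larger variable set built from tuples of old states, and condition (1) lets one take disjunctions freely; the new variables (tuples/subsets of $A$) become the successor macro-states. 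One must also invoke condition (2) (distributivity over $\La$) at the base, to bring arbitrary $\MLoLa^{+}$-formulas into the shape handled by the basis. The resulting transition formulas are disjunctive by construction, so $\bbB$ is a disjunctive automaton.

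The correctness argument for stage one is where Proposition~\ref{p:dj1} does the real work. Soundness (if $\bbB$ accepts $(\bbS,s)$ then so does $\bbA$) is the easier direction: a winning strategy for $\eloi$ in $\AG(\bbB,\bbS)$ can be ``unpacked'' into a strategy in $\AG(\bbA,\bbS)$, since each move of $\bbB$ at a macro-state is assembled from moves of $\bbA$ at the constituent states, and the priority bookkeeping was designed so that a winning stream of macro-states projects to winning streams for each thread through $\bbA$ --- this is the standard argument that the parity condition on macro-states correctly tracks all $\bbA$-threads. For completeness, suppose $\bbA$ accepts $(\bbS,s)$. By Proposition~\ref{p:dj1} --- no wait, that proposition is about disjunctive automata; for $\bbA$ arbitrary one instead argues directly: a winning $\eloi$-strategy in $\AG(\bbA,\bbS)$ determines, at each reachable position, a marking; one bundles together, at each point $t$ of $\bbS$, the \emph{set} of $\bbA$-states that $\eloi$'s strategy can reach at $t$, and shows this bundled data yields a marking witnessing the disjunctive macro-transition, hence an $\eloi$-strategy in $\AG(\bbB,\bbS)$; the parity condition is handled by the determinization bookkeeping (König's lemma / the standard argument that if every thread is winning then the macro-run is winning). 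I expect \textbf{the priority/parity bookkeeping} to be the main obstacle: ensuring that the macro-state construction --- which must simultaneously perform disjunctive normalization \emph{and} correctly aggregate infinitely many parity conditions into one --- is carried out so that acceptance is exactly preserved in both directions. This is exactly the point at which the classical proof is most delicate, and the coalgebraic generalization inherits that delicacy; the novel ingredients (conditions (1)--(3) of a disjunctive basis) only address the one-step normalization, not the parity aggregation, so that part proceeds as in~\cite{jani:auto95,vene:auto06}.
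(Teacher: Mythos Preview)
Your overall architecture matches the paper's: a macro-state construction yielding a $\D$-automaton with a non-parity ($\omega$-regular) acceptance condition, followed by a product with a deterministic parity stream automaton recognising that condition. Two points deserve correction, one cosmetic and one a genuine gap.

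First, the two-stage decomposition (arbitrary $\to$ disjunctive $\to$ $\D$-automaton) is redundant. In your stage one you already invoke conditions (2) and (3) of the disjunctive basis, and what these produce is a formula \emph{in $\D$} over the macro-state set --- so stage one already yields a $\D$-automaton. Your proposed stage two (``use Proposition~\ref{p:db}\ldots take $\simof{\bbA}$ to have state set $\funP(A)$'') is not bookkeeping: applying Proposition~\ref{p:db} changes the variable set from $A$ to $\funP A$, which would force \emph{another} macro-state construction, not a trivial relabelling. The paper simply goes straight to $\D$ in one step.

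Second --- and this is the real issue --- the paper takes macro-states to be \emph{binary relations} on $A$ (i.e., elements of $\shA = \funP(A\times A)$), not subsets. Concretely: for each $a$ in the range of the current relation $R$, one first applies the tagging substitution $\theta_a : b \mapsto (a,b)$ to $\Th(a,c)$, conjoins the results over $a \in \Ran R$, and then uses the disjunctive basis to rewrite this conjunction as some $\sh{\Th}(R,c)[\wedge_{A\times A}]$ with $\sh{\Th}(R,c) \in \D(\shA)$. The reason for relations rather than subsets is that a relation records ``from state $a$ we moved to state $b$'', so that the set of \emph{traces} (infinite $A$-sequences consistent with the stream of relations) can be reconstructed from the macro-run; the pre-simulation's acceptance condition is precisely ``no trace has odd maximal recurring priority''. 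With bare subsets of $A$ as macro-states, the stream of macro-states does not carry enough information to determine the traces, so your proposed ``determinization of the $\omega$-automaton on the streams of macro-states'' cannot be carried out --- there is no $\omega$-regular language on $(\funP A)^{\omega}$ that captures the right condition. Once relations are used, the no-bad-trace language is $\omega$-regular and the product with a deterministic parity automaton finishes the job, exactly as you say.
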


\begin{proof}
Let $\bbA = (A,\Th,\Om,
\ai)$ be a $\La$-automaton.
Our definition of $\simof{\bbA}$ is rather standard~\cite{vene:lect12},
so we will confine ourselves to the definitions.
The construction takes place in two steps, a `pre-simulation' step that produces
a disjunctive automaton $\mathsf{pre}(\bbA)$ with a non-parity acceptance
condition, and a second `synchronization' step that turns this nonstandard
disjunctive automaton into a standard one.
Both steps of the construction involve a `change of base' in the sense that
we obtain the transition map of the new automaton via a substitution relating
its carrier to the carrier of the old automaton.

We define the pre-simulation automaton of $\bbA$ as the structure
\[
    \mathsf{pre}(\bbA) \isdef (\shA,\sh{\Th},\NBT_{\bbA},R_{I}),
\]
where
the carrier of the pre-simulation $\mathsf{pre}(\bbA)$ of $\bbA$ is the
collection $\shA$ of \emph{binary} relations over $A$, and the initial state
$R_I$ is the singleton pair $\{(\ai,\ai)\}$.
For its transition function, first define the map $\Th^{\star}: A \times
\funP\Prop \to \MLoLa^{+}(A\times A)$ by putting, for $a \in A$ and
$c \in \funP\Prop$:
\[
\Th^{\star}(a,c) \isdef \Th(a,c)[\theta_{a}],
\]
where $\theta_{a}: A \to \Latt(A \times A)$ is the
\emph{tagging} substitution given by $\theta_{a}: b \mapsto (a,b)$.
Now, given a state $R \in \shA$ and color $c \in \funP\Prop$, take
$\sh{\Th}(R,c)$ to be an arbitrary but fixed formula in $\D(\shA)$ such that
\[
\sh{\Th}(R,c)[\land_{A\times A}] \equiv \bw_{a \in \Ran R}
\Th^{\star}(a,c).
\]
Clearly such a formula exists by our assumption on $\D$ being a disjunctive
basis for $\La$.

Turning to the \emph{acceptance condition},
define a \emph{trace} on an $\shA$-stream $\rho = {(R_{n})}_{0 \leq n<\om}$ to be
an $A$-stream $\al = {(a_{n})}_{0\leq n < \om}$ with $R_{i}a_{i}a_{i+1}$ for
all $i \leq 0$.
Calling such a trace $\al$ \emph{bad} if $\max\{ \Om(a) \mid a \text{ occurs
infinitely often in } \al \}$ is odd, we obtain the acceptance condition of
the automaton $\mathsf{pre}(\bbA)$ as the set $\NBT_{\bbA} \sse {(\shA)}^{\om}$
of $\shA$-streams that contain no bad trace.

Finally we produce the simulation of $\bbA$ by forming a certain kind of product
of $\mathsf{pre}(\bbA)$ with $\mathbb{Z}$, where $\mathbb{Z} = (Z,\delta,\Om',
z_I)$ is some deterministic parity stream automaton recognizing the
$\omega$-regular language $\NBT_{\bbA}$.
More precisely, we define $\simof{\bbA} \isdef (\shA \times Z,\Th'',\Om'',
(R_{I},z_I))$ where:
\begin{itemize}
    \item $\Th''(R,z) \isdef \sh{\Th}(R)[(Q,\delta(R,z)/Q) \mid Q \in \shA]$ and
    \item $\Om''(R,z) \isdef \Om'(z)$.
\end{itemize}
The equivalence of $\bbA$ and $\simof{\bbA}$ can be proved by
relatively standard means~\cite{vene:lect12}.
\end{proof}

\section{Lyndon theorems}%
\label{s:lyn}

Lyndon's classical theorem in model theory provides a syntactic characterization
of a semantic property, showing that a formula is \emph{monotone} in a predicate
$P$ if and only if it is equivalent to a formula in which $P$ occurs only
\emph{positively}.
A version of this result for the modal $\mu$-calculus was proved by d'Agostino
and Hollenberg in~\cite{dago:logi00}.
Here, we show that their result holds for any $\mu$-calculus based on a
signature that admits a disjunctive basis.

We first turn to the one-step version of the Lyndon Theorem, for which we need
the following definition; we also recall the substitutions $\land_{A}$ and
$\lor_{A}$ defined in Section~\ref{s:prel}.

\begin{definition}
A \emph{propositional $A$-type} is a subset of $A$.
For $B \subseteq A$ and $a \in A$, the formulas $\tau_B$ and $\tau_B^{a+}$ are
defined by:
\[\begin{array}{lll}
   \tau_B      & := & \bigwedge B \wedge
      \bigwedge \{\neg a \mid a \in A \setminus B\}
\\ \tau_B^{a+} & := &\bigwedge B \wedge
\bigwedge \{\neg b \mid b \in A \setminus (B\cup \{a\})\}
\end{array}\]
We let $\tau$ and $\tau^{a+}$ denote the maps $B \mapsto \tau_B$ and $B \mapsto
\tau_B^{a+}$, respectively.
\end{definition}

\begin{prop}%
\label{p:intermediate-normal-form}
Suppose $\La$ admits a disjunctive basis. Then for any formula  $\alpha$ in
$\MLoneLa(A)$ there is a one-step equivalent formula of the form
$\delta[\vee_{\funP A}][\tau]$ for some $\delta \in \D(\funP \funP A)$.
\end{prop}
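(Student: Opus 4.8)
The plan is to reduce the desired statement to Proposition~\ref{p:db} by a preliminary manipulation that replaces negative propositional literals with fresh positive variables. The key observation is that every formula $\alpha \in \MLoneLa(A)$ can be rewritten, modulo propositional reasoning below the modalities, into a positive modal formula over a suitably enriched set of variables, namely one where each original variable $a \in A$ is replaced by a description of a propositional $A$-type.

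First I would introduce, for the fixed finite set $A$, the set of propositional types: a type is just a subset $B \subseteq A$, and the formula $\tau_B$ pins down exactly which variables from $A$ hold. Take as new variable set $\funP A$ itself, with the substitution $\tau : \funP A \to \Bool(A)$ sending $B$ to $\tau_B$. The crucial point is that, since the types $\{\tau_B \mid B \subseteq A\}$ are mutually exclusive and jointly exhaustive, any Boolean formula $\pi \in \Bool(A)$ is propositionally equivalent to $\bigvee \{ \tau_B \mid B \models \pi \}$, which has the form $(\bigvee_{\funP A} C)[\tau]$ for $C = \{ B \subseteq A \mid B \models \pi \} \in \funP\funP A$. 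Pushing this rewriting through the modalities of $\alpha$ — replacing each subformula $\hs_\la(\pi_1,\dots,\pi_n)$ by $\hs_\la$ applied to the corresponding disjunctions of type-formulas — produces a formula $\alpha'$ over the variable set $\funP A$ that is \emph{positive in its modal structure}; more precisely $\alpha \oneseq \alpha''[\vee_{\funP A}][\tau]$ where $\alpha'' \in \MLoneLa^+(\funP\funP A)$, since at the outer level the only operations applied to modal subformulas in $\alpha$ are $\wedge, \vee, \neg$, and the negations can be driven inward (using monotonicity of the liftings in $\La$, or rather the presence of Boolean duals) until they only touch the Boolean arguments, where we have just absorbed them into the type descriptions. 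Here I use Proposition~\ref{p:1subst} to justify that the semantic effect of the substitution $[\tau]$ is exactly the reinterpretation of variables claimed.

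Next, having $\alpha'' \in \MLoneLa^+(\funP\funP A)$, I would apply Proposition~\ref{p:db} (the normal form theorem for disjunctive bases) directly, with the base set of variables taken to be $\funP A$. This yields a formula $\delta \in \D(\funP(\funP A)) = \D(\funP\funP A)$ such that $\alpha'' \oneseq \delta[\land_{\funP A}]$. Substituting back, $\alpha \oneseq \delta[\land_{\funP A}][\vee_{\funP A}][\tau]$. A small bookkeeping check is needed here: the statement as given reads $\delta[\vee_{\funP A}][\tau]$ rather than $\delta[\land_{\funP A}][\vee_{\funP A}][\tau]$, so I would either (i) absorb $\land_{\funP A}$ into the first rewriting step by choosing the conversion of Boolean formulas to output elements of $\funP\funP A$ that are then read conjunctively — i.e.\ arrange that $\alpha'' = \alpha'''[\land_{\funP A}]$ already — or (ii) observe that the two displayed forms are reconciled by composing the substitutions and relabelling; in the paper's conventions $\delta \in \D(\funP\funP A)$ takes variables that are subsets of $\funP A$, and $[\vee_{\funP A}]$ followed by $[\tau]$ sends such a subset $C$ to $\bigvee_{B \in C} \tau_B$, which is exactly the type-disjunction we want each original modal argument to denote. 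I would make this matching explicit in the write-up so that the substitutions compose correctly.

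The main obstacle I anticipate is the first step: verifying carefully that pushing negations inward through $\alpha$ genuinely lands us in $\MLoneLa^+$ and not merely in $\MLoneLa$. This works precisely because $\La$ is closed under Boolean duals (part of being a monotone modal signature), so $\neg \hs_\la(\overline{\pi})$ rewrites to $\hs_{\la^\partial}(\neg\pi_1,\dots,\neg\pi_n)$ with $\la^\partial \in \La$, and the inner negations are then harmless since they act on Boolean formulas which we are going to replace by type-disjunctions anyway. Keeping track of this together with the substitution bookkeeping — ensuring $\funP A$ is finite so that $\tau$ and the various $\bigvee, \bigwedge$ over $\funP A$ are well-defined finitary substitutions, and that the composition of $[\tau]$, $[\vee_{\funP A}]$ and (implicitly) $[\land_{\funP A}]$ yields exactly the claimed interpretation — is the only real work; everything else is an immediate appeal to Propositions~\ref{p:1subst} and~\ref{p:db}.
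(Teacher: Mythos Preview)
Your ingredients are the right ones and match the paper's: push negations through using Boolean duals, use the disjunctive basis, and pass to propositional $A$-types via $\tau$. The difference is the order of operations, and your ordering creates the bookkeeping problem you flag but do not actually resolve.

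You first convert to types and then invoke Proposition~\ref{p:db} as a black box. But note the type mismatch: if $\alpha'' \in \MLoLa^{+}(\funP\funP A)$ as you write, then Proposition~\ref{p:db} (with base set $\funP\funP A$, not $\funP A$) yields $\delta \in \D(\funP\funP\funP A)$, one level too high; if instead you mean $\alpha' \in \MLoLa^{+}(\funP A)$ with $\alpha \equiv^{1} \alpha'[\tau]$ (which is what your rewriting really produces --- each Boolean argument $\pi$ becomes the lattice term $\bigvee\{B \in \funP A \mid B \models \pi\}$), then Proposition~\ref{p:db} gives $\delta \in \D(\funP\funP A)$ but with $\alpha' \equiv^{1} \delta[\wedge_{\funP A}]$, hence $\alpha \equiv^{1} \delta[\wedge_{\funP A}][\tau]$. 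Your options (i) and (ii) for turning this $\wedge$ into the required $\vee$ are not arguments; one needs the observation that $\bigwedge_{B \in C}\tau_{B}$ is $\bot$ unless $|C|\leq 1$, together with closure of $\D$ under variable renaming, to pass to an equivalent $\delta'[\vee_{\funP A}][\tau]$.

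The paper avoids this detour by reversing the order: it first applies the disjunctive-basis axioms \emph{directly} (distributivity over $\La$ and the binary distributive law, not Proposition~\ref{p:db}) to obtain $\delta[\sigma]$ with $\delta \in \D(B)$ and a \emph{Boolean} substitution $\sigma: B \to \Bool(A)$, and only then rewrites each $\sigma_{b}$ as a disjunction of type formulas $\tau_{B'}$. That immediately yields the shape $\delta'[\vee_{\funP A}][\tau]$ with $\delta' \in \D(\funP\funP A)$, with no intervening $[\wedge_{\funP A}]$ to eliminate.
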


\begin{proof}
We first check that everything is correctly typed: note that we have
$\vee_{\funP A} : \funP\funP A \to \Bool(\funP A)$ and so $\delta[\vee_{\funP A}] \in
\MLoneLa(\funP A)$, and $\tau_{\funP A} : \funP{A} \to \Bool (A)$.
So $\delta[\vee_{\funP A}][\tau] \in \MLoneLa(A)$, as required.

For the normal form proof, first note that we can use boolean duals of the
modal operators to push negations down to the zero-step level.
Putting the resulting formula in disjunctive normal form, we obtain a
disjunction of formulas of the form
$\hs_{\lambda_1} \ol{\pi_1} \wedge \dots \wedge \hs_{\lambda_k} \ol{\pi_k}$,
where all the $\pi$-formulas are in $\Bool(A)$.
Repeatedly applying the distributivity of $\D$ over $\Lambda$ and the
distributive law for $\D$, we can rewrite each such disjunct as a formula of the
form $\delta[\sigma]$ where, for some set $B$, $\delta \in \D(B)$ and
$\sigma: B \to \Bool(A)$ is some propositional substitution.
Now, just apply propositional logic to rewrite each formula $\si_{b}$ as a
disjunction of formulas in $\tau[\funP A]$, and we are done.
\end{proof}

\begin{thm}[One-step Lyndon theorem]%
\label{t:onestep-lyndon}
Let $\La$ be a monotone modal signature for the set functor $\funT$ and
assume that $\La$ has a disjunctive basis.
Any $\alpha \in \MLoneLa(A)$, monotone in the variable $a \in A$,
is one-step equivalent to some formula in $\MLoneLa(A)$, which is positive in $a$.
\end{thm}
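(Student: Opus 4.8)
The plan is to take the disjunctive normal form supplied by Proposition~\ref{p:intermediate-normal-form} and ``open up'' its propositional types in the direction of the monotone variable $a$. So write $\alpha \equiv^{1} \beta$ where $\beta := \delta[\vee_{\funP A}][\tau]$ for some $\delta \in \D(\funP\funP A)$, and let $\tau^{a+}$ denote the substitution $B \mapsto \tau_B^{a+}$; the candidate positive formula is $\beta' := \delta[\vee_{\funP A}][\tau^{a+}] \in \MLoneLa(A)$. Since $\tau_B^{a+}$ is just $\tau_B$ with the conjunct $\neg a$ deleted (and equals $\tau_B$ when $a \in B$), the variable $a$ occurs in each $\tau_B^{a+}$ only positively, if at all; as $\delta$, hence $\delta[\vee_{\funP A}]$, is a positive one-step formula, this makes $\beta'$ positive in $a$. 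One inclusion of the target equivalence $\alpha \equiv^1 \beta'$ is then routine: $\mngzero{\tau_B}_m \sse \mngzero{\tau_B^{a+}}_m$ for every marking $m$ and every $B$, and $\delta[\vee_{\funP A}]$ is monotone in its variables, so two applications of Proposition~\ref{p:1subst} upgrade any one-step model of $\beta$ to a model of $\beta'$; with $\alpha \equiv^{1} \beta$ this shows that every one-step model of $\alpha$ is a model of $\beta'$.

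For the converse --- the substantial direction, where disjunctiveness of $\delta$ and monotonicity of $\alpha$ in $a$ are both used --- take a one-step model $(S,\xi,m)$ with $S,\xi,m \satone \beta'$. Unfolding the two substitutions via Proposition~\ref{p:1subst} yields $S,\xi,\bar m \satone \delta$ for the $\funP\funP A$-marking $\bar m(s) = \{\mathcal{B} \in \funP\funP A \mid \exists B \in \mathcal{B}\colon s \in \mngzero{\tau_B^{a+}}_m\}$. As $\delta \in \D(\funP\funP A)$ is disjunctive (Definition~\ref{d:disfma}), we obtain a one-step frame morphism $f : (S',\xi') \to (S,\xi)$ and a marking $\bar n : S' \to \funP(\funP\funP A)$ with $S',\xi',\bar n \satone \delta$, with $\bar n(s') \sse \bar m(f(s'))$, and with $\sz{\bar n(s')} \leq 1$ for all $s' \in S'$.

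The key step --- and the point where I expect the real work to be --- is to distil from $\bar n$ a well-behaved ordinary marking $n : S' \to \funP A$. Whenever $\bar n(s') = \{\mathcal{C}_{s'}\}$ is nonempty, use $\mathcal{C}_{s'} \in \bar m(f(s'))$ to choose a witness $B_{s'} \in \mathcal{C}_{s'}$ with $f(s') \in \mngzero{\tau_{B_{s'}}^{a+}}_m$ and put $n(s') := B_{s'}$; put $n(s') := m(f(s'))$ when $\bar n(s') = \nada$. Since $f(s') \in \mngzero{\tau_{B_{s'}}^{a+}}_m$ means exactly $B_{s'} \sse m(f(s')) \sse B_{s'} \cup \{a\}$, one reads off in either case the crucial sandwich $m(f(s')) \setminus \{a\} \sse n(s') \sse m(f(s'))$ for all $s' \in S'$.

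The rest is bookkeeping with monotonicity and invariance. First, $n(s') = B_{s'} \in \mathcal{C}_{s'}$ shows that the $\funP\funP A$-marking induced on $S'$ by $n$ through $[\vee_{\funP A}][\tau]$ pointwise contains $\bar n$, so monotonicity of $\delta$ together with Proposition~\ref{p:1subst} gives $S',\xi',n \satone \beta$, hence $S',\xi',n \satone \alpha$. Next, the sandwich says that $n$ agrees with $m \circ f$ on every variable except $a$, where the extension of $a$ under $n$ is the smaller; so monotonicity of $\alpha$ in $a$ yields $S',\xi',m\circ f \satone \alpha$. Finally $f$ is a morphism of one-step models $(S',\xi',m\circ f) \to (S,\xi,m)$, so Proposition~\ref{p:1invar} transfers this to $S,\xi,m \satone \alpha$, completing the equivalence $\alpha \equiv^{1} \beta'$. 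The only genuinely delicate point in the whole argument is choosing the witnesses and verifying the sandwich inequalities; everything else is the monotonicity/invariance plumbing already available from the preliminaries.
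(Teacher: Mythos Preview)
Your proof is correct and follows the same overall architecture as the paper's: the same candidate $\beta' = \delta[\vee_{\funP A}][\tau^{a+}]$, the same easy direction via $\mngzero{\tau_B}_m \sse \mngzero{\tau_B^{a+}}_m$, and for the hard direction the same pattern of (i) applying disjunctiveness to get a separating cover, (ii) defining from it an $A$-marking that realises the types exactly, (iii) using the ``sandwich'' with $m\circ f$ to invoke monotonicity in $a$, and (iv) transferring back along $f$ via Proposition~\ref{p:1invar}.

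The one genuine technical difference is \emph{where} you apply disjunctiveness. The paper first observes that $\delta[\vee_{\funP A}]$ is itself disjunctive (as a formula over $\funP A$) and applies Definition~\ref{d:disfma} to it, obtaining a $\funP A$-marking $m'$ with singleton values directly; it then sets $m''(u) \isdef B$ when $m'(u) = \{B\}$. You instead apply disjunctiveness to $\delta$ itself (over $\funP\funP A$), obtaining a $\funP\funP A$-marking $\bar n$ with singleton values $\{\mathcal{C}_{s'}\}$, and then need an additional choice of a witness $B_{s'} \in \mathcal{C}_{s'}$. Your route avoids the side lemma ``substitution by disjunctions preserves disjunctiveness'' that the paper invokes without proof; the paper's route avoids your extra layer of $\funP$ and the choice step. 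Both are perfectly sound and the sandwich verification is identical in spirit (the paper phrases it via a relation $\preceq_a$ on $\funP A$, you phrase it as $m(f(s'))\setminus\{a\} \sse n(s') \sse m(f(s'))$, which is exactly $n(s') \preceq_a m(f(s'))$).
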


\begin{proof}
By Proposition~\ref{p:intermediate-normal-form}, we can assume that $\alpha$ is
of the form  $\delta[\vee_{\funP A}][\tau]$ for some $\delta \in
\D(\funP A)$.
Clearly it suffices to show that:
\[
    \delta[\vee_{\funP A}][\tau] \oneseq \delta[\vee_{\funP A}][\tau^{a+}]
\]
One direction, from left to right, is easy since $\delta[\vee_{\funP A}]$ is a
monotone formula in $\MLoneLa(\funP A)$, and $\mngzero{\tau_B}_m \subseteq
\mngzero{\tau_B^{a+}}_m$ for each $B \subseteq A$ and each marking $m : X \to
\funP A$.

For the converse direction, suppose $X,\xi,m \satone
\delta[\vee_{\funP A}][\tau^{a+}]$.
We define a $\funP A$-marking $m_0 : X \to \funP \funP A$ by setting $m_0(u) \isdef
\{B \subseteq A \mid B \preceq_a m(u)\}$, where the relation $\preceq_a$ over
$\funP A$ is defined by $B \preceq_a B'$ iff $B \setminus \{a\} = B' \setminus
\{a\}$, and $a \notin B$ or $a \in B'$.
We claim that $X,\xi,m_0 \satone  \delta[\vee_{\funP A}]$.
Since $\delta[\vee_{\funP A}]$ is a monotone formula, it suffices to check that
$\mngzero{\tau_B^{a+}}_m \subseteq \mngzero{B}_{m_0}$ for each $B \subseteq A$.
This follows by just unfolding definitions.

Since $\delta$ was disjunctive, so is $\delta[\vee_{\funP A}]$, as an easy
argument will reveal.
So we now find a one-step frame morphism $f : (X',\xi') \to (X,\xi)$,
together with a marking $m' : X' \to \funP \funP A$ such that $\sz{m'(u)}
\leq 1$ and $m'(u) \subseteq m_0(f(u))$ for all $u \in X'$, and such that
$X',\xi',m' \satone \delta[\vee_{\funP A}]$.
We define a new $A$-marking $m'' : X' \to \funP A$ on $X'$ by setting $m''(u) = B$,
if $m'(u) = \{B\}$, and $m''(u) = m (f(u))$ if $m'(u) = \emptyset$.
Note that, for each $B \subseteq A$, we have $\mngzero{B}_{m'} \subseteq
\mngzero{\tau_B}_{m''}$, so by monotonicity of $\delta[\vee_{\funP A}]$ we get
$X',\xi',m'' \satone \delta[\vee_{\funP A}][\tau]$.

Comparing the markings $m''$ and $m \circ f$, we claim that $m''(u) \preceq_a
m(f(u))$ for all $u \in X'$.
If $m'(u) = \emptyset$, then in fact  $m''(u) = m(f(u))$ by definition of $m''$.
If $m'(u) = \{B\}$, then $m''(u) = B \in m'(u) \subseteq m_0(f(u))$, hence $B
\preceq_a m(f(u))$ by definition of $m_0$.
Since $\delta[\vee_{\funP A}][\tau]$ was monotone with respect to the variable
$a$ it follows  that $X',\xi',m \circ f \satone \delta[\vee_{\funP A}][\tau]$
and so $X,\xi,m \satone \delta[\vee_{\funP A}][\tau]$ by naturality, thus
completing the proof of the theorem.
\end{proof}

A useful corollary to this theorem is that, given an expressively complete set
$\Lambda$ of predicate liftings for a functor $\funT$, the language
$\muML_\Lambda$ has the same expressive power as the full language $\muML_\funT$.
At first glance this proposition may seem trivial, but it is important to see
that it is not: given a formula $\varphi$ of $\muML_\funT$, a naive definition
of an equivalent formula in $\muML_\Lambda$ would be to apply expressive
completeness to simply replace each subformula of the form
$\hs_\lambda(\psi_1,\dots,\psi_n)$ with an equivalent one-step formula $\alpha$
over $\{\psi_1,\dots,\psi_n\}$, using only predicate liftings in $\Lambda$.
But if this subformula contains bound fixpoint variables, these must still appear
positively in $\alpha$ in order for the translation to even produce a
grammatically correct formula!
We need the stronger condition of \emph{Lyndon completeness} for $\Lambda$.
Generally, we have no guarantee that expressive completeness entails Lyndon
completeness, but in the presence of a disjunctive basis, we do: this is a
consequence of Theorem~\ref{t:onestep-lyndon}.

\begin{cor}%
\label{c:lyn}
Suppose $\Lambda$ is an expressively complete set of monotone predicate liftings
for $\funT$.
If $\Lambda$ admits a disjunctive basis, then $\Lambda$ is Lyndon complete and hence $\muML_\Lambda \equiv \muML_\funT$.
\end{cor}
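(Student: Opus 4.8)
The plan is to first boost expressive completeness to \emph{Lyndon completeness} using the one-step Lyndon theorem, and then to translate $\muML_\funT$-formulas into $\muML_\Lambda$-formulas by a polarity-sensitive structural induction.

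\emph{Step 1: Lyndon completeness.} Let $\la$ be an arbitrary monotone $n$-place predicate lifting and write $\ol{a} = (a_1,\dots,a_n)$. Expressive completeness yields a formula $\al \in \MLoLa(\{a_1,\dots,a_n\})$ with $\al \oneseq \hs_\la\ol{a}$; since $\la$ is monotone, $\hs_\la\ol{a}$, and therefore $\al$, is monotone in each $a_i$. I would then invoke the following set-indexed refinement of Theorem~\ref{t:onestep-lyndon}: a one-step formula monotone in every variable of a finite set $W \sse A$ is one-step equivalent to a formula positive in every variable of $W$. This is obtained by rerunning the proof of Theorem~\ref{t:onestep-lyndon} with $W$ in place of the single variable $a$: replace $\tau_B^{a+}$ by $\tau_B^{W+} := \bw B \wedge \bw\{\neg b \mid b \in A\setminus(B\cup W)\}$ and $\preceq_a$ by $\preceq_W$, where $B \preceq_W B'$ iff $B\setminus W = B'\setminus W$ and $B\cap W \sse B'\cap W$. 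The output $\delta[\vee_{\funP A}][\tau^{W+}]$ is positive in every variable of $W$, as such variables never occur negated, and the one point in the original argument where monotonicity in $a$ was used is now covered by monotonicity in the (finitely many) variables of $W$. Applying this with $W = \{a_1,\dots,a_n\}$ produces a positive $\al' \in \MLoLa^{+}(\{a_1,\dots,a_n\})$ with $\al' \oneseq \hs_\la\ol{a}$, so $\Lambda$ is Lyndon complete. Note that iterating the single-variable Theorem~\ref{t:onestep-lyndon} would not suffice: removing negative occurrences of one variable can reintroduce negative occurrences of variables treated earlier.

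\emph{Step 2: the translation.} The inclusion $\muML_\Lambda \sse \muML_\funT$ is immediate, since every lifting in $\Lambda$ is a monotone predicate lifting for $\funT$. For the converse I would prove, by induction on $\phi \in \muML_\funT$, that there is $\phi^{*} \in \muML_\Lambda$ with $\phi \equiv \phi^{*}$, $\FV{\phi^{*}} \sse \FV{\phi}$, and such that for every variable $q$, if every free occurrence of $q$ in $\phi$ is positive (respectively negative) then the same holds in $\phi^{*}$. The propositional cases are immediate from compositionality of the semantics, with ``positive'' and ``negative'' swapping under $\neg$; the fixpoint case $\mu q.\psi$ uses the polarity clause of the induction hypothesis to see that $q$ remains positive in $\psi^{*}$ (so $\mu q.\psi^{*}$ is well-formed), together with the fact that $\psi \equiv \psi^{*}$ forces $\mu q.\psi \equiv \mu q.\psi^{*}$, both being the Knaster-Tarski least fixpoint of the same monotone operator. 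The only substantial case is $\phi = \hs_\la(\phi_1,\dots,\phi_n)$ with $\la$ an arbitrary monotone lifting. By the induction hypothesis we have $\phi_i^{*} \in \muML_\Lambda$, and by Step 1 there is a \emph{positive} formula $\al \in \MLoLa^{+}(\{a_1,\dots,a_n\})$ with $\al \oneseq \hs_\la(a_1,\dots,a_n)$; set $\phi^{*} := \al[\phi_1^{*}/a_1,\dots,\phi_n^{*}/a_n]$. Then $\phi \equiv \phi^{*}$: at any state $s$, membership of $s$ in either denotation is determined, via the one-step semantics, by the one-step model at $s$ induced by the denotations of the $\phi_i$ (respectively $\phi_i^{*}$), and these one-step models coincide since $\phi_i \equiv \phi_i^{*}$, while on them $\al$ and $\hs_\la\ol{a}$ agree by one-step equivalence. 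For the polarity clause, a variable $q$ occurs only positively (respectively negatively) in $\hs_\la(\phi_1,\dots,\phi_n)$ exactly when it does so in each $\phi_i$ in which it is free (the modality adds no negation), hence, by the induction hypothesis, in each $\phi_i^{*}$; and since each $a_i$ occurs only positively in $\al$, substituting $\phi_i^{*}$ for $a_i$ preserves the polarity of every $q$. In particular, if the subformula $\hs_\la(\phi_1,\dots,\phi_n)$ originally lay in the scope of a $\mu q$, then $q$ occurred there only positively and still does in $\phi^{*}$, so the translated formula is grammatically correct.

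\emph{Main obstacle.} All the real work sits inside Theorem~\ref{t:onestep-lyndon}, which is already available; the one genuine subtlety above is that negative occurrences of \emph{all} the argument variables of a modality have to be eliminated simultaneously, which is why I would phrase and use the set-indexed variant of that theorem. The remaining induction is routine as long as the polarity bookkeeping is carried along carefully.
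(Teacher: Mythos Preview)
Your proof is correct and takes a genuinely different, more elementary route than the paper. The paper works at the automata level: it converts a $\muML_\funT$-formula to a $\La'$-automaton (with $\La'$ the set of all monotone liftings), replaces each formula in the range of the transition map by an equivalent positive $\La$-formula using expressive completeness followed by the one-step Lyndon theorem, and converts the resulting $\La$-automaton back. You instead run a direct structural induction on formulas with explicit polarity bookkeeping, which avoids the detour through the automata--formula correspondence and makes the argument more self-contained; the paper's approach, on the other hand, neatly sidesteps having to track polarities through nested fixpoints and negations.

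Your explicit treatment of the multi-variable issue is in fact a small improvement. Both approaches need to pass from ``monotone in all variables of $A$'' to ``positive in all variables of $A$'', and your observation that naively iterating Theorem~\ref{t:onestep-lyndon} may reintroduce negations is correct: the output $\delta[\vee_{\funP A}][\tau^{a+}]$ in general contains $\neg b$ for $b \neq a$. The paper's proof simply invokes the single-variable theorem at this point and leaves the multi-variable step implicit; your set-indexed refinement (noting in particular that for $W = A$ one has $\tau^{A+}_B \equiv \bigwedge B$, so the output is the fully positive formula $\delta[\vee_{\funP A}][\wedge_A]$) supplies the missing detail for both proofs.
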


\begin{proof}
The simplest proof uses automata: pick a modal $\Lambda'$-automaton $\bbA$,
where $\Lambda'$ is the set of all monotone predicate liftings for $\funT$, and
apply expressive completeness to replace each formula $\alpha$ in the co-domain
of the transition map $\Theta$ with an equivalent one-step formula $\alpha'$
using only liftings in $\Lambda$.
This formula is still monotone in all the variables in $A$ since it is
equivalent to $\alpha$, so we can apply the one-step Lyndon
Theorem~\ref{t:onestep-lyndon} to replace $\alpha'$ by an equivalent and positive
one-step formula $\beta$ in $\MLoneLa(A)$.
Clearly, the resulting automaton $\bbA'$ will be semantically equivalent to
$\bbA$.
\end{proof}

We now turn to our Lyndon Theorems for the full coalgebraic modal (fixpoint)
languages.
Let ${(\muMLLa)}^{M}_{p}$ and ${(\MLLa)}^{M}_{p}$ denote the fragments of
respectively $\muMLLa$ and $\MLLa$, consisting of the formulas that are positive
in the proposition letter $p$.

\begin{thm}[Lyndon Theorem]%
\label{t:lyn}
There is an effective translation ${(\cdot)}^{M}_{p}: \muMLLa \to {(\muMLLa)}^{M}_{p}$,
which restricts to a map ${(\cdot)}^{M}_{p}: \MLLa \to {(\MLLa)}^{M}_{p}$, and
satisfies that
\[
\phi \in \muML \text{ is monotone in } p \text{ iff } \phi \equiv
\phi^{M}_{p}.
\]
\end{thm}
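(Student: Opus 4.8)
The plan is to prove this via the automata-theoretic route of d'Agostino and Hollenberg~\cite{dago:logi00}, with the Simulation Theorem and Proposition~\ref{p:dj1} supplying the coalgebraic content. First I would translate $\phi$ (say over a finite $\Prop \ni p$) into an equivalent $(\La,\Prop)$-automaton $\bbA$ by the effective formula--automaton correspondence~\cite{font:auto10}, and then apply the Simulation Theorem to replace $\bbA$ by an equivalent \emph{disjunctive} automaton $\bbB = (B,\Theta_{\bbB},\Om_{\bbB},\bi)$. The key construction is the ``downward $p$-saturation''
\[
   \bbB_{\downarrow} \isdef (B,\Theta_{\downarrow},\Om_{\bbB},\bi),\qquad
   \Theta_{\downarrow}(b,c) \isdef
   \bv\{\Theta_{\bbB}(b,c') \mid c' \sse c,\ c'\setminus\{p\} = c\setminus\{p\}\},
\]
so that $\Theta_{\downarrow}(b,c) = \Theta_{\bbB}(b,c)$ for $p\notin c$ and $\Theta_{\downarrow}(b,c) = \Theta_{\bbB}(b,c)\vee\Theta_{\bbB}(b,c\setminus\{p\})$ for $p\in c$; intuitively this lets $\eloi$, at a state where $p$ holds, move as if $p$ were false. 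Since disjunctiveness is preserved under finite disjunctions, $\bbB_{\downarrow}$ is again a disjunctive automaton, and by construction it is \emph{$p$-monotone at the transition level}: any one-step model satisfying $\Theta_{\downarrow}(b,c)$ satisfies $\Theta_{\downarrow}(b,c\cup\{p\})$, for every $b$ and every $c$ with $p\notin c$.

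I would then define $\phi^{M}_{p}$ by translating $\bbB_{\downarrow}$ back to $\muMLLa$, but with a twist reminiscent of the proof of Theorem~\ref{t:onestep-lyndon}: in the type-indexed disjunction $\bv_{c\sse\Prop}(\tau_{c}\wedge\Theta_{\downarrow}(b,c))$ attached to each state $b$, replace every $\tau_{c}$ with $p\notin c$ by $\tau_{c}^{p+}$. Pairing $c$ with $c\cup\{p\}$ and using $\tau_{c}^{p+}\equiv\tau_{c}\vee\tau_{c\cup\{p\}}$ together with the transition-level $p$-monotonicity above, one checks that $\tau_{c}^{p+}\wedge\Theta_{\downarrow}(b,c)\ \vee\ \tau_{c\cup\{p\}}\wedge\Theta_{\downarrow}(b,c\cup\{p\})$ is equivalent to $\tau_{c}\wedge\Theta_{\downarrow}(b,c)\ \vee\ \tau_{c\cup\{p\}}\wedge\Theta_{\downarrow}(b,c\cup\{p\})$, so the modified translation is still equivalent to the automaton $\bbB_{\downarrow}$. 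On the other hand, since the formulas $\tau_{c}^{p+}$, $\tau_{c\cup\{p\}}$ and the transition formulas $\Theta_{\downarrow}(b,c)\in\MLoLa^{+}(B)$ contain no negative occurrence of $p$, the resulting $\phi^{M}_{p}$ is positive in $p$. This procedure is clearly effective; and if $\phi\in\MLLa$ one may take $\bbA$ with an acyclic transition graph, a property preserved by the simulation construction and by passing to $\bbB_{\downarrow}$, so that $\phi^{M}_{p}\in\MLLa$, giving the asserted restriction.

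It remains to check the biconditional. The inclusion $\phi\vDash\phi^{M}_{p}$ holds always, since $\Theta_{\bbB}(b,c)\vDash\Theta_{\downarrow}(b,c)$, so an $\eloi$-winning strategy for $\bbB$ is one for $\bbB_{\downarrow}$; as $\phi^{M}_{p}$ is positive, hence monotone, in $p$, this already yields $\phi\equiv\phi^{M}_{p}\Rightarrow\phi$ monotone in $p$. For the converse, assume $\phi$ --- equivalently $\bbB$ --- is monotone in $p$, and suppose $\bbS,V,s\sat\phi^{M}_{p}$, i.e.\ $\bbB_{\downarrow}$ accepts $(\bbS,V,s)$. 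By Proposition~\ref{p:dj1} there is a pre-image model $\bbX=(X,\xi,W)$ with $W$ the pull-back of $V$ along a model morphism $h$ with $h(x)=s$, strongly accepted by $\bbB_{\downarrow}$ via a positional (Fact~\ref{f:pdpg}) dividing winning strategy $f'$ for $\eloi$; let $b_{t}$ denote the unique automaton state with which $t\in X$ is visited by $f'$-guided matches. Let $D$ be the set of $t$ with $p\in\fl{W}(t)$ at which the marking $f'(b_{t},t)$ witnesses the disjunct $\Theta_{\bbB}(b_{t},\fl{W}(t)\setminus\{p\})$ of $\Theta_{\downarrow}(b_{t},\fl{W}(t))$, let $W'$ agree with $W$ off $p$ with $W'(p)\isdef W(p)\setminus D$, and put $\bbX'\isdef(X,\xi,W')$. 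A direct inspection of the acceptance game shows $f'$ is then winning for $\bbB$ in $\AG(\bbB,\bbX')@(\bi,x)$: at each reachable $(b_{t},t)$ the colour read under $W'$ is exactly the one whose $\bbB$-transition $f'$'s move was chosen to satisfy. Hence $\bbB$ accepts $(\bbX',x)$; since $W'\le W$ agrees with $W$ off $p$, monotonicity of $\bbB$ in $p$ upgrades this to $\bbB$ accepting $(\bbX,x)$; and since $h$ is a model morphism mapping $x$ to $s$, invariance of automaton acceptance under behavioural equivalence gives that $\bbB$ accepts $(\bbS,V,s)$, i.e.\ $\bbS,V,s\sat\phi$. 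Thus $\phi^{M}_{p}\vDash\phi$, and the biconditional follows.

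I expect the main obstacle to be exactly this last direction: from $\bbB_{\downarrow}$ accepting $(\bbS,V,s)$ one obtains, a priori, only that $\bbB$ accepts some tree-like pre-image under a \emph{pointwise-smaller} $p$-valuation which need not be constant on the fibres of the covering map, so it cannot be pushed down to a valuation on $\bbS$ directly; reconciling this is precisely what forces the combined use of disjunctiveness of $\bbB_{\downarrow}$ (via Proposition~\ref{p:dj1}, to obtain the dividing, fibre-coherent pre-image), positional determinacy, the monotonicity hypothesis (to enlarge the small valuation back to the honest pull-back, which is fibre-constant), and bisimulation invariance of automata. A secondary nuisance is verifying that the simulation construction preserves acyclicity of the transition graph, needed for the $\MLLa$ case.
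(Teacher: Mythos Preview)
Your proof is correct and follows essentially the same route as the paper's own argument: both pass to a disjunctive automaton via the Simulation Theorem, define the same ``$p$-saturated'' automaton (your $\bbB_{\downarrow}$ is exactly the paper's $\bbA^{M}_{p}$), and for the hard direction use Proposition~\ref{p:dj1} to reduce to strong acceptance, then modify the $p$-valuation so that the same dividing strategy wins for the original automaton, and close with monotonicity and invariance. The only cosmetic differences are that your $D$ is the complement (within $f'$-accessible points of $W(p)$) of the paper's set $U$, and that you spell out the $p$-positivity of the back-translation via the $\tau_{c}^{p+}$ device, whereas the paper records the equivalent propositional identity $(\pi\wedge\neg p\wedge\Theta_{1})\vee(\pi\wedge p\wedge(\Theta_{1}\vee\Theta_{2}))\equiv(\pi\wedge\Theta_{1})\vee(\pi\wedge p\wedge\Theta_{2})$.
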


\begin{proof}
By the equivalence between formulas and $\La$-automata and the Simulation
Theorem, it suffices to prove the analogous statement for disjunctive
coalgebra automata.

Given a disjunctive $\La$-automaton $\bbA = (A,\Th,\Om,\ai)$, we define
$\bbA^{M}_{p}$ to be the automaton $(A,\Th^{M}_{p},\Om,\ai)$, where
\[
\Th^{M}_{p}(a,c) \isdef
  \begin{cases}
  \Th(a,c) \vee \Th(a,c \setminus \{p\})& \text{if } p \in c
  \\ \Th(a,c) & \text{if } p \not\in c.
  \end{cases}
\]
Clearly $\bbA^{M}_{p}$ is a disjunctive automaton as well, and it is routine to
show that $\bbA^{M}_{p}$ is equivalent to a formula in $\muML_\La$ that is
positive in the variable $p$. The key observation is that we have the following valid equivalence of classical propositional logic, for any formulas $\pi,\Theta_1, \Theta_2$:
\[
    (\pi \wedge \neg p \wedge \Theta_1) \vee ( \pi \wedge p \wedge (\Theta_1 \vee  \Theta_2)) \quad \Leftrightarrow \quad (\pi \wedge  \Theta_1) \vee (\pi \wedge p \wedge \Theta_2)
\]
We claim that $\bbA$ is monotone in $p$ iff $\bbA \equiv \bbA^{M}_{p}$.
Leaving the easy direction from right to left to the reader, we prove the
opposite implication.
So assume that $\bbA$ is monotone in $p$.
Since it immediately follows from the definitions that $\bbA$ always implies
$\bbA^{M}_{p}$, we are left to show that $\bbA^{M}_{p}$ implies $\bbA$, and
since $\bbA^{M}_{p}$ is disjunctive, by  Proposition~\ref{p:dj1} and invariance
of acceptance by coalgebra automata it suffices to prove the following:
\begin{equation}
\label{eq:lyn1}
\bbS,s_{I} \sat_{s} \bbA^{M}_{p} \text{ implies } \bbS,s_{I} \sat \bbA,
\end{equation}
for an arbitrary $\funT$-model $(\bbS,s_{I})$.

To prove~\eqref{eq:lyn1}, let $f$ be a dividing winning strategy for $\eloi$
in $\AG^{M} \isdef \AG(\bbA^{M}_{p},\bbS)@(\ai,s_{I})$.
Our aim is to find a subset $U \sse V(p)$ such that $\bbS[p \mapsto U], s_{I}
\sat \bbA$; it then follows by mononotonicity that $\bbS,s_{I}\sat \bbA$.
Call a point $s \in S$ \emph{$f$-accessible} if there is a (by assumption unique)
state $a_{s}$ such that the position $(a_{s},s)$ is $f$-reachable in $\AG^{M}$.
We define $U$ as the set of $f$-accessible elements $s$ of $V(p)$ such that:
\[
    S, \si(s),m_{s} \satone \Th(a_{s},V^{\flat}(s)),
\]
where $m_{s}$ is the $A$-marking provided by $f$ at position $(a_{s},s)$.
We let
$V_{U}$ abbreviate $V[p \mapsto U]$.
We claim that
\begin{equation}
\label{eq:lyn2}
\text{if $s$ is $f$-accessible then } S, \si(s),m_{s} \satone
\Th(a_{s},V_{U}^{\flat}(s)).
\end{equation}
To see why~\eqref{eq:lyn2} holds, note that for any $f$-accessible point $s$,
the marking $m_{s}$ is a legitimate move at position $(a_{s},s)$, since $f$
is assumed to be winning for $\eloi$ in $\AG^{M}$.
In other words, we have $S, \si(s),m_{s} \satone \Th^M_p(a_{s},V^{\flat}(s))$.
We have to make a case distinction, between the following three cases:
\begin{description}
\item[Case 1] $p \notin V_U^{\flat}(s)$ and $p \notin V^{\flat}(s)$. Then $\Th(a_{s},V_U^{\flat}(s)) = \Th^M_p(a_{s},V^{\flat}(s))$, so we get $ S, \si(s),m_{s} \satone
\Th(a_{s},V_{U}^{\flat}(s))$ as required.
\item[Case 2] $p \notin V_U^{\flat}(s)$ but $p \in V^{\flat}(s)$. Then $
\Th(a_{s},V^{\flat}(s))$ does not hold in the one-step model $S, \si(s),m_{s}$, so by definition of $\Th^M_p$ we have $S, \si(s),m_{s} \satone
\Th(a_{s},V^{\flat}(s) \setminus \{p\})$. But  $V^{\flat}(s) \setminus \{p\} = V_U^{\flat}(s)$, so again we get $ S, \si(s),m_{s} \satone
\Th(a_{s},V_{U}^{\flat}(s))$ as required.
\item[Case 3] $p \in V_U^{\flat}(s)$. In this case we have $ S, \si(s),m_{s} \satone
\Th(a_{s},V^{\flat}(s))$ by definition of the valuation $V_U$. Furthermore, since $V^{\flat}(s) = V_U^{\flat}(s)$, we get  $ S, \si(s),m_{s} \satone
\Th(a_{s},V_{U}^{\flat}(s))$.
\end{description}
Hence,~\eqref{eq:lyn2} holds in all three possible cases.
Finally, it is straightforward to derive from~\eqref{eq:lyn2} that $f$ itself is
a (dividing) winning strategy for $\eloi$ in the acceptance game $\AG(\bbA,
\bbS)$ initialized at $(\ai,s_{I})$.
\end{proof}

\begin{remark}
Observe that as a corollary of Theorem~\ref{t:lyn} and the decidability of the
satisfiability problem of $\muMLLa$~\cite{cirs:expt09}, it is decidable whether
a given formula $\phi \in \muMLLa$ is monotone in $p$.
\end{remark}


\section{Uniform Interpolation}%
\label{s:ui}

Uniform interpolation is a very strong form of the interpolation theorem, first
proved for the modal $\mu$-calculus in~\cite{dago:logi00}.
It was later generalized to coalgebraic modal logics in~\cite{mart:unif15}.
However, the proof crucially relies on non-deterministic automata, and for that
reason the generalization in~\cite{mart:unif15} is stated for nabla-based
languages.
With a simulation theorem for predicate liftings based automata in place, we can
prove the uniform interpolation theorem for a large class of $\mu$-calculi based
on predicate liftings.

\begin{definition}
Given a formula $\phi \in \muMLLa$, we let $\Prop_{\phi}$ denote the set of
proposition letters occurring in $\phi$.
Given a set $\Prop$ of proposition letters and a single proposition letter $p$,
it may be convenient to denote the set $\Prop \cup \{ p \}$ as $\Prop p$.
\end{definition}

\begin{definition}
A logic $\mathcal{L}$ with semantic consequence relation $\models$ is said to
have the property of \emph{uniform interpolation} if, for any formula $\phi \in
\mathcal{L}$ and any set $\Prop \sse \Prop_{\phi}$ of proposition
letters, there is a formula $\phi_{\Prop} \in \mathcal{L}(\Prop)$,
effectively constructible from $\phi$, such that
\begin{equation}
\label{eq:ui}
\phi \models \psi \text{ iff } \phi_{\Prop} \models \psi,
\end{equation}
for every formula $\psi \in \mathcal{L}$ such that $\Prop_{\phi} \cap
\Prop_{\psi} \sse \Prop$.
\end{definition}

To see why this property is called uniform \emph{interpolation}, it is not hard
to prove that, if $\phi \models \psi$, with $\Prop_{\phi} \cap \Prop_{\psi}
\sse \Prop$, then the formula $\phi_{\Prop}$ is indeed an interpolant in the
sense that $\phi \models \phi_{\Prop} \models \psi$ and
$\Prop_{\phi_{\Prop}} \sse \Prop_{\phi} \cap \Prop_{\psi}$.

\begin{thm}[Uniform Interpolation]%
\label{t:ui}
Let $\La$ be a monotone modal signature for the set functor $\funT$ and
assume that $\La$ has a disjunctive basis.
Then both logics $\MLLa$ and $\muMLLa$ enjoy the property of uniform
interpolation.
\end{thm}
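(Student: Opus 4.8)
The plan is to follow the automata-theoretic route used by d'Agostino and Hollenberg for the modal $\mu$-calculus~\cite{dago:logi00} and lifted to the nabla-based coalgebraic setting in~\cite{mart:unif15}, replacing the nabla simulation theorem by our Simulation Theorem and nabla-automata by disjunctive $\La$-automata. Fix $\phi$ and $\Prop\sse\Prop_{\phi}$, and write $\{q_{1},\ldots,q_{k}\}\isdef\Prop_{\phi}\setminus\Prop$ for the letters to be eliminated. Using the automata-theoretic characterization of $\muMLLa$ from~\cite{font:auto10} together with the Simulation Theorem, I would first translate $\phi$ into an equivalent \emph{disjunctive} $\La$-automaton $\bbA$ over the alphabet $\funP\Prop_{\phi}$.

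The core of the construction is a \emph{projection} (bisimulation-quantifier) operation on disjunctive automata. Given a disjunctive $\La$-automaton $\bbB=(B,\Th,\Om,b_{I})$ whose alphabet contains a letter $p$, define $\exists p.\bbB\isdef(B,\Th',\Om,b_{I})$ over the alphabet without $p$ by $\Th'(b,c)\isdef\Th(b,c)\vee\Th(b,c\cup\{p\})$ for $c$ not containing $p$. Since a disjunction of disjunctive one-step formulas is again disjunctive (and $\D$ is closed under finite disjunctions), $\exists p.\bbB$ is again a disjunctive automaton. The key fact to establish is that, for every pointed $\funT$-model $(\bbS,s)$ whose valuation ignores $p$,
\[
\bbS,s\sat\exists p.\bbB
\qquad\text{iff}\qquad
\bbS',s'\sat\bbB\text{ for some }(\bbS',s')\text{ behaviorally equivalent to }(\bbS,s)\text{ over the letters }\neq p.
\]
Using invariance of automaton acceptance under behavioral equivalence, the right-to-left direction is easy: a winning strategy for $\eloi$ in $\AG(\bbB,\bbS')$ is still winning in $\AG(\exists p.\bbB,\bbS')$, because any marking satisfying $\Th(b,V^{\flat}(t))$ satisfies the relevant disjunct of $\Th'(b,V^{\flat}(t)\setminus\{p\})$, and $\exists p.\bbB$ has the same priority map; then transport acceptance along a span witnessing behavioral equivalence.

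The hard direction, which is where disjunctiveness is indispensable, would go as follows. Suppose $\bbS,s\sat\exists p.\bbB$. Since $\exists p.\bbB$ is disjunctive, Proposition~\ref{p:dj1} provides a pre-image model $(\bbX,x_{I})$ --- with a coalgebra morphism $h:\bbX\to\bbS$, $h(x_{I})=s$, preserving all letters --- that is \emph{strongly} accepted by $\exists p.\bbB$; fix a dividing winning strategy $f$ for $\eloi$. Because $f$ is dividing, for each point $t$ of $\bbX$ there is at most one state $a_{t}$ with $(a_{t},t)$ $f$-reachable, and at $(a_{t},t)$ the strategy prescribes a marking $m_{t}$ with $m_{t}\satone\Th'(a_{t},W^{\flat}(t))$, hence $m_{t}\satone\Th(a_{t},W^{\flat}(t))$ or $m_{t}\satone\Th(a_{t},W^{\flat}(t)\cup\{p\})$. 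Re-interpret $p$ on $\bbX$ by letting $U$ be the set of $f$-accessible points $t$ with $m_{t}\not\satone\Th(a_{t},W^{\flat}(t))$ (so that $m_{t}\satone\Th(a_{t},W^{\flat}(t)\cup\{p\})$ necessarily holds there). A case split on whether $p$ belongs to $U$, whether $p$ holds at $t$ in $\bbX$, etc.\ --- essentially identical to the three-case argument in the proof of Theorem~\ref{t:lyn} --- shows that $f$ still prescribes legitimate moves in $\AG(\bbB,\bbX[p\mapsto U])$, hence remains winning, so $\bbX[p\mapsto U],x_{I}\sat\bbB$; and since $h$ remains a coalgebra morphism preserving every letter $\neq p$, this supplies the witness required on the right-hand side.

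Finally, iterating the projection over $q_{1},\ldots,q_{k}$ yields a disjunctive $\La$-automaton $\bbA_{\Prop}$ over $\funP\Prop$, and translating it back (again via~\cite{font:auto10}) gives $\phi_{\Prop}\in\muMLLa(\Prop)$, effectively obtained from $\phi$. Verifying~\eqref{eq:ui} is then bookkeeping: applying the easy direction of the key fact $k$ times (with $\bbS'=\bbS$ each time) gives $\phi\models\phi_{\Prop}$, whence $\phi_{\Prop}\models\psi$ implies $\phi\models\psi$; conversely, if $\phi\models\psi$ with $\Prop_{\phi}\cap\Prop_{\psi}\sse\Prop$ and $(\bbS,s)\models\phi_{\Prop}$, then $k$ applications of the hard direction produce a model $(\bbS',s')\models\phi$ behaviorally equivalent to $(\bbS,s)$ over all letters outside $\{q_{1},\ldots,q_{k}\}$, hence over $\Prop_{\psi}$; so $(\bbS',s')\models\psi$ and therefore $(\bbS,s)\models\psi$ by invariance. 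For the fixpoint-free fragment the same construction works verbatim, once one observes that translation to and from automata, simulation, and projection all stay within the subclass of automata corresponding to $\MLLa$ (projection only disjoins two existing clauses of $\Th$ and changes neither the state set nor the priority map), so $\phi\in\MLLa$ yields $\phi_{\Prop}\in\MLLa(\Prop)$. The main obstacle is the hard direction of the key fact: it is exactly there that one needs a \emph{single-valued} (dividing) winning strategy to define the valuation $U$ coherently --- without disjunctiveness, two branches of a run reaching a point $t$ via different automaton states could force $p$ to be both true and false at $t$.
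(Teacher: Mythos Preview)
Your proposal is correct and follows essentially the same route as the paper: define the projection $\Th'(b,c)\isdef\Th(b,c)\vee\Th(b,c\cup\{p\})$ on disjunctive automata, prove the key ``bisimulation quantifier'' fact via Proposition~\ref{p:dj1} by choosing $U$ to be the set of $f$-accessible points where $m_{t}\not\satone\Th(a_{t},W^{\flat}(t))$, and iterate. Two cosmetic differences: the paper states the key fact using the pre-image relation $\simu$ rather than behavioral equivalence (your construction in fact yields a pre-image, so this is immaterial), and the case analysis here is simpler than the three-case Lyndon argument you cite, since $p$ is fresh in the model $\bbX$ --- only the dichotomy $t\in U$ versus $t\notin U$ is needed.
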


Following D'Agostino \& Hollenberg~\cite{dago:logi00}, we prove
Theorem~\ref{t:ui} by automata-theoretic means.
The key proposition in our proof is Proposition~\ref{p:ui1} below, which refers
to the following construction on disjunctive automata.

\begin{definition}
Let $\Prop$ be a set of proposition letters not containing the letter $p$.
Given a disjunctive $(\La,\Prop p)$-automaton $\bbA = (A,\Th,\Om,\ai)$, we
define the map $\Th^{\exists p}: A \times \funP\Prop \to \D(A)$ by
\[
\Th^{{\exists} p}(a,c) \isdef \Th(a,c) \lor \Th(a,c\cup \{ p \}),
\]
and we let $\bbA^{{\exists} p}$ denote the $(\La,\Prop)$-automaton
$(A, \Th^{{\exists} p},\Om,\ai)$.
\end{definition}

The following proposition shows that the operation ${(\cdot)}^{{\exists} p}$
behaves like an \emph{existential quantifier}, but with a twist: the automaton
$\bbA^{{\exists} p}$ accepts a pointed coalgebra model $(\bbS,s_{I})$ iff
for some subset $P$ of \emph{some preimage model}$(\bbS',s'_{I})$, the
model $(\bbS'[p \mapsto P],s'_{I})$ is accepted by $\bbA$.

\begin{prop}%
\label{p:ui1}
Let $\Prop \sse \PropQ$ be sets of proposition letters, both not containing the
letter $p$.
Then for any disjunctive $(\La,\Prop p)$-automaton $\bbA$ and any pointed
$\funT$-model $(\bbS,s_{I})$ over $\PropQ$:
\begin{equation}
\bbS,s_{I} \sat \bbA^{{\exists} p} \text{ iff }
\bbS',s'_{I} \sat_{s} \bbA \text{ for some $\PropQ p$-model $(S',s'_{I})$ such
that } \bbS'\rst{\PropQ},s'_{I} \simu \bbS,s_{I}.
\end{equation}
\end{prop}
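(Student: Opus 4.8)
The plan is to prove the two directions separately: the right-to-left direction uses only that strong acceptance implies acceptance together with the invariance of automaton acceptance under $\funT$-model morphisms, while the left-to-right direction is the substantive one and rests on Proposition~\ref{p:dj1}.

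\emph{From right to left.} Suppose $\bbA$ strongly accepts a $\PropQ p$-model $\bbS'=(S',\si',V')$ at a point $s'_I$, together with a $\funT$-model morphism $h\colon\bbS'\rst{\PropQ}\to\bbS$ with $h(s'_I)=s_I$. A dividing winning strategy for $\eloi$ in $\AG(\bbA,\bbS')@(\ai,s'_I)$ is in particular a winning strategy there, and I would first observe that this same strategy is winning in $\AG(\bbA^{\exists p},\bbS'\rst{\PropQ})@(\ai,s'_I)$ as well: the two games share board, priority map and winning condition, and since $p\notin\PropQ$ the color $(\Prop p)\cap(V')^{\flat}(s)$ read by $\bbA$ at a point $s$ equals either $\Prop\cap(V')^{\flat}(s)$ or $(\Prop\cap(V')^{\flat}(s))\cup\{p\}$, so $\Th(a,(\Prop p)\cap(V')^{\flat}(s))$ is in each case one of the two disjuncts of $\Th^{\exists p}(a,\Prop\cap(V')^{\flat}(s))$; hence every $\eloi$-move admissible for $\bbA$ stays admissible for $\bbA^{\exists p}$, and $\abel$'s moves are the same in both games. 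Thus $\bbA^{\exists p}$ accepts $(\bbS'\rst{\PropQ},s'_I)$, and by invariance of acceptance along the morphism $h$ it accepts $(\bbS,s_I)$.

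\emph{From left to right.} Assume $\bbS,s_I\sat\bbA^{\exists p}$. Each $\Th^{\exists p}(a,c)$ is a disjunction of two disjunctive one-step formulas and hence disjunctive, so $\bbA^{\exists p}$ is a disjunctive automaton, and Proposition~\ref{p:dj1} yields a pointed $\funT$-model $(\bbX,x_I)$, with $\bbX=(X,\xi,W)$ and a $\funT$-model morphism $h\colon\bbX\to\bbS$ sending $x_I$ to $s_I$, that is \emph{strongly} accepted by $\bbA^{\exists p}$ via a positional dividing winning strategy $g$ for $\eloi$ in $\AG(\bbA^{\exists p},\bbX)@(\ai,x_I)$. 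Dividingness gives, for each $g$-reachable $x\in X$, a unique state $a_x$ with $(a_x,x)$ $g$-reachable, and at $(a_x,x)$ the strategy $g$ prescribes a marking $m_x$ with $X,\xi(x),m_x\satone\Th^{\exists p}(a_x,\Prop\cap W^{\flat}(x))$, i.e.\ $m_x$ satisfies $\Th(a_x,\Prop\cap W^{\flat}(x))$ or $\Th(a_x,(\Prop\cap W^{\flat}(x))\cup\{p\})$. I would then let $\bbS'$ be $\bbX$ equipped with the valuation $W'$ extending $W$ by interpreting the fresh letter $p$ as
\[
U\isdef\{x\in X\mid x\text{ is }g\text{-reachable and }X,\xi(x),m_x\text{ does not satisfy }\Th(a_x,\Prop\cap W^{\flat}(x))\}.
\]
Then $\bbS'\rst{\PropQ}=\bbX$, so $h$ witnesses $\bbS'\rst{\PropQ},x_I\simu\bbS,s_I$ and $\bbS'$ is a $\PropQ p$-model; it remains to verify that $g$ is still a dividing winning strategy for $\eloi$ in $\AG(\bbA,\bbS')@(\ai,x_I)$. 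The crux is that at each $g$-reachable $x$ the color $(\Prop p)\cap(W')^{\flat}(x)$ is exactly the one at which $m_x$ works: if $x\notin U$ it equals $\Prop\cap W^{\flat}(x)$, and $m_x$ satisfies $\Th(a_x,\cdot)$ there by the definition of $U$; if $x\in U$ it equals $(\Prop\cap W^{\flat}(x))\cup\{p\}$, and as $m_x$ fails the first disjunct of $\Th^{\exists p}(a_x,\Prop\cap W^{\flat}(x))$ it must satisfy the second. Since $g$ is positional and the games $\AG(\bbA,\bbS')$ and $\AG(\bbA^{\exists p},\bbX)$ have the same board, priorities and winning condition, an easy induction shows their $g$-guided matches coincide and are all won by $\eloi$; hence $g$ remains dividing and winning, i.e.\ $\bbS',x_I\sat_s\bbA$, as required.

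The step I expect to be the genuine obstacle is the global choice packaged into $U$: the truth value of $p$ has to be fixed at each point of the model once and for all, compatibly with \emph{every} automaton state reachable at that point. This is precisely why an arbitrary accepting run of $\bbA^{\exists p}$ does not suffice and one must first pass, via Proposition~\ref{p:dj1}, to a \emph{strongly} accepting run on a preimage model, since dividingness then guarantees a single reachable state --- and hence a single required color --- per point, so that no conflict can arise. Once that reduction is secured, the remaining verifications (admissibility of the moves $m_x$, equality of the matches of the two games) are routine.
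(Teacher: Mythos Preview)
Your proposal is correct and follows essentially the same route as the paper's proof. The paper only spells out the left-to-right direction (leaving right-to-left as an exercise) and, for notational convenience, absorbs the passage to a preimage model by simply assuming $(\bbS,s_I)$ is already strongly accepted by $\bbA^{\exists p}$; your version is more explicit in introducing the preimage $(\bbX,x_I)$ from Proposition~\ref{p:dj1} and in handling the right-to-left direction, but the definition of $U$ and the verification that the same dividing strategy works for $\bbA$ are identical in substance.
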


\begin{proof}
We only prove the direction from left to right, leaving the other (easier)
direction as an exercise to the reader.
For notational convenience we assume that $\Prop = \PropQ$.

By Proposition~\ref{p:dj1} it suffices to assume that $(\bbS,s_{I})$ is
\emph{strongly} accepted by $\bbA^{{\exists} p}$ and find a subset $U$ of
$S$ for which we can prove that $\bbS[p \mapsto U],s_{I} \sat_{s} \bbA$.
So let $f$ be a dividing winning strategy for $\eloi$ in
$\AG(\bbA^{{\exists} p},\bbS)@(\ai,s_{I})$ witnessing that $\bbS,s_{I} \sat_{s}
\bbA^{{\exists} p}$.
Call a point $s \in S$ \emph{$f$-accessible} if there is a state $a \in A$ such
that the position $(a,s)$ is $f$-reachable; since this state is unique by the
assumption of strong acceptance we may denote it as $a_{s}$.
Clearly any position of the form $(a_{s},s)$ is winning for $\eloi$, and hence
by legitimacy of $f$ it holds in particular that
\[
S,\si(s),m_{s} \satone \Th^{{\exists} p}(a_{s}, V^{\flat}(s)),
\]
where $m_{s}: S \to \funP A$ denotes the marking selected by $f$ at position
$(a_{s},s)$.
Recalling that $\Th^{{\exists} p}(a_{s}, V^{\flat}(s)) =
\Th(a_{s}, V^{\flat}(s)) \lor \Th(a_{s}, V^{\flat}(s) \cup \{ p \})$, we define
\[
U \isdef \{ s \in S \mid s \text{ is $f$-accessible and }
S,\si(s),m_{s} \not\satone \Th(a_{s}, V^{\flat}(s)) \}.
\]
By this we ensure that, for all $f$-accessible points $s$:
\begin{align}
s \not\in U
  & \text{ implies } S,\si(s),m_{s} \satone \Th(a_{s}, V^{\flat}(s))
\label{eq:ui1}
\\ \text{ while } s \in U
  & \text{ implies } S,\si(s),m_{s} \satone \Th(a_{s}, V^{\flat}(s) \cup \{ p \})
\label{eq:ui2}
\end{align}
Now consider the valuation $V_{U} \isdef V[p \mapsto U]$, and observe that by
this definition we have $V_{U}^{\flat}(s) = V^{\flat}(s)$ if $s \not\in U$ while
$V_{U}^{\flat}(s) = V^{\flat}(s) \cup \{ p \}$ if $s \in U$.
Combining this with~\eqref{eq:ui1} and~\eqref{eq:ui2} we find that
\[
S, \si(s), m_{s} \satone \Th(a_{s}, V_{U}^{\flat})
\]
whenever $s$ is $f$-accessible.
In other words, $f$ provides a legitimate move $m_{s}$ in
$\AG(\bbA,\bbS)@(a_{s},s)$ at any position of the form $(a_{s},s)$.
From this it is straightforward to derive that $f$ itself is a (dividing)
winning strategy for $\eloi$ in $\AG(\bbA,\bbS[p \mapsto U])@(\ai,s_{I})$,
and so we obtain that $\bbS[p \mapsto U],s_{I} \sat_{s} \bbA$ as required.
\end{proof}

The remaining part of the argument follows by a fairly standard argument going
back to D'Agostino \& Hollenberg~\cite{dago:logi00} (see also Marti et
alii~\cite{mart:unif15}) --- a minor difference being that our quantification
operation ${(\cdot)}^{\exists p}$ refers to pre-images rather than to bisimilar
models.

\begin{prop}%
\label{p:bq}
Given any proposition letter $p$, there is a map $\widetilde{\exists} p$ on
$\muMLLa$, restricting to $\MLLa$, such that $\Prop_{\widetilde{\exists} p. \phi}
= \Prop_{\phi} \setminus \{ p \}$ and, for every pointed $(\bbS,s_{I})$ over a
set $\PropQ \supseteq \Prop_{\phi}$ with $p \not\in \PropQ$:
\begin{equation}
\label{eq:bq}
\bbS,s_{I} \sat \widetilde{\exists} p. \phi \text{ iff }
\bbS',s'_{I} \sat \phi \text{ for some $\PropQ p$-model $(S',s'_{I})$ such
that } \bbS'\rst{\PropQ},s'_{I} \simu \bbS,s_{I}.
\end{equation}
\end{prop}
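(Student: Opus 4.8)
The plan is to realize $\widetilde{\exists} p$ as an effective syntactic operation obtained by the standard detour through \emph{disjunctive} automata, following D'Agostino \& Hollenberg~\cite{dago:logi00} and Marti et al.~\cite{mart:unif15}, with ordinary bisimilarity replaced everywhere by the coarser preimage relation $\simu$. Given $\phi \in \muMLLa$: if $p \notin \Prop_{\phi}$ I would simply set $\widetilde{\exists} p. \phi \isdef \phi$ (for which~\eqref{eq:bq} is immediate, using that the valuation of $p$ is irrelevant to $\phi$ and that $\muMLLa$-truth is invariant under coalgebra homomorphisms); otherwise, writing $\Prop' \isdef \Prop_{\phi}\setminus\{p\}$, I would first use the effective construction of~\cite{font:auto10} to pass from $\phi$ to an equivalent $(\La,\Prop'\cup\{p\})$-automaton, then invoke the Simulation Theorem to replace it by an equivalent \emph{disjunctive} $\D$-automaton $\bbA = (A,\Th,\Om,\ai)$ over $\Prop'\cup\{p\}$ (the simulation does not touch the $\funP\Prop$-part of the transition map, so the set of proposition letters is unchanged). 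On $\bbA$ I would perform the construction ${(\cdot)}^{\exists p}$ from the definition preceding Proposition~\ref{p:ui1}, obtaining the $(\La,\Prop')$-automaton $\bbA^{\exists p}$, and finally translate $\bbA^{\exists p}$ back into an equivalent $\muMLLa$-formula, which I define to be $\widetilde{\exists} p. \phi$. By construction this formula uses only proposition letters from $\Prop'$; to get $\Prop_{\widetilde{\exists} p. \phi} = \Prop_{\phi}\setminus\{p\}$ on the nose I would pad it with a trivial conjunct $q \lor \neg q$ for each of the finitely many letters $q \in \Prop'$ not already occurring.

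For the semantic property~\eqref{eq:bq} I would fix a pointed $\funT$-model $(\bbS,s_{I})$ over some $\PropQ \supseteq \Prop_{\phi}$ with $p \notin \PropQ$, note that $\PropQ \supseteq \Prop'$, and apply Proposition~\ref{p:ui1} with its parameter ``$\Prop$'' instantiated to $\Prop'$. Using that $\bbA^{\exists p}$ is equivalent to $\widetilde{\exists} p. \phi$, this yields
\[
\bbS,s_{I} \sat \widetilde{\exists} p. \phi
\text{ iff }
\bbS',s'_{I} \sat_{s} \bbA \text{ for some } (\PropQ\cup\{p\})\text{-model } (\bbS',s'_{I}) \text{ with } \bbS'\rst{\PropQ},s'_{I} \simu \bbS,s_{I}.
\]
It then remains to trade strong acceptance $\sat_{s}\bbA$ for ordinary satisfaction $\sat\phi$. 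One direction is immediate: $\sat_{s}$ implies $\sat$, and $\bbA \equiv \phi$ gives $\bbS',s'_{I}\sat\bbA$ iff $\bbS',s'_{I}\sat\phi$ on any model extending $\Prop_{\phi}$. For the other direction I would start from $\bbS',s'_{I}\sat\phi$, hence $\bbS',s'_{I}\sat\bbA$, and --- crucially using that $\bbA$ is disjunctive --- apply Proposition~\ref{p:dj1} to obtain a preimage $(\bbS'',s''_{I}) \simu (\bbS',s'_{I})$ with $\bbS'',s''_{I}\sat_{s}\bbA$; since $\funT$-model homomorphisms compose and the restriction to $\PropQ$ of a $(\PropQ\cup\{p\})$-model homomorphism is again a $\PropQ$-model homomorphism, the composite still satisfies $\bbS''\rst{\PropQ},s''_{I}\simu\bbS,s_{I}$, so $(\bbS'',s''_{I})$ witnesses the right-hand side of the display. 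Combining the two halves gives exactly~\eqref{eq:bq}.

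The last thing to check is that the whole pipeline restricts to the fixpoint-free fragment: under the translations of~\cite{font:auto10}, $\MLLa$ corresponds to $\La$-automata whose state-transition relation is well-founded, the simulation construction sends such an automaton to one whose carrier (binary relations over $A$) is again well-founded --- the rank along $\sh{\Th}$ strictly decreases --- with a vacuous synchronization step, and ${(\cdot)}^{\exists p}$ leaves the state structure untouched, so the back-translation of $\bbA^{\exists p}$ lies in $\MLLa$ whenever $\phi$ does. I expect the main obstacle to be precisely this interface bookkeeping: keeping the proposition-letter sets, the fixpoint-free restriction, and above all the passage between ordinary acceptance and the \emph{strong} (dividing) acceptance used in Propositions~\ref{p:ui1} and~\ref{p:dj1} all in sync. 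None of the individual steps is hard, but it is the disjunctiveness of $\bbA$ --- hence the availability of Proposition~\ref{p:dj1} --- that makes the strong-acceptance bridge go through, and that is the heart of the matter.
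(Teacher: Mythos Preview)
Your proposal is correct and follows exactly the approach the paper intends: the paper's proof is a one-liner citing ``the equivalence between formulas and $\La$-automata, the Simulation Theorem, and Proposition~\ref{p:ui1}'', and you have simply unpacked this pipeline, including the passage between $\sat_{s}\bbA$ and $\sat\phi$ via Proposition~\ref{p:dj1} and transitivity of $\simu$, which the paper leaves implicit. Your handling of the bookkeeping (the case $p\notin\Prop_{\phi}$, the padding conjuncts, the fixpoint-free restriction) goes beyond what the paper spells out but is in line with it.
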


\begin{proof}
Straightforward by the equivalence between formulas and $\La$-automata, the
Simulation Theorem, and Proposition~\ref{p:ui1}.
\end{proof}

\noindent
\begin{proofof}{Theorem~\ref{t:ui}}
With $p_{1},\ldots,p_{n}$ enumerating the proposition letters in
$\Prop_{\phi} \setminus \Prop$, set
\[
\phi_{\Prop} \isdef
\widetilde{\exists} p_{1} \widetilde{\exists} p_{2} \cdots \widetilde{\exists}
p_{n}. \phi.
\]
Then a relatively routine exercise shows that $\phi \models \psi$ iff
$\phi_{\Prop} \models \psi$, for all formulas $\psi \in \muMLLa$ such that
$\Prop_{\phi} \cap \Prop_{\psi} \sse \Prop$.
For some detail, first assume that $\phi \models \psi$, and take an arbitrary
pointed model $(\bbS_{0},s_{0})$ over some set $\PropQ \supseteq \Prop_{\psi}
\cup \Prop$ such that $\PropQ \cap \{ p_{1},\ldots, p_{n} \} = \nada$ and
$\bbS_{0},s_{0} \sat \phi_{\Prop}$.
Then successive applications of Proposition~\ref{p:bq} provide,
for $i = 1, \ldots, n$, pointed $\funT$-models $(\bbS_{i},s_{i})$ over
$\PropQ \cup \{ p_{1}, \ldots, p_{i} \}$ such that
$(\bbS_{i},s_{i}) \sat
\widetilde{\exists} p_{i+1} \cdots \widetilde{\exists} p_{n}. \phi$ and
$\bbS_{i+1}\rst{(\PropQ \cup \{p_{1},\ldots,p_{i}\})},s_{i+1} \simu
\bbS_{i},s_{i}$, for all $i$.
Thus in particular we have $\bbS_{n},s_{n} \sat \phi$, from which it follows
by assumption that $\bbS_{n},s_{n} \sat \psi$, and since
$\bbS_{n}\rst{\PropQ},s_{n} \simu \bbS_{0},s_{0}$ by transitivity of $\simu$,
this implies that $\bbS_{0},s_{0} \sat \psi$, as required.
Conversely, to show that $\phi_{\Prop} \models \psi$ implies $\phi \models
\psi$, it suffices to prove that $\phi \models \phi_{\Prop}$; but this is
almost immediate from the definitions.

Finally, it is not difficult to verify that $\phi_{\Prop}$ is fixpoint-free if
$\phi$ is so; that is, the uniform interpolants of a formula in $\MLLa$ also
belong to $\MLLa$.
\end{proofof}


\section{Disjunctive bases for combined modal logics}

An important topic in modal logic concerns methods to construct complex logics
from simpler ones, in such a way that desirable metalogical properties transfer
to a combined logic from its components.
For an overview of this field, see~\cite{kuru:comb06}.
This thread has also been picked up in coalgebraic modal logic: in a particularly
interesting paper~\cite{cirs:modu07}, C\^{\i}rstea \& Pattinson provide generic
methods to obtain, from modal signatures
for two set functors $\funT_{1}$ and $\funT_{2}$, modal signatures for the
coproduct, product, and composition of $\funT_{1}$ and $\funT_{2}$.
Since the verification that a particular modal logic has a disjunctive basis can
be quite non-trivial --- as witnessed here by the case of graded modal logic ---
it will be useful to have some methods available that guarantee the existence of
a disjunctive basis being preserved by C\^{\i}rstea \& Pattinson's modular
constructions of signatures.
In fact, we will show that disjunctive bases can be constructed in a modular
fashion as well: in each of the cases of coproduct, product and composition we
will give an explicit construction of a disjunctive basis for the combined
signature, based on disjunctive bases for the composing functors.

\subsection{Coproduct}

Let $\funT = \funT_{1} + \funT_{2}$ be the coproduct of the set functors
$\funT_{1}$ and $\funT_{2}$.
For $i = 1,2$ we will use $\kappa_{i}$ to denote the natural
transformation $\kappa_{i}: \funT_{i} \Rightarrow \funT$ that instantiates to
the coproduct insertion map ${(\ka_{i})}_{S}: \funT_{i}S \to \funT S$, for every
set $S$.
Now suppose that we have been given signatures for $\funT_{1}$ and $\funT_{2}$,
respectively.
Following C\^{\i}rstea \& Pattinson~\cite{cirs:modu07}, we define the
combined signature $\La_{1} \oplus \La_{2}$ for $\funT$ as follows.

\begin{definition}
Where $\La_{1}$ and $\La_{2}$ are monotone modal signatures for $\funT_{1}$ and
$\funT_{2}$ respectively, we define
\[
\La_{1} \oplus \La_{2} \isdef \{ \kappa_{i}\circ\la \mid \la \in \La_{i} \}.
\]
In the syntax we shall write $\nxt{i,\la}$ rather than
$\mop{\kappa_{i}\circ\la}$.
\end{definition}

We leave it for the reader to check that $\La_{1} \oplus \La_{2}$ is indeed a
collection of monotone predicate liftings for $\funT$.
The meaning of the $\La_{1} \oplus \La_{2}$-modalities in an arbitrary
$\funT$-coalgebra $\bbS$ is given as follows:
\begin{equation}
\label{eq:cpr1}
\bbS, s \sat \nxt{i,\la}(\phi_{1},\ldots,\phi_{n}) \text{ iff }
  \begin{array}{l}
  \si(s) = {(\kappa_{i})}_{S}(\si^{i}) \text{ for some } \si^{i} \in \funT_{i}S \\
  \text{with } \si^{i} \in \la_{S}(\mng{\phi_{1}}^{\bbS},\ldots,\mng{\phi_{n}}^{\bbS}).
  \end{array}
\end{equation}
Note that the subscript $i$ in $\nxt{i,\la}$ works as a \emph{tag} indicating to
which part of the coproduct $\funT S = \funT_{1}S + \funT_{2}S$ the unfolding
$\si(s)$ of the state $s$ is situated.

The result on the existence of disjunctive bases that we want to prove is the
following.
\begin{thm}%
\label{t:copr}
Suppose both signatures $\Lambda_1$ and $\Lambda_2$ admit a disjunctive basis.
Then so does $\Lambda_1 \oplus \Lambda_2$.
\end{thm}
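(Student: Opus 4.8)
The plan is to construct a disjunctive basis $\D$ for $\La_1 \oplus \La_2$ directly out of given disjunctive bases $\D_1$ for $\La_1$ and $\D_2$ for $\La_2$, by a \emph{tagging} construction that mirrors the way a coproduct decomposes one-step models. The key elementary observation is that every one-step $\funT$-model $(S,\si,m)$ lies on exactly one \emph{side}: either $\si = (\kappa_1)_S(\si^1)$ or $\si = (\kappa_2)_S(\si^2)$, with $\si^i$ uniquely determined (the insertions are injective); and by~\eqref{eq:cpr1}, $(S,\si,m) \satone \nxt{i,\la}\ol\pi$ holds iff $\si$ lies on side $i$ \emph{and} the one-step $\funT_i$-model $(S,\si^i,m)$ satisfies $\hs_\la\ol\pi$. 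Consequently a one-step $\La_i$-formula $\varphi \in \MLone^{+}_{\La_i}(A)$ can be transplanted to side $i$ via the translation $\tau_i'$ that replaces each modal head $\hs_\la$ by $\nxt{i,\la}$, leaving Booleans and propositional arguments untouched: on a side-$i$ model $\tau_i'(\varphi)$ evaluates exactly as $\varphi$ does on $(S,\si^i,m)$, whereas on a model on the other side every modal atom of $\tau_i'(\varphi)$ is false, so $\tau_i'(\varphi)$ collapses there to a (possibly inconvenient) Boolean constant.

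To neutralize that degenerate behaviour I would, for each $i$ with $\La_i \neq \nada$, fix some $\la_0 \in \La_i$ and set $\top_i \isdef \nxt{i,\la_0}\ol{\top} \vee \nxt{i,\la_0^{\partial}}\ol{\bot}$; since $\La_i$ is closed under Boolean duals this is a positive one-step formula, and a direct computation with the Boolean dual shows that $(S,\si,m) \satone \top_i$ holds (independently of $m$) precisely when $\si$ lies on side $i$. Then put $\tau_i(\delta) \isdef \top_i \wedge \tau_i'(\delta)$ for $\delta \in \D_i(A)$ (with the convention $\tau_i := \tau_i'$ when $\La_i = \nada$, where there are no heads to rename), and define
\[
\D(A) \isdef \{\, \bigvee F \mid F \text{ a finite subset of } \{\top\} \cup \{\tau_1(\delta) \mid \delta \in \D_1(A)\} \cup \{\tau_2(\delta) \mid \delta \in \D_2(A)\} \,\}.
\]
Note that $\top = \bigvee\{\top\} \in \D(A)$, that $\top_i = \tau_i(\top) \in \D(A)$ since $\top \in \D_i(A)$, and that $\D(A)$ is closed under finite disjunctions by construction.

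The remaining work is four verifications, each routine once the definition is in place. (1) \emph{Every $\D$-formula is disjunctive.} As disjunctivity is closed under disjunctions, it suffices to treat the disjuncts. A model satisfying $\tau_i(\delta)$ must lie on side $i$ because of the $\top_i$-conjunct, say $\si = (\kappa_i)_S(\si^i)$; disjunctiveness of $\delta$ in $\D_i$ applied to $(S,\si^i,m)$ yields a $\funT_i$-cover $f : (S',\varrho) \to (S,\si^i)$ and a dividing marking $m'$, and then $f : (S',(\kappa_i)_{S'}(\varrho)) \to (S,\si)$ is a $\funT$-cover by naturality of $\kappa_i$, with $(S',(\kappa_i)_{S'}(\varrho),m') \satone \tau_i(\delta)$ since it lies on side $i$ and $(S',\varrho,m') \satone \delta$; the clauses on $m'$ carry over verbatim. (2) \emph{Distributivity over $\La_1\oplus\La_2$.} Given $\nxt{i,\la}\ol\pi \in \MLone^{+}_{\La_1\oplus\La_2}(A)$, distributivity of $\D_i$ gives $\delta \in \D_i(\psf(A))$ with $\hs_\la\ol\pi \oneseq \delta[\land_A]$ as one-step $\funT_i$-formulas; since $\nxt{i,\la}\ol\pi$ already forces being on side $i$, one obtains $\nxt{i,\la}\ol\pi \oneseq \tau_i(\delta)[\land_A]$ (the equivalence holds on side $i$ because the $\funT_i$-equivalence transplants and $\tau_i'$ commutes with $\land_A$, and off side $i$ because both formulas are false there), and $\tau_i(\delta) \in \D(\psf(A))$. (3) \emph{Binary distributive law.} Distributing conjunctions over disjunctions reduces this to conjunctions of basic disjuncts; cases involving $\top$ are handled by routing through the binary distributive law of $\D_i$ applied with $\top$, the mixed-side case $\tau_1(\delta)\wedge\tau_2(\delta') \oneseq \bot$ holds because $\top_1 \wedge \top_2$ is unsatisfiable, and for same-side conjuncts one has $\tau_i(\delta) \wedge \tau_i(\delta') = \top_i \wedge \tau_i'(\delta \wedge \delta')$, so feeding $\delta \wedge \delta'$ into the binary distributive law of $\D_i$ produces $\gamma \in \D_i(A \utimes B)$ with $\tau_i(\delta) \wedge \tau_i(\delta') \oneseq \tau_i(\gamma)[\theta_{A,B}]$, where again the $\top_i$-conjunct makes the equivalence hold off side $i$. (4) \emph{Edge cases.} If exactly one $\La_i$ is empty the construction still works (that side carries no modalities and never needs a $\top_i$); if both are empty then $\La_1 \oplus \La_2 = \nada$ and $\D(A) = \{\bot,\top\}$ is trivially a disjunctive basis.

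The main obstacle is exactly what the formulas $\top_i$ are there to handle. A naive tagging $\delta \mapsto \tau_i'(\delta)$ breaks distributivity in degenerate situations: most visibly, when $\hs_\la\ol\pi$ happens to be one-step valid over $\funT_i$ the combined modality $\nxt{i,\la}\ol\pi$ expresses merely ``$\si$ lies on side $i$'', which no transplanted $\D_i$-formula captures unless one has a name for that side; and even in non-degenerate cases $\tau_i'(\delta)$ can evaluate to the ``wrong'' Boolean constant on the opposite side, spoiling the one-step equivalences needed in (2) and (3). Conjoining with $\top_i$ both makes every tagged formula uniformly false off its own side (erasing those mismatched constants) and, since $\top_i = \tau_i(\top)$ already lies in $\D$, supplies precisely the missing ``side-$i$'' formula. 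Once this is arranged, all that remains is bookkeeping about how $\tau_i$ commutes with the substitutions $\land_A$ and $\theta_{A,B}$ and about transporting one-step-frame morphisms along the natural transformations $\kappa_i$.
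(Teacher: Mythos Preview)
Your approach is correct and is essentially the paper's tagging construction, but with a refinement that actually matters. The paper defines $(\D_{1}\oplus\D_{2})(A) \isdef \{ \de_{1}^{1} \lor \de_{2}^{2} \mid \de_{i} \in \D_{i}(A) \}$, where $(\cdot)^{i}$ is your $\tau_i'$, and for distributivity over $\La_1 \oplus \La_2$ relies on the claim that $\al \equiv^{1} \be$ over $\funT_i$ implies $\al^{i} \equiv^{1} \be^{i}$ over $\funT$. As you implicitly recognized, this claim is false: take $\funT_1 = \Id$, $\al = \nxt{}\top$ and $\be = \top$; these are one-step equivalent over $\Id$, yet $\al^{1} = \nxt{1,\nxt{}}\top$ is false on every side-$2$ model while $\be^{1} = \top$ is true there. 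Hence when $\hs_\la\ol\pi$ happens to be one-step valid and the only witness $\de \in \D_1(\funP A)$ at hand is $\top$, the paper's candidate $\de^{1}[\wedge_A] = \top$ does not match $\nxt{1,\la}\ol\pi$ on side $2$.

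Your explicit side-indicators $\top_i$ neutralize exactly this off-side mismatch: they force every tagged formula $\tau_i(\de)$ to be uniformly false on the wrong side, after which transporting $\funT_i$-equivalences along $\tau_i'$ becomes sound. The key algebraic point behind this, which also explains why the paper's treatment of condition~(3) in the mixed-side case goes through, is that a positive one-step formula whose ``all modal atoms false'' value is $\top$ must already be Boolean-equivalent to $\top$; so the off-side value of $\al^{i}$ is either $\bot$ or $\al^{i}$ is globally trivial. You sidestep the need for this bookkeeping by building $\top_i$ into the basis from the start, which also supplies the ``side-$i$'' formula $\top_i = \tau_i(\top)$ directly. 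In short: same tagging idea, but your version is the one that is actually robust against the degenerate cases you describe.
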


We start with giving a disjunctive basis for the combined signature.

\begin{definition}
Fix a set $A = \{ a_{1}, \ldots, a_{n} \}$.
Given a one-step formula $\al$ in the language $\MLone_{\La_{i}}(A)$, we let
$\al^{i} \in \MLone_{\Lambda_1 \oplus \Lambda_2}(A)$ denote the result of
replacing every occurrence of a modality $\hs_{\la}$ with $\nxt{i,\la}$.
Define
\[
(\D_{1}\oplus\D_{2})(A) \isdef
\{ \de_{1}^{1} \lor \de_{2}^{2} \in \MLone_{\Lambda_1 \oplus \Lambda_2}(A)
\mid \de_{i} \in \D_{i}(A) \}.
\]
where $\D_{1}$ and $\D_{2}$ are disjunctive bases for $\La_{1}$ and $\La_{2}$,
respectively.
\end{definition}

It remains to show that $\D_{1}\oplus\D_{2}$ is a disjunctive basis for the
signature $\Lambda_1 \oplus \Lambda_2$.

\begin{proofof}{Theorem~\ref{t:copr}}
We first show that $\D_{1}\oplus\D_{2}(A)$ consists of disjunctive formulas.
For this purpose we fix a set $A$ and an arbitrary formula in
$\D_{1}\oplus\D_{2}(A)$, say, $\de^{1}$ such that $\de \in \D_{1}(A)$.
Let $(S,\si,m)$ be an arbitrary one-step $\funT$-model such that $S,\si,m
\satone \de^{1}$, and make a case distinction.
If $\si = {(\kappa_{2})}_{S}(\si^{2})$ for some $\si^{2} \in \funT_{2}S$ then
a straightforward induction will show that $S,\si,m' \satone \de^{1}$,
where $m'(s) \isdef \nada$ for every $s \in S$.

If, on the other hand, $\si = {(\kappa_{1})}_{S}(\si^{1})$ for some $\si^{1} \in
\funT_{1}S$ then a routine inductive proof will reveal that the one-step
$\funT_{1}$-model $(S,\si^{1},m)$ satisfies $\de$.
By disjunctiveness of $\D_{1}$ we then obtain a separating cover for
$(S,\si^{1},m)$, consisting of a one-step $\funT_{1}$-model $(S',\si',m')$ and
a map $f: S' \to S$.
It is then easy to verify that the one-step $\funT$-model $(S',\kappa_{1}(\si'),
m')$, together with the same map $f$, is a separating cover for the one-step
$\funT$-model $(S,\si,m)$.
\medskip

It is left to check that $\D_{1}\oplus\D_{2}(A)$ satisfies the closure
conditions of disjunctive bases.
We leave it to the reader to verify that (modulo equivalence) the set
$\D_{1}\oplus\D_{2}(A)$ is closed under taking disjunctions, and contains the
formula $\top$.
For condition (2), take a formula of the form $\nxt{i,\la}\ol{\pi}$, where
$\la$ is a predicate lifting in $\La_{i}$; without loss of generality assume
$i = 1$.
Then by assumption there is a formula $\de \in \D_{1}(\funP A)$ such that
$\hs_{\la}\ol{\pi} \equiv^{1} \de[\wedge_{A}]$.
It is straightforward to verify that $\al \equiv^{1} \be$ implies $\al^{j}
\equiv^{1} \be^{j}$, for any pair of formulas $\al,\be \in
\MLone_{\La_{j}}^{+}(A)$.
But then it is immediate that $\nxt{1,\la}\ol{\pi} \equiv^{1} \de^{1}[\wedge_{A}]
\equiv^{1} (\de^{1} \lor \bot)[\wedge_{A}]$; clearly this suffices, since $\de^{1}
\lor\bot \in \D_{1}\oplus\D_{2}(\funP A)$.

Finally, for condition (3), consider the conjunction of two formulas in the sets
$\D_{1}\oplus\D_{2}(A)$ and $\D_{1}\oplus\D_{2}(B)$, respectively.
Using the distributive law of conjunctions over disjunctions, we may rewrite
this conjunction into an equivalent disjunction of formulas of the form $\al^{i}
\land \be^{j}$,
where $\al \in \D_{i}(A)$ and $\be \in \D_{j}(B)$ for some $ \{ i,j \} \sse
\{ 1,2 \}$.
Clearly then it suffices to show that each conjunction of the latter form can be
rewritten into the required shape.
We distinguish two cases.

If $i = j$, then since $\D_{i}$ is a disjunctive basis for $\La_{i}$, there is
a formula $\ga^i \in \D_{i}(A \utimes B)$ such that $\al \land \be \equiv^{1}
\ga^i[\theta_{A,B}]$.
It is then straightforward to verify that $\al^{i} \land \be^{i} \equiv^{1}
\ga^{i}[\theta_{A,B}] \equiv^{1} (\ga^{i} \lor \bot)[\theta_{A,B}]$.

If, on the other hand, $i$ and $j$ are distinct, then we may without loss of
generality assume that $i=1$ and $j=2$.
We claim that in fact for \emph{any} pair of formulas $\al \in
\MLone^{+}_{\La_{1}}(A)$ and $\be \in \MLone^{+}_{\La_{2}}(B)$ (i.e, we do not need to
assume that $\al \in \D_{1}(A)$ or $\be \in \D_{2}(B)$), the conjunction
$\al^{1} \land \be^{2}$ is equivalent to a formula from the set $\{ \top, \bot,
\al^{1}, \be^{2} \}$.
The key observation in the proof of this claim is that if $\al =
\hs_{\la}\ol{\pi}$ and $\be = \hs_{\eta}\ol{\rho}$, then $\al^{1} \land \be^{2}
\equiv \bot$ --- this easily follows from the observation about tagging that we
made just after~\eqref{eq:cpr1}.
Finally, note that $\{ \top, \bot, \al^{1}, \be^{2} \} \sse \D_{1}\oplus\D_{2}
(A \cup B)$.
This means that every formula $\ga$ in this set belongs to $\D_{1}\oplus\D_{2}
(A \utimes B)$, and in addition satisfies that $\ga[\theta_{A,B}] = \ga$.
But then clearly the claim suffices to find a formula $\ga \in
\D_{1}\oplus\D_{2}(A \utimes B)$ such that $\al^{1} \land \be^{2} \equiv^{1}
\ga[\theta_{A,B}]$, as required.
\end{proofof}

\subsection{Product}

Given two functors $\funT_1,\funT_2$  and modal signatures $\Lambda_1,\Lambda_2$
for these functors respectively, we construct a new modal signature
$\Lambda_1 \otimes \Lambda_2 $ that contains the modalities of both $\Lambda_1$ and $\Lambda_2$. We want to interpret this combined language on coalgebras that can be seen simultaneously as both $\funT_1$- and $\funT_2$-coalgebras, and the natural choice is to form the product of the two functors and consider $\funT_1 \times \funT_2$-coalgebras.
\begin{definition}
Suppose  $\Lambda_1,\Lambda_2$ are modal signatures for functors $\funT_1,\funT_2$ respectively. Then the modal signature $\Lambda_1 \otimes \Lambda_2$ for $\funT_1 \times \funT_2$ is defined by:
\[
    =\Lambda_1 \otimes \Lambda_2 = \{ \funQ \pi_i \circ \lambda \mid  \lambda \in \Lambda_i, \; i \in \{1,2\} \}
\]
where $\pi_1 : \funT_1 \times \funT_2 \to \funT_1$ and $\pi_2 : \funT_1 \times \funT_2 \to \funT_2$ are the natural projection maps.
\end{definition}
For simplicity of notation we will allow a small bit of imprecision and just use
the symbol $\lambda$ to denote the predicate lifting $ \funQ \pi_i \circ
\lambda$, given a lifting $\lambda$ for $\funT_i$.
This means that we can view the languages $\MLone_{\Lambda_1}(A)$ and
$\MLone_{\Lambda_2}(A)$ as fragments of $\MLone_{\La_1 \otimes \La_2}(A)$.
Our goal is to prove the following.

\begin{thm}%
\label{t:comb-prod}
Suppose both signatures $\Lambda_1$ and $\Lambda_2$ admit a disjunctive basis.
Then so does $\Lambda_1 \otimes \Lambda_2 $.
\end{thm}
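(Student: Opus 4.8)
The plan is to construct the disjunctive basis for $\Lambda_1 \otimes \Lambda_2$ explicitly from $\D_1$ and $\D_2$: for each set of variables $A$, let $(\D_1 \otimes \D_2)(A)$ consist of all finite disjunctions of formulas $\delta_1 \wedge \delta_2$ with $\delta_1 \in \D_1(A)$ and $\delta_2 \in \D_2(A)$, where each $\delta_i$ is read inside $\MLone_{\Lambda_1 \otimes \Lambda_2}(A)$ through the inclusion of $\MLone_{\Lambda_i}(A)$ noted just before the theorem. Since $\top \in \D_i(A)$, this assignment contains $\top$ and, up to one-step equivalence, each of $\D_1(A)$ and $\D_2(A)$, and it is closed under finite disjunction by construction, so clause~(1) of Definition~\ref{d:disbas} is immediate. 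The work is then to verify the remaining clauses and that every such formula is disjunctive.

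A preliminary observation, used throughout, is that a one-step formula of $\MLone_{\Lambda_1 \otimes \Lambda_2}$ mentioning only modalities from $\Lambda_1$ is evaluated in a product one-step model $(S,(\sigma_1,\sigma_2),m)$ exactly as in the $\funT_1$-one-step model $(S,\sigma_1,m)$: this is a trivial induction, since the combined lifting $\funQ\pi_1 \circ \la$ only tests whether the first component of the transition datum lies in $\la_S(\cdots)$, and the zero-step interpretations of the arguments depend on $m$ alone; symmetrically for $\Lambda_2$. In particular, $S,(\sigma_1,\sigma_2),m \satone \delta_1 \wedge \delta_2$ iff $S,\sigma_1,m \satone \delta_1$ and $S,\sigma_2,m \satone \delta_2$ in the respective component semantics.

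The crux, and the step I expect to be the main obstacle, is showing that $\delta_1 \wedge \delta_2$ is disjunctive (disjunctiveness then transfers to the finite disjunctions in $(\D_1\otimes\D_2)(A)$, disjunctiveness being closed under disjunctions). Given a product one-step model satisfying $\delta_1 \wedge \delta_2$, disjunctiveness of $\delta_1$ gives a cover $g_1 : (S_1,\tau_1) \to (S,\sigma_1)$ with dividing marking $n_1$, and disjunctiveness of $\delta_2$ an \emph{unrelated} cover $g_2 : (S_2,\tau_2) \to (S,\sigma_2)$ with dividing marking $n_2$; the difficulty is that these covers sit over different carriers and need not be compatible. The way around this is to pass to the coproduct $S' := S_1 + S_2$ with injections $e_1,e_2$, set $f := [g_1,g_2] : S' \to S$, push the transition data forward by $\tau_1' := \funT_1 e_1(\tau_1)$ and $\tau_2' := \funT_2 e_2(\tau_2)$, and define $m'(e_i(x)) := n_i(x)$. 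Then $f$ is a one-step frame morphism $(S',(\tau_1',\tau_2')) \to (S,(\sigma_1,\sigma_2))$, which for the product functor amounts to $\funT_1 f(\tau_1') = \sigma_1$ and $\funT_2 f(\tau_2') = \sigma_2$, both immediate from $f \circ e_i = g_i$ and functoriality; the marking $m'$ is dividing with $m'(x) \subseteq m(f(x))$; and each $e_i : (S_i,\tau_i,n_i) \to (S',\tau_i',m')$ is a morphism of $\funT_i$-one-step models, so Proposition~\ref{p:1invar} transfers the truth of $\delta_i$ to $(S',\tau_i',m')$. By the preliminary observation this yields $S',(\tau_1',\tau_2'),m' \satone \delta_1 \wedge \delta_2$, so $(S',(\tau_1',\tau_2'))$ together with $f$ and $m'$ is a dividing cover, as required.

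Finally I would check clauses~(2) and~(3) of Definition~\ref{d:disbas}, both reducing to the corresponding properties of $\D_1,\D_2$. For distributivity over $\Lambda_1 \otimes \Lambda_2$, given $\hs_\la\ol\pi$ with $\la \in \Lambda_i$, say $i=1$, take $\delta \in \D_1(\funP A)$ with $\hs_\la\ol\pi \oneseq \delta[\wedge_A]$ for the $\funT_1$-semantics; this equivalence persists in the product semantics by the preliminary observation, and $\delta \wedge \top \in (\D_1\otimes\D_2)(\funP A)$ satisfies $(\delta \wedge \top)[\wedge_A] \oneseq \hs_\la\ol\pi$. For the binary distributive law, distributing $\wedge$ over $\vee$ reduces the problem to conjunctions $(\delta_1 \wedge \delta_2) \wedge (\epsilon_1 \wedge \epsilon_2)$ with $\delta_i \in \D_i(A)$, $\epsilon_i \in \D_i(B)$; rearrange to $(\delta_1 \wedge \epsilon_1) \wedge (\delta_2 \wedge \epsilon_2)$, apply the binary distributive law of $\D_i$ to get $\gamma_i \in \D_i(A \utimes B)$ with $\delta_i \wedge \epsilon_i \oneseq \gamma_i[\theta_{A,B}]$, and observe $\gamma_1 \wedge \gamma_2 \in (\D_1\otimes\D_2)(A \utimes B)$ with $(\gamma_1 \wedge \gamma_2)[\theta_{A,B}] \oneseq (\delta_1 \wedge \delta_2) \wedge (\epsilon_1 \wedge \epsilon_2)$, again using that $\gamma_1,\gamma_2$ depend only on the first, respectively second, transition component. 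The remaining verifications are routine bookkeeping; the only genuinely delicate point is the coproduct construction reconciling the two covers.
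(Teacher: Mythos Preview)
Your proposal is correct and follows essentially the same approach as the paper: the same definition of $(\D_1\otimes\D_2)(A)$ as disjunctions of conjunctions $\delta_1\wedge\delta_2$, the same coproduct construction $S_1+S_2$ with the co-tupled map $[g_1,g_2]$ and co-tupled marking to reconcile the two separate covers, and the same reductions for clauses~(2) and~(3). The paper presents exactly this argument, only with different variable names and a couple of commutative diagrams spelling out the verification that $[g_1,g_2]$ is a one-step frame morphism.
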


The disjunctive basis for the combined signature is defined as follows.

\begin{definition}
Let $\D_1,\D_2$ be disjunctive bases for $\Lambda_1,\Lambda_2$ respectively.
Given a set of variables $A$ we define the set of one-step formulas $(\D_1
\otimes \D_2)(A)$ to be all finite disjunctions of formulas of the form
$\delta_1 \wedge \delta_2$, where $\delta_1 \in \D_1(A)$ and
$\delta_2 \in \D_2(A)$.
\end{definition}

Note that $\top \in (\D_1
\otimes \D_2)(A)$, since it is in $\D_1(A)$ and $\D_2(A)$, given that we allow a slight abuse of notation and identify  the conjunction $\top \wedge \top$ with $\top$.

We will show that $\D_1 \otimes \D_2$ is a disjunctive basis for $\La_{1} \otimes
\La_{2}$ indeed.
The first thing we need to check is that these formulas are indeed disjunctive.

\begin{prop}
Let $\delta_1 \in \D_1(A)$ and $\delta_2 \in \D_2(A)$. Then the formula $\delta_1 \wedge \delta_2$ is disjunctive.
\end{prop}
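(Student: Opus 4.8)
The plan is to reduce disjunctiveness of $\delta_1 \wedge \delta_2$ for the product functor $\funT_1 \times \funT_2$ to the (assumed) disjunctiveness of $\delta_1$ for $\funT_1$ and of $\delta_2$ for $\funT_2$, gluing the two dividing covers together along a coproduct. First I would record an elementary \emph{fragment observation}: for any $\al \in \MLone_{\La_i}(A)$, viewed inside $\MLone_{\La_1 \otimes \La_2}(A)$, and any one-step $(\funT_1 \times \funT_2)$-model $(S,(\si_1,\si_2),m)$, we have $S,(\si_1,\si_2),m \satone \al$ iff $S,\si_i,m \satone \al$ in the one-step $\funT_i$-semantics. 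This is proved by a routine induction on $\al$; the only case worth a line is $\al = \hs_{\la}\ol{\pi}$ with $\la \in \La_i$, where by the definition of $\La_1 \otimes \La_2$ via $\funQ\pi_i \circ \la$ the tuple $(\si_1,\si_2)$ lies in the one-step interpretation of $\hs_{\la}\ol{\pi}$ exactly when its $i$-th projection $\si_i$ does, and the zero-step interpretations $\mngzero{\pi_j}_m$ do not depend on the ambient functor.

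Next, given a one-step $(\funT_1 \times \funT_2)$-model $(S,\si,m)$ with $\si = (\si_1,\si_2)$ and $S,\si,m \satone \delta_1 \wedge \delta_2$, the observation gives $S,\si_1,m \satone \delta_1$ and $S,\si_2,m \satone \delta_2$ in the respective one-step $\funT_i$-semantics. Applying disjunctiveness of $\delta_1$ for $\La_1$ and of $\delta_2$ for $\La_2$, I obtain, for $i=1,2$, a one-step $\funT_i$-frame $(S_i,\si'_{i})$, a one-step $\funT_i$-frame morphism $f_i : (S_i,\si'_{i}) \to (S,\si_i)$, and a marking $m_i : S_i \to \funP A$ with $\sz{m_i(u)} \leq 1$ and $m_i(u) \sse m(f_i(u))$ for all $u$, and with $S_i,\si'_{i},m_i \satone \delta_i$. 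The crucial point --- and the reason one does \emph{not} need a fibered product or any weak-pullback assumption here --- is that the two covers live over the two \emph{independent} coordinates $\funT_1$ and $\funT_2$, so a plain coproduct combines them. I set $S' := S_1 + S_2$ (coproduct in $\Set$) with insertions $\iota_1,\iota_2$, let $f := [f_1,f_2] : S' \to S$, put $\si'' := (\funT_1\iota_1(\si'_{1}),\, \funT_2\iota_2(\si'_{2})) \in \funT_1 S' \times \funT_2 S'$, and let $m' := [m_1,m_2] : S' \to \funP A$.

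It then remains to check that $(S',\si'',m')$ together with $f$ is a dividing cover of $(S,\si,m)$ for $\delta_1 \wedge \delta_2$. That $f$ is a one-step $(\funT_1 \times \funT_2)$-frame morphism is immediate coordinatewise, from $f \circ \iota_i = f_i$ and $\funT_i f_i(\si'_{i}) = \si_i$; conditions (2) and (3) of Definition~\ref{d:dj} for $m'$ follow directly from the corresponding conditions on $m_1,m_2$. For the satisfaction clause, note that each $\iota_i$ is, by construction, a morphism of one-step $\funT_i$-models from $(S_i,\si'_{i},m_i)$ to $(S',\funT_i\iota_i(\si'_{i}),m')$, so invariance under one-step morphisms (Proposition~\ref{p:1invar}) yields $S',\funT_i\iota_i(\si'_{i}),m' \satone \delta_i$, and the fragment observation upgrades this to $S',\si'',m' \satone \delta_i$; hence $S',\si'',m' \satone \delta_1 \wedge \delta_2$. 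I do not anticipate a genuine obstacle: the only care required is the bookkeeping of which coordinate each piece of data inhabits, together with the (mild) insight that the independence of the two coordinates is exactly what makes a disjoint union, rather than a pullback, the right gluing.
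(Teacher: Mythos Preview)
Your proof is correct and follows essentially the same approach as the paper: both obtain separate dividing covers for $\delta_1$ and $\delta_2$ on the respective $\funT_i$-coordinates, then glue them via the coproduct $S_1 + S_2$ with the co-tupled covering map $[f_1,f_2]$, marking $[m_1,m_2]$, and distinguished element $(\funT_1\iota_1(\si'_1),\funT_2\iota_2(\si'_2))$. Your explicit ``fragment observation'' and appeal to Proposition~\ref{p:1invar} make transparent steps the paper handles more tersely (``by naturality of one-step formulas''), but the construction and verification are the same.
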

\begin{proof}
Let $(X,(\xi_1,\xi_2),m)$ be a one-step $\funT_1 \times \funT_2$-model satisfying the formula $\delta_1 \wedge \delta_2$. Then $X,\xi_1,m \satone \delta_1$ and $X,\xi_2,m \satone \delta_2$. Since $\delta_1,\delta_2$ are disjunctive there exist two covering one-step frames $h_1 : (Y_1,\rho_1) \to (X,\xi_1)$ and $h_2 : (Y_2,\rho_2) \to (X,\xi_2)$ together with markings $m_1 : Y_1 \to \funP \Prop$ and $m_1 : Y_2 \to \funP \Prop$ such that for each $i \in \{1,2\}$:
\begin{itemize}
    \item $Y_i,\rho_i,m_i \satone \delta_i$,
    \item $m_i(u) \subseteq m(h_i(u))$ for all $u \in Y_i$,
    \item $\sz{m_i(u)} \leq 1$ for all $u \in Y_i$.
\end{itemize}
We shall construct a covering one-step $\funT_1 \times \funT_2$-frame based on
the co-product $Y_1 + Y_2$ of the sets $Y_1,Y_2$, with the covering map given
by the co-tuple of the maps $h_1,h_2$, as shown in Figure~\ref{coprod}.
\begin{figure}[h]
\[
\xymatrix{  Y_1 \ar^{i_1}[rr]\ar_{h_1}@(d,l)[ddrr] & &  Y_1 + Y_2  \ar_{[h_1,h_2]}@{.>}[dd] &  & Y_2  \ar@(d,r)^{h_2}[ddll] \ar_{i_2}[ll]\\ 
& & & & \\
& & X & &
}
\]
\caption{The covering map $[h_1,h_2]$}\label{coprod}
\end{figure}
We define the covering one-step frame to be
$(Y_1 + Y_2, ((\funT_1 i_1) \rho_1, (\funT_2 i_2) \rho_2))$, where $i_1,i_2$ are the insertions of the co-product, so that $[h_1,h_2] \circ i_1 = h_1$ and $[h_1,h_2] \circ i_2 = h_2$. That the co-tuple $[h_1,h_2]$ is in fact a one-step frame morphism is shown by a straightforward calculation:
\begin{eqnarray*}
(\funT_1 \times \funT_2)[h_1,h_2](\funT_1 i_1 \rho_1, \funT_2 i_2 \rho_2) &  = & (\funT_1 [h_1,h_2] \circ \funT_1 i_1 (\rho_1), \funT_2 [h_1,h_2] \circ  \funT_2 i_2(\rho_2)) \\
& = & (\funT_1 ( [h_1,h_2] \circ  i_1) (\rho_1), \funT_2 ([h_1,h_2] \circ   i_2)(\rho_2)) \\
& = & (\funT_1 h_1(\rho_1), \funT_2 h_2  (\rho_2)) \\
& = & (\xi_1,\xi_2)
\end{eqnarray*}

The commutative diagram shown in Figure~\ref{bigdia} may help to give an overview of the construction, in which the top middle entry shows the type of the covering one-step frame we have constructed.

\begin{figure}[h]
\[
\xymatrix{   & \ar`l[dl] `d[dddd]_{\funT_1[h_1,h_2]} [dddd]    \funT_1 (Y_1 + Y_2) & \ar[l] \ar[r] (\funT_1 \times \funT_2)(Y_1 + Y_2)  \ar@{.>}_{(\funT_1 \times \funT_2)[h_1,h_2]}[dddd] & \funT_2 (Y_1 + Y_2)   \ar`r[dr] `d[dddd]^{\funT_2[h_1,h_2]} [dddd]   & \\ 
&  & &  &  \\
& \funT_1 Y_1 \ar^{\funT_1 i_1}[uu] \ar[dd]_{\funT_1 h_1}  &  &  \funT_2 Y_2 \ar_{\funT_2 i_2}[uu] \ar^{\funT_2 h_2}[dd]  & \\ 
& & & & & &  \\
& \funT_1 X  & \ar[l] (\funT_1 \times \funT_2)X \ar[r]  & \funT_2 X &
}
\]
\caption{The map $[h_1,h_2]$ is a one-step frame morphism}\label{bigdia}
\end{figure}

The construction is completed by defining a marking on the covering one-step $\funT_1 \times \funT_2$-frame  by co-tupling the markings for each covering $\funT_i$-frame, to obtain the marking $[m_1,m_2]$. Since each of the markings $m_1,m_2$ factor through the co-tuple $[m_1,m_2]$ via the insertions, we have $[m_1,m_2](u) \subseteq m([h_1,h_2](u)) $ for each $u \in Y_1 + Y_2$, and also
$\sz{[m_1,m_2](u)} \leq 1$ for each $u \in Y_1 + Y_2$.

By naturality of one-step formulas, and again since  the markings $m_1,m_2$ factor through the co-tuple $[m_1,m_2]$ via the insertions, we have
\[
    Y_1 + Y_2, \funT_1 i_1 (\rho_1), [m_1,m_2] \satone \delta_1
\]
and
\[
    Y_1 + Y_2,  \funT_2 i_2 (\rho_2), [m_1,m_2] \satone \delta_2
\]
It easily follows that:
\[
    Y_1+Y_2, ( \funT_1 i_1 (\rho_1), \funT_2 i_2 (\rho_2)),[m_1,m_2] \satone \delta_1 \wedge \delta_2
\]
as required. So $\delta_1 \wedge \delta_2$ is disjunctive.
\end{proof}

\begin{proofof}{Theorem~\ref{t:comb-prod}}
Since we know the formulas in $(\D_1 \otimes \D_2)(A)$ are disjunctive, we only
need to check conditions (1) -- (3) of Definition~\ref{d:disbas} one by one: 
condition (1) just requires the formulas in a disjunctive bases to contain
$\top$ and be closed under disjunctions, which holds by definition of
$\D_1 \otimes \D_2$.
For condition (2), consider a one-step formula in $\MLone_{\La_1 \otimes \La_2}(A)$ of the form
$\hs_{\la}\ol{\pi}$ where $\la \in \La_1$ or $\la \in \La_2$. Suppose the former is the case.  Then there is a  formula $\delta \in \D_1(\funP A)$ such that
 such that $\hs_{\la} \ol{\pi} \oneseq \delta[\land_A]$. But then $\delta \wedge \top \in (\D_1 \otimes \D_2)(A)$, and this formula is also one-step equivalent to $\hs_{\la} \ol{\pi}$.

Finally, for condition (3), by rewriting positive one-step formulas into disjunctive normal form (treating modalities as atomic)  we only need to consider conjunctions of the form:
\[
    (\delta_1 \wedge \delta_2) \wedge (\delta_1' \wedge \delta_2')
\]
where $\delta_1 \in \D_1(A)$, $\delta_2 \in \D_2(A)$, $\delta_1' \in \D_1(B)$, $\delta_2' \in \D_2(B)$. Apply the distributive law of $\D_1$ to $\delta_1 \wedge\delta_1'$ and that of $\D_2$ to $\delta_2 \wedge \delta_2'$ to find formulas $\gamma_1 \in \D_1(A \times B)$ and $\gamma_2 \in \D_2(A \times B)$ such that $\delta_1  \wedge \delta_1' \oneseq \gamma_1[\theta_{A,B}]$ and $\delta_2 \wedge \delta_2' \oneseq \gamma_2[\theta_{A,B}]$. The conjunction $\gamma_1 \wedge \gamma_2$ is in $(\D_1 \otimes \D_2)(A \times B)$ and we have:
\[
    (\delta_1 \wedge \delta_2) \wedge (\delta_1' \wedge \delta_2') \oneseq (\gamma_1 \wedge \gamma_2)[\theta_{A,B}]
\]
as required.
\end{proofof}

\subsection{Composition}

The third and final example that we consider involves the \emph{composition}
$\funT = \funT_{1} \circ \funT_{2}$ of two set functors $\funT_{1}$ and
$\funT_{2}$.
We first recall C\^{\i}rstea \& Pattinson's definition of the combined
signature in the case of composition~\cite{cirs:modu07}.

\begin{definition}
Let $\la \in \La_{1}$ be an $m$-ary predicate lifting, and let $\al_{1}, \ldots,
\al_{m}$ be one-step formulas in  $\MLone_{\La_{2}}(A)$ for some set
$A = \{ a_{1}, \ldots, a_{n} \}$.
Then we define the $n$-ary predicate lifting
\[
\ul{\la\tup{\al_{1},\ldots,\al_{m}}}: \funQ^{n} \Rightarrow \funQ\funT
\]
in the obvious way:
\[
\ul{\la\tup{\al_{1},\ldots,\al_{m}}}_{S}:
(U_{1},\ldots,U_{n}) \mapsto \la_{\funT_{2}S}
\Big(
\wh{\al_{1}}_{S}(U_{1},\ldots,U_{n}),\ldots,\wh{\al_{m}}_{S}(U_{1},\ldots,U_{n})
\Big).
\]
We let $\La_{1}\cmp\La_{2}$ denote the set of all (finitary) predicate liftings
that can be obtained in this way.
\end{definition}

In the sequel we will often abbreviate $\La_{1}\cmp\La_{2}$ as $\La$.

\begin{thm}%
\label{t:cmp}
Suppose both signatures $\Lambda_1$ and $\Lambda_2$ admit a disjunctive basis.
Then so does $\Lambda_1 \cmp \Lambda_2 $.
\end{thm}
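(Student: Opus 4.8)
The plan is to build a disjunctive basis for $\Lambda := \Lambda_1\cmp\Lambda_2$ (over $\funT := \funT_1\circ\funT_2$) that mirrors the way the composed liftings are assembled --- an outer $\Lambda_1$-modality fed with one-step $\Lambda_2$-formulas. Fix disjunctive bases $\D_1$ for $\Lambda_1$ and $\D_2$ for $\Lambda_2$. For a finite set $C$, a labelling $\ell\colon C\to\D_2(A)$ and a formula $\delta\in\D_1(C)$, I would define the \emph{flattening} $\delta^\ell\in\MLone^{+}_{\Lambda}(A)$ by the recursion commuting with $\top,\bot,\vee,\wedge$ and sending a modal atom $\hs_\la(\pi_1,\ldots,\pi_m)$ (with $\la\in\Lambda_1$ of arity $m$ and $\pi_i\in\Latt(C)$) to $\hs_{\ul{\la\tup{\pi_1[\ell],\ldots,\pi_m[\ell]}}}(a_1,\ldots,a_n)$, where $\pi_i[\ell]\in\MLone^{+}_{\Lambda_2}(A)$ replaces each variable $c$ by $\ell(c)$ and $a_1,\ldots,a_n$ lists $A$; one checks this modality is monotone, hence lies in $\Lambda$. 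Then $(\D_1\cmp\D_2)(A)$ is the closure under finite disjunctions of all such $\delta^\ell$. The workhorse will be an \emph{un-nesting lemma}, proved by an easy induction on $\delta$ from the definition of the composed liftings: for every one-step $\funT$-model $(X,\xi,m)$,
\[
(X,\xi,m)\satone\delta^\ell
\quad\text{iff}\quad
(\funT_2 X,\ \xi,\ \mu)\satone\delta,
\]
where $\xi\in\funT_1(\funT_2 X)$ is read as living on the one-step $\funT_1$-frame with carrier $\funT_2 X$, and $\mu\colon\funT_2 X\to\funP C$ is $\mu(t):=\{\,c\in C\mid t\in\mngone{\ell(c)}_m\,\}$.

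For \emph{disjunctiveness} of a flattened formula $\delta^\ell$ ($\delta\in\D_1(C)$) I would nest two applications of disjunctiveness. From $(S,\xi,m)\satone\delta^\ell$ the un-nesting lemma gives $(\funT_2 S,\xi,\mu)\satone\delta$, and disjunctiveness of $\delta$ yields an outer cover $f_1\colon(S'',\xi'')\to(\funT_2 S,\xi)$ in $\mathcal{E}(\funT_1)$ with a marking $m''$ satisfying $\sz{m''(u)}\le 1$, $m''(u)\sse\mu(f_1(u))$, and $(S'',\xi'',m'')\satone\delta$. Whenever $m''(u)=\{c_u\}$ we then have $(S,f_1(u),m)\satone\ell(c_u)$ with $\ell(c_u)\in\D_2(A)$ disjunctive, so we get an inner cover $g_u\colon(S_u,\eta_u)\to(S,f_1(u))$ in $\mathcal{E}(\funT_2)$ with a marking $n_u$, $\sz{n_u(v)}\le 1$, $n_u(v)\sse m(g_u(v))$, $(S_u,\eta_u,n_u)\satone\ell(c_u)$ (for $u$ with $m''(u)=\nada$ take the trivial inner cover). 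I would then glue the inner covers \emph{on the inner layer}: put $X:=\coprod_u S_u$ with inclusions $\iota_u$, let $f$ and $m^*$ be the cotuples of the $g_u$ resp.\ the $n_u$, set $k\colon S''\to\funT_2 X$, $u\mapsto\funT_2\iota_u(\eta_u)$, and $\xi^*:=\funT_1 k(\xi'')\in\funT_1\funT_2 X$. Since the $g_u$ and $f_1$ are one-step frame morphisms one gets $\funT f(\xi^*)=\xi$, while $\sz{m^*(x)}\le 1$ and $m^*(x)\sse m(f(x))$ hold by construction; applying Proposition~\ref{p:1invar} to the $\iota_u$ and then to $k$, together with monotonicity of the positive formula $\delta$, gives $(\funT_2 X,\xi^*,\mu^*)\satone\delta$ (for the marking $\mu^*$ induced on $\funT_2 X$ by $m^*$), whence $(X,\xi^*,m^*)\satone\delta^\ell$ by the un-nesting lemma. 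So $(X,\xi^*)$ with $f$ is a dividing cover, and $\delta^\ell$ is disjunctive; disjunctions of disjunctive formulas are disjunctive.

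Next come the three closure conditions. Condition~(1) holds by construction ($\top=\top^\ell$ for the empty labelling and $\top\in\D_1(\nada)$; closure under disjunctions is built in). For condition~(2), distributivity over $\Lambda$, take $\hs_{\ul{\la\tup{\al_1,\ldots,\al_m}}}(\pi_1,\ldots,\pi_n)\in\MLone^{+}_{\Lambda}(A)$; assuming the $\al_j$ positive (arranged via the one-step Lyndon Theorem~\ref{t:onestep-lyndon} for $\Lambda_2$ if necessary), this is one-step equivalent to $\hs_{\ul{\la\tup{\be_1,\ldots,\be_m}}}(a_1,\ldots,a_n)$ with $\be_j:=\al_j[\pi_1/a_1,\ldots,\pi_n/a_n]\in\MLone^{+}_{\Lambda_2}(A)$. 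With fresh variables $\ol C=\{c_1,\ldots,c_m\}$, distributivity of $\D_1$ over $\Lambda_1$ supplies $\rho\in\D_1(\funP\ol C)$ with $\hs_\la(c_1,\ldots,c_m)\oneseq\rho[\land_{\ol C}]$, and --- crucially --- for each $D\sse\ol C$ Proposition~\ref{p:db} applied to $\D_2$ and $\bigwedge_{c_j\in D}\be_j$ gives a \emph{single} formula $\ell(D)\in\D_2(\funP A)$ with $\bigwedge_{c_j\in D}\be_j\oneseq\ell(D)[\land_A]$. Then $\delta:=\rho^\ell\in(\D_1\cmp\D_2)(\funP A)$, and pushing the un-nesting lemma through the substitutions $[\land_{\ol C}]$ and $[\land_A]$ (via Proposition~\ref{p:1subst}) verifies $\hs_{\ul{\la\tup{\ol{\al}}}}\ol{\pi}\oneseq\delta[\land_A]$. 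For the binary distributive law (condition~(3)), given $\rho_1^{\ell_1}\in(\D_1\cmp\D_2)(A)$ and $\rho_2^{\ell_2}\in(\D_1\cmp\D_2)(B)$ with $\rho_i\in\D_1(C_i)$ and $C_1,C_2$ disjoint, I would take $\epsilon\in\D_1(C_1\utimes C_2)$ with $\rho_1\wedge\rho_2\oneseq\epsilon[\theta_{C_1,C_2}]$ from the binary distributive law of $\D_1$, and build $\ell\colon C_1\utimes C_2\to\D_2(A\utimes B)$ by $\ell(c)=\ell_1(c)$ for $c\in C_1$, $\ell(c')=\ell_2(c')$ for $c'\in C_2$, and $\ell(c,c')=\zeta_{c,c'}$ for $(c,c')\in C_1\times C_2$, where $\zeta_{c,c'}\in\D_2(A\utimes B)$ is supplied by the binary distributive law of $\D_2$ applied to $\ell_1(c)\wedge\ell_2(c')$. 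Then $\gamma:=\epsilon^\ell\in(\D_1\cmp\D_2)(A\utimes B)$ does the job, the verification again being a matter of observing --- through the un-nesting and substitution lemmas --- that the $C_1\utimes C_2$-marking $\gamma[\theta_{A,B}]$ induces on $\funT_2 S$ is exactly the $\utimes$-combination of the $C_i$-markings induced there by the $\rho_i^{\ell_i}$.

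The main obstacle I expect is combinatorial rather than conceptual: keeping the $\D_1$-level and $\D_2$-level substitutions synchronised through the un-nesting lemma while the variable sets keep inflating to powersets and to $\utimes$-products (and taking a little care with the positivity of the $\al_j$). The one step that really needs substance --- not just bookkeeping --- is the use of Proposition~\ref{p:db} for $\D_2$ in the proof of condition~(2) to collapse an arbitrary \emph{finite conjunction} of $\D_2$-formulas into a \emph{single} $\D_2$-formula over a powerset of variables: without it the labellings $\ell$ would fall outside $\D_2$, and it is precisely here that one uses the full force of ``$\D_2$ is a disjunctive basis'' rather than merely a family of disjunctive formulas.
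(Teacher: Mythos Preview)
Your proposal is correct and follows essentially the same route as the paper: the flattening $\delta^\ell$ is the paper's $\de'$ for $\de\in\D_1(\D_2(A))$, your un-nesting lemma is its equation~\eqref{eq:cmp0}, and the disjunctiveness argument via an outer $\D_1$-cover followed by pointwise inner $\D_2$-covers glued along a coproduct (with the map $k$ playing the role of the paper's $\rho'$) is exactly the construction used there. Your treatment of conditions~(2) and~(3) --- distributing at the $\D_1$-level first and then invoking Proposition~\ref{p:db} (resp.\ the binary law) for $\D_2$ on the inner labels --- also matches the paper's, and your explicit remark that Proposition~\ref{p:db} for $\D_2$ is the substantive step needed to keep the labelling inside $\D_2$ is spot on.
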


For the definition of the disjunctive basis of the combined signature, fix a set
$A = \{ a_{1}, \ldots, a_{n} \}$ and consider the one-step language
$\MLone_{\La_{1}}(\MLone_{\La_{2}}(A))$.
Given a formula $\al$ in this language, every occurrence of an $m$-ary
$\La_{1}$-modality $\hs_{\la}$ is of the form $\hs_{\la}(\ga_{1},\ldots,\ga_{m})$
with each $\ga_{i} \in \MLone_{\La_{2}}(A)$.
If we now replace each such subformula $\hs_{\la}(\ga_{1},\ldots,\ga_{m})$ with
the formula
$\hs_{\ul{\la\tup{\ga_{1},\ldots,\ga_{m}}}}(a_{1},\ldots,a_{n})$,
we have associated with $\al$ a unique formula $\al'\in \MLone_{\La}(A)$.


\begin{definition}
Define
\[
(\D_{1}\cmp\D_{2})(A) \isdef \{ \de' \in \MLone_{\La}(A) \mid \de \in \D_{1}(\D_{2}(A)) \}.
\]
where $\D_{1}$ and $\D_{2}$ are disjunctive bases for $\La_{1}$ and $\La_{2}$,
respectively.
\end{definition}

To prove that $\D_{1}\cmp\D_{2}$ is a disjunctive basis for the signature
$\Lambda_1 \cmp \Lambda_2$ we first show that $\D_{1}\cmp\D_{2}(A)$ consists of
disjunctive formulas, for any set $A$.

\begin{prop}
Every formula in $\D_{1}\cmp\D_{2}(A)$ is disjunctive.
\end{prop}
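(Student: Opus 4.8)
The plan is to peel apart the two functor layers of $\funT = \funT_{1}\circ\funT_{2}$ one at a time: first exploit disjunctiveness of $\de \in \D_{1}(\D_{2}(A))$ at the $\funT_{1}$-level, then disjunctiveness of the individual $\D_{2}$-formulas occurring in $\de$ at the $\funT_{2}$-level, and finally glue the resulting covers into a single cover of the given one-step $\funT$-model.

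The first ingredient is a \emph{translation lemma}. Write $B \isdef \D_{2}(A)$; for a one-step $\funT$-model $(S,\si,m)$ over $A$ let $n_{m} : \funT_{2}S \to \funP B$ be the marking $n_{m}(\rho) \isdef \{\phi \in B \mid S,\rho,m \satone \phi\}$. An easy induction on $\de \in \MLone_{\La_{1}}^{+}(B)$ --- whose only non-routine case is the modal clause, handled using $\mngzero{\phi}_{n_{m}} = \mngone{\phi}_{m}$ for $\phi \in B$ together with the fact that both the $0$-step interpretation under $n_{m}$ and the $1$-step interpretation under $m$ commute with the connectives --- shows the identity of subsets of $\funT S = \funT_{1}(\funT_{2}S)$:
\[
\mngone{\de'}_{m} = \mngone{\de}_{n_{m}} .
\]
In particular $S,\si,m \satone \de'$ if and only if $\funT_{2}S,\si,n_{m} \satone \de$.

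Now suppose $S,\si,m \satone \de'$, so $\funT_{2}S,\si,n_{m} \satone \de$. Since $\de \in \D_{1}(B)$ is disjunctive, fix a one-step $\funT_{1}$-frame morphism $g : (T,\tau) \to (\funT_{2}S,\si)$ and a marking $p : T \to \funP B$ with $T,\tau,p \satone \de$, $p(t) \sse n_{m}(g(t))$ and $\sz{p(t)} \leq 1$ for all $t$. For every $t$ with $p(t) = \{\phi_{t}\}$ we then have $S,g(t),m \satone \phi_{t}$, and as $\phi_{t} \in \D_{2}(A)$ is disjunctive we fix a one-step $\funT_{2}$-frame morphism $f_{t} : (S_{t},\si_{t}) \to (S,g(t))$ and a marking $m_{t} : S_{t} \to \funP A$ with $S_{t},\si_{t},m_{t} \satone \phi_{t}$, $m_{t}(u) \sse m(f_{t}(u))$ and $\sz{m_{t}(u)} \leq 1$; when $p(t) = \nada$ I take instead $S_{t} \isdef S$, $\si_{t} \isdef g(t)$, $f_{t} \isdef \id_{S}$, $m_{t} \equiv \nada$. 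Now glue: set $S' \isdef \coprod_{t \in T}S_{t}$ with coproduct injections $\iota_{t}$, define $f(t,u) \isdef f_{t}(u)$ and $m'(t,u) \isdef m_{t}(u)$, and put
\[
\si' \isdef \funT_{1}\bigl(t \mapsto \funT_{2}\iota_{t}(\si_{t})\bigr)(\tau) \in \funT_{1}(\funT_{2}S') = \funT S' .
\]
A short naturality computation using $\funT_{2}f_{t}(\si_{t}) = g(t)$ and then $\funT_{1}g(\tau) = \si$ gives $\funT f(\si') = \si$, so $f : (S',\si') \to (S,\si)$ is a one-step $\funT$-frame morphism; moreover $\sz{m'(t,u)} \leq 1$ and $m'(t,u) \sse m(f(t,u))$ for all $(t,u)$, so $f$ with $m'$ satisfies clauses (2) and (3) of Definition~\ref{d:dj}.

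It remains to check $S',\si',m' \satone \de'$, which by the translation lemma is $\funT_{2}S',\si',n_{m'} \satone \de$. Put $k : T \to \funT_{2}S'$, $k(t) \isdef \funT_{2}\iota_{t}(\si_{t})$, so $\si' = \funT_{1}k(\tau)$ and $k$ is a one-step $\funT_{1}$-frame morphism $(T,\tau) \to (\funT_{2}S',\si')$. Since $\iota_{t} : (S_{t},\si_{t},m_{t}) \to (S',k(t),m')$ is a morphism of one-step $\funT_{2}$-models (because $m'\circ\iota_{t} = m_{t}$), Proposition~\ref{p:1invar} gives $(n_{m'}\circ k)(t) = \{\phi \in B \mid S_{t},\si_{t},m_{t} \satone \phi\}$, hence $p(t) \sse (n_{m'}\circ k)(t)$ for all $t$ (the witness $\phi_{t}$ holds by construction, and $\nada$ is contained in anything). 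As $\de$ is positive, hence monotone in its variables, $T,\tau,p \satone \de$ upgrades to $T,\tau,n_{m'}\circ k \satone \de$, and Proposition~\ref{p:1invar} applied to $k$ then yields $\funT_{2}S',\si',n_{m'} \satone \de$, completing the argument. The step I expect to be the main obstacle is precisely this last one: the naive hope that $n_{m'}\circ k$ equals $p$ fails, because a local cover $(S_{t},\si_{t},m_{t})$ may validate $\D_{2}(A)$-formulas besides $\phi_{t}$, and the fix is to note that the weaker inclusion $p \sse n_{m'}\circ k$ still suffices thanks to monotonicity of the positive formula $\de$; a secondary, purely bookkeeping, hurdle is assembling $\si'$ correctly and verifying $\funT f(\si') = \si$ by unwinding the given equations $\funT_{2}f_{t}(\si_{t}) = g(t)$ and $\funT_{1}g(\tau) = \si$ through functoriality.
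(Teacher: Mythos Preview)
Your proposal is correct and follows essentially the same approach as the paper's own proof: the paper also establishes the translation lemma, applies $\D_{1}$-disjunctiveness at the $\funT_{1}$-level to obtain a cover of $(\funT_{2}S,\si)$, then applies $\D_{2}$-disjunctiveness pointwise to obtain local $\funT_{2}$-covers, glues these via the coproduct, and closes with exactly the monotonicity argument you correctly flag as the main obstacle. The only cosmetic difference is that the paper introduces a separate finite set $B$ together with a substitution $\eta: B \to \D_{2}(A)$ so that $\de = \de_{1}[\eta]$ with $\de_{1} \in \D_{1}(B)$, whereas you take $B = \D_{2}(A)$ directly; the remaining bookkeeping (including the handling of the empty-marking case and the verification that $\funT f(\si') = \si$) is identical up to renaming.
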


\begin{proof}
It suffices to show that for an arbitrary but fixed formula $\de \in
\D_{1}(\D_{2}(A))$, the formula $\de' \in (\D_{1}\cmp\D_{2})(A)$ is disjunctive.

Let $(S,\si,m)$ be an arbitrary one-step $\funT$-model such that $S,\si,m
\satone \de^{1}$, and let $\de \in \D_{1}(B)$ and $\eta: B \to \D_{2}(A)$ be such
that $\de = \de_{1}[\eta]$.
Consider the one-step $\funT_{1}$-model $(\funT_{2}S,\si,m_{\eta})$ where
$m_{\eta}: \funT_{2}S \to \funP B$ is given by
\[
m_{\eta}(\rho) \isdef \{ b \in B \mid (S,\rho,m) \satone \eta(b) \}.
\]

Unravelling the definitions, it is not hard to show that for any formula
$\al_{1} \in \MLone_{{\La}_{1}}(B)$ we have
\begin{equation}
\label{eq:cmp0}
(S,\si,m) \satone \al_{1}[\eta] \text{ iff }
(\funT_{2}S,\si,m_{\eta}) \satone \al_{1},
\end{equation}
and as an immediate consequence of this we find that
\[
(\funT_{2},\si,m_{\eta}) \satone \delta_{1}.
\]
Then by disjunctiveness of $\D_{1}$ there is a one-step $\funT_{1}$-model
$(Z, \zeta, n)$, where $\zeta \in \funT_{1}Z$ and $n: Z \to \funP(B)$,
together with a map $f: Z \to \funT_{2}S$ such that
$(\funT_{1}f)\zeta = \sigma$;
$\sz{n(z)} \leq 1$ and $n(z) \sse m_{\eta}(f(z))$, for all $z \in Z$; and
$(Z,\zeta,n) \satone \delta_{1}$.

Define, for $z \in Z$, $\eta_{z} \in \D(A)$ to be the formula $\eta(b)$ in
case $b \in B$ is the unique element of $n(z)$; and set $\eta_{z} \isdef \top$
if $n(z) = \nada$.
Observe that for any $z \in Z$, the triple $(S,f(z),m)$ constitutes a
$\funT_{2}$-model; we claim that
\begin{equation}
\label{eq:cmp1}
(S,f(z),m) \satone \eta_{z}.
\end{equation}
To see this, clearly we only have to pay attention to the case where $n(z) =
\{b\}$ for some $b \in B$.
But here it is immediate from $n(z) \sse m_{\eta}(f(z))$ that $b \in
m_{\eta}(f(z))$, and so we obtain~\eqref{eq:cmp1} by definition of $m_{\eta}$.

Given~\eqref{eq:cmp1}, we now use the disjunctiveness of $\D_{2}$ to obtain,
for each $z \in Z$, a one-step $\funT_{2}$-model $(S_{z},\rho_{z},m_{z})$,
with $\rho_{z} \in \funT_{2}S_{z}$ and $m_{z}: S_{z} \to \funP(A)$,
together with a map $g_{z}: S_{z} \to S$, such that
$(T_{2}g_{z})(\rho_{z}) = f(z)$;
$\sz{m_{z}} \leq 1$ and $m_{z}(t) \sse m(f(t))$, for all $t \in S_{z}$;
and $(S_{z},\rho_{z},m_{z}) \satone \eta_{z}$.

We are now ready to define the required separating cover for $(S,\si,m)$.
As its domain we will take the coproduct $\coprod_{z\in Z} S_{z}$, and where
for $z \in Z$ we let $i_{z}: S_{z} \to S'$ denote the insertion map, we define
the maps $m': S' \to \funP A$ and $g: S' \to S$ via co-tupling; in particular,
we require $m \circ i_{z} = m_{z}$ and $g \circ i_{z} = g_{z}$ for all $z \in Z$.
For the definition of the distinguished element $\si' \in \funT S'$, we first
define the map $\rho': Z \to \funT_{2} S'$ by setting $\rho'(z) \isdef
(\funT_{2} i_{z})\rho_{z}$.
We then put $\si'\isdef (\funT_{1} \rho') \zeta$.

It is obvious that $(S',\si',m)$ is a one-step $\funT$-model; we now check that
together with the map $g: S' \to S$ is indeed a separating cover for
$(S,\si,m)$.
First of all, it is obvious that $\sz{m'(s')} \leq 1$ and $m'(z) \sse m(g(z))$,
for all $s' \in S'$.
Second, to check that
\begin{equation}
\label{eq:cmp2}
(\funT g)\si' = \si
\end{equation}
we first observe that $(\funT_{2}g) \circ \rho' = f$, as can easily verified:
$((\funT_{2}g)\circ \rho')(z) = (\funT_{2}g)(\rho'(z)) =
(\funT_{2}g)((\funT_{2} i_{z})\rho_{z}) =
(\funT_{2} (g \circ i_{z}))(\rho_{z}) =
(\funT_{2} g_{z})(\rho_{z}) = f(z)$.
We then continue with the following calculation:
\[
(\funT g)\si' = (\funT_{1} \funT_{2} g) ((\funT_{1} \rho')\zeta)
= ((\funT_{1} \funT_{2} g) \circ (\funT_{1} \rho'))(\zeta)
= (\funT_{1}((\funT_{2}g) \circ \rho'))(\zeta)
= (\funT_{1}f)(\zeta)
= \si.
\]

Finally, we need to prove that
\begin{equation}
\label{eq:cmp3}
(S',\si',m') \satone \de'.
\end{equation}
To see this, let $m'_{\eta}: \funT_{2}S' \to \funP B$ be the marking given by
\[
m'_{\eta}(\rho') \isdef \{ b \in B \mid (S',\rho',m') \satone \eta(b) \},
\]
and consider the marking $n': Z \to \funP B$ defined by $n'(z) \isdef
m'_{\eta}(\rho'_{z})$.
Our key claim is now that $n \sse n'$, and we prove this as follows.
In case $n(z) = \nada$ there is nothing to prove; in case $n(z) \neq \nada$,
let $b \in B$ be the unique element of $n(z)$.
Then we have
\begin{align*}
b \in n(z) & \Rightarrow b \in m_{\eta}(f(z))
   & \text{(assumptions on $f$ and $n$)}
\\ & \Rightarrow (S, f(z), m) \satone \eta(b)
   & \text{(definition $m_{\eta}$)}
\\ & \Rightarrow (S_{z}, \rho_{z}, m_{z}) \satone \eta(b)
   & \text{($(S_{z}, \rho_{z}, m_{z})$ is cover)}
\\ & \Rightarrow (S', \rho'_{z}, m') \satone \eta(b)
   & \text{(invariance under $i_{z}$, Prop.~\ref{p:1invar})}
\\ & \Rightarrow b \in m'_{\eta}(\rho'_{z})
   & \text{(definition $m'_{\eta}$)}
\\ & \Rightarrow b \in n'(z)
   & \text{(definition $n'$)}
\end{align*}
But by the monotonicity of disjunctive formulas, it follows from $n \sse n'$ and
$(Z,\zeta,n) \satone \delta_{1}$ that $(Z,\zeta,n') \satone \delta_{1}$.
Then by invariance (Proposition~\ref{p:1invar}) we find that $(\funT_{2}S,\si,
m'_{\eta}) \satone \delta_{1}$, and from this we may derive~\eqref{eq:cmp3},
using an analogous claim to~\eqref{eq:cmp0}.
\end{proof}

\begin{proofof}{Theorem~\ref{t:cmp}}
Since we have verified the disjunctivity of all formulas in
$\D_{1}\cmp\D_{2}(A)$, it remains to check that $\D_{1}\cmp\D_{2}(A)$ satisfies
the closure conditions of disjunctive bases.
For condition (1) this is an immediate consequence of the definitions.
It is in fact not very hard to see that $\D_{1}\cmp\D_{2}$ meets the other two
closure conditions as well, but full proofs are very tedious.
In order to avoid convoluted syntax we confine ourselves to somewhat sketchy
arguments here.

For condition (2), consider a formula of the form
$\hs_{\ul{\la\tup{\al_{1},\ldots,\al_{m}}}}(b_{1},\ldots,b_{n})[\pi]$,
where $\la \in \La_{1}$, each $\al_{i} \in \MLone_{\La_{2}}(B)$ and $\pi: B \to
\Latt(A)$.
Since $\D_{1}$ is a disjunctive basis for $\La_{1}$, we may find formulas
$\de_{1} \in \D_{1}(\{ 1,\ldots,k\})$ and $\be_{1},\ldots,\be_{k} \in
\MLone_{\La_{2}}(A)$ such that $\hs_{\la}(\al_{1}[\pi],\ldots,\al_{m}[\pi])
\equiv^{1} \de_{1}(\be_{1},\ldots,\be_{k})$.
But, now using the fact that $\D_{2}$ is a disjunctive basis for $\La_{2}$,
we may derive from Proposition~\ref{p:db} that each formula $\be_{i}$ is
equivalent to a formula $\ga_{i}[\wedge_{A}]$, where $\ga_{i} \in
\D_{2}(\funP A)$.
It is then a tedious but straightforward exercise to show that
$\hs_{\ul{\la\tup{\al_{1},\ldots,\al_{m}}}}(b_{1},\ldots,b_{n})[\pi]
\equiv^{1} \de'$, where we define $\de \isdef \de_{1}(\ga_{1},\ldots,\ga_{k})$.

Finally, for condition (3), consider two disjunctive formulas
$\ga' \in \D_{1}\cmp\D_{2}(A)$ and $\de' \in \D_{1}\cmp\D_{2}(B)$,
where $\ga = \ga_{1}[\si]$ and $\de = \de_{1}[\tau]$
for $\ga_{1} \in \D_{1}(A')$, $\de_{1} \in \D_{1}(B')$,
and $\si: A' \to \D_{2}(A)$, $\tau: B' \to \D_{2}(B)$
for some sets $A'$ and $B'$ that without loss of generality we may
take to be disjoint.
$\D_{1}$ being a disjunctive basis yields a formula $\be_{1} \in \D(A' \utimes
B')$ such that $\ga_{1} \land \de_{1} \equiv^{1} \be_{1}[\theta_{A',B'}]$.
By the disjointness of $A'$ and $B'$ we then have
\begin{equation}
\label{eq:cmp11}
\ga' \land \de' \equiv^{1} \be_{1}[\theta_{A',B'}][\si][\tau].
\end{equation}
Now consider an arbitrary pair $(a',b') \in A' \times B'$; since $\D_{2}$
is a disjunctive basis there is a formula $\al_{a',b'} \in \D_{2}(A\times B)$
such that $\si_{a'}\land \tau_{b'} \equiv^{1} \al_{a',b'}[\theta_{A,B}]$.
Define the following substitution $\si \utimes \tau: A' \utimes B' \to \
D_{2}(A' \utimes B')$:
\[
\si\utimes\tau(d) \isdef
   \begin{cases}
      \si(d)      & \text{ if } d \in A'
   \\ \tau(d)     & \text{ if } d \in B'
   \\ \al_{a',b'} & \text{ if } d = (a',b') \in A' \times B'
   \end{cases}
\]
It then follows from the definitions that the substitutions
$[\theta_{A',B'}][\si][\tau]$ and $[\si\utimes\tau][\theta_{A,B}]$ produce
one-step equivalent formulas, so that combining this with~\eqref{eq:cmp11} we
obtain that
\[
\ga' \land \de' \equiv^{1} \be_{1}[\si\utimes\tau][\theta_{A,B}].
\]
This suffices, since obviously we have that $\be_{1}[\si\utimes\tau]
\in \D_{1}\cmp\D_{2}(A\utimes B)$.
\end{proofof}


\section{Yoneda representation of disjunctive liftings}%
\label{s:yoneda}

It is a well known fact in coalgebraic modal logic that predicate liftings have
a neat representation via an application of the Yoneda lemma.
This was explored by Schr\"{o}der in~\cite{schr:expr08}, where it was used among
other things to prove a characterization theorem for the monotone predicate
liftings.
Here, we apply the same idea to disjunctive liftings. We shall be working with a slightly generalized notion of predicate lifting here, taking a predicate lifting over a finite set of variables $A$ to be a natural transformation  $\lambda : \funQ^A \to \funQ \circ \funT$. Clearly, one-step formulas in $\MLoneLa(A)$ can then be viewed as predicate liftings over $A$.

\begin{definition}
Let $\lambda : \funQ^A \to \funQ \circ \funT$ be a predicate lifting over
variables $A = \{a_1,\dots,a_n\}$.
The \emph{Yoneda representation} $y(\lambda)$ of $\lambda$ is the subset
\[
\lambda_{\funP A}(\mathsf{true}_{a_1},\dots,\mathsf{true}_{a_n}) \in
\funP \funT \funP A
\]
where $\mathsf{true}_{a_i} = \{B \subseteq A \mid a_i \in B\}$.
We shall write simply $\lambda \subseteq \funT \funP A$ instead of $y(\lambda)$.
\end{definition}

\begin{definition}
Given a set $A$, let $A^\top$ be the set $A \cup \{\top\}$.
Let $\epsilon_A \subseteq  A^\top \times \funP A$ be the relation defined by
$a \epsilon_A B$ iff $a \in B$, and $\top \epsilon_A B$ for all $B \subseteq A$.
Let $\eta_A : A^\top \to \funP A$ be defined by $\eta_A(a) = \{a\}$, and
$\eta_A(\top) = \nada$.
\end{definition}

In the remainder of this section we assume familiarity with the Barr relation
lifting $\ol{\funT}$ associated with a functor $\funT$; see~\cite{kupk:comp12}
for the definition and some basic properties.

\begin{definition}
A predicate lifting $\lambda \subseteq \funT \funP A$ is said to be
\emph{divisible} if, for all $\alpha \in \lambda$ there is some $\beta \in \funT
A^\top$ such that
$(\beta,\alpha) \in \overline{\funT}(\epsilon_A)$ and
$\fun \eta_A (\beta) \in \lambda$.
\end{definition}

\begin{prop}%
\label{p:yoneda-char}
Any disjunctive lifting over $A$ is divisible, and if $\funT$ preserves weak
pullbacks the disjunctive liftings over $A$ are precisely the divisible ones.
\end{prop}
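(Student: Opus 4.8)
The plan is to translate everything into the Yoneda picture and then chase witnesses along the structure maps $\eta_A$ and $\epsilon_A$. The first step I would record is the basic dictionary: for a lifting (equivalently, one-step formula) $\lambda$ over $A$, naturality of $\lambda$ together with the identity $\mngzero{a}_m = m^{-1}[\mathsf{true}_a]$ yields
\[
S,\si,m \satone \lambda \iff (\funT m)(\si) \in y(\lambda)
\]
for every one-step model $(S,\si,m)$; in particular $y(\lambda) = \{\si \in \funT\funP A \mid \funP A,\si,\id \satone \lambda\}$, with $\id$ the canonical $\funP A$-marking. With this established, both implications become essentially bookkeeping with naturality.

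For the implication \emph{disjunctive $\Rightarrow$ divisible} I would take $\alpha \in y(\lambda)$, so that $\funP A,\alpha,\id \satone\lambda$, and apply disjunctiveness of $\lambda$ to this one-step model, obtaining a one-step frame $(S',\si')$, a morphism $f:(S',\si')\to(\funP A,\alpha)$, and a marking $m':S'\to\funP A$ with $\sz{m'(s')}\le 1$, $m'(s')\sse f(s')$ and $S',\si',m'\satone\lambda$. I would then define $g:S'\to A^\top$ by $g(s')=a$ when $m'(s')=\{a\}$ and $g(s')=\top$ when $m'(s')=\nada$, and set $\beta:=(\funT g)(\si')$. Two checks finish this direction: first, the map $s'\mapsto(g(s'),f(s'))$ lands in $\epsilon_A$ — precisely because $m'(s')\sse f(s')$ and $\top\,\epsilon_A\,B$ for every $B$ — so applying $\funT$ to it and transporting $\si'$ produces a witness in $\funT\epsilon_A$ showing $(\beta,\alpha)\in\overline{\funT}(\epsilon_A)$ (here we use that $f$ is a morphism into $(\funP A,\alpha)$); second, $\eta_A\circ g=m'$ by construction, whence $\funT\eta_A(\beta)=(\funT m')(\si')\in y(\lambda)$ by the dictionary applied to $S',\si',m'\satone\lambda$.

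For the converse I would assume $\funT$ preserves weak pullbacks and $y(\lambda)$ is divisible, and let $(S,\si,m)$ be a one-step model with $S,\si,m\satone\lambda$, so $\alpha:=(\funT m)(\si)\in y(\lambda)$. Divisibility gives $\beta\in\funT A^\top$ with $(\beta,\alpha)\in\overline{\funT}(\epsilon_A)$ and $\funT\eta_A(\beta)\in y(\lambda)$; unfolding the Barr lifting yields $\rho\in\funT\epsilon_A$ with $(\funT\pi_{A^\top})(\rho)=\beta$ and $(\funT\pi_{\funP A})(\rho)=\alpha$. Now I would form the pullback $P$ of $S\xrightarrow{m}\funP A\xleftarrow{\pi_{\funP A}}\epsilon_A$, with projections $p_1:P\to S$ and $p_2:P\to\epsilon_A$; since $(\funT m)(\si)=\alpha=(\funT\pi_{\funP A})(\rho)$ and $\funT$ weakly preserves this pullback, there is $w\in\funT P$ with $(\funT p_1)(w)=\si$ and $(\funT p_2)(w)=\rho$. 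Take $(S',\si'):=(P,w)$, $f:=p_1$ (a one-step frame morphism onto $(S,\si)$), and $m':=\eta_A\circ\pi_{A^\top}\circ p_2$. At a point $(s,e)\in P$, writing $e=(x,B)$ we get $m(s)=B$, $x\,\epsilon_A\,B$ and $m'(s,e)=\eta_A(x)$, so $\sz{m'(s,e)}\le 1$ and $m'(s,e)\sse B=m(f(s,e))$ — conditions (2) and (3) of Definition~\ref{d:dj}. Finally $(\funT m')(w)=\funT\eta_A\big((\funT\pi_{A^\top})(\rho)\big)=\funT\eta_A(\beta)\in y(\lambda)$, which by the dictionary is condition (1), i.e.\ $S',\si',m'\satone\lambda$. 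Hence $\lambda$ is disjunctive.

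The one place where the hypothesis does real work is the pullback step in the converse: weak-pullback preservation is exactly what lets the two data living over $\alpha$ — namely $\si$ (via $\funT m$) and the Barr witness $\rho$ (via $\funT\pi_{\funP A}$) — be amalgamated into a single $w\in\funT P$, from which the dividing cover is read off. I expect this to be the main (indeed essentially the only non-routine) obstacle; without weak-pullback preservation the amalgamation can fail, which is why the converse is only claimed under that assumption.
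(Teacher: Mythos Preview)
Your proof is correct and follows essentially the same approach as the paper's: both directions use the Yoneda dictionary $S,\si,m \satone \lambda \iff (\funT m)(\si) \in y(\lambda)$, apply disjunctiveness to the canonical model $(\funP A,\alpha,\id)$ and package the resulting singleton-or-empty marking via a map into $A^{\top}$ for the forward direction, and for the converse form the pullback of $S \xrightarrow{m} \funP A \xleftarrow{\pi_{\funP A}} \epsilon_A$ and use weak pullback preservation to amalgamate $\si$ with the Barr-lifting witness. The only difference is cosmetic: you make the dictionary explicit at the outset, whereas the paper invokes it as a pair of naturality arguments.
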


\begin{proof}
Suppose $\lambda \subseteq \funT \funP A$ is disjunctive, and pick $\alpha \in
\lambda$.
Then $\funP A,\alpha,\id_{\funP A} \satone \lambda$, so since
$\lambda$ is disjunctive there are some one-step model $(X,\xi,m)$ and
map $f: X \to \funP A$ with $m : X \to \funP A$,
$m(u) \subseteq f(u)$ for all $u \in X$, $\fun f (\xi) = \alpha$, and
$\sz{m(u)} \leq 1$ for all $u \in X$.
We define a map $g : X \to A^\top$ by setting $g : u \mapsto \top$ if $m(u) =
\nada$, $g : u \mapsto a$ if $m(u) = \{a\}$.
We tuple the maps $f,g$ to get a map $\langle f,g \rangle : X \to A^\top \times
\funP A $.
In fact, since $m(u) \subseteq f(u)$ for all $u \in X$, we have $\langle f,g
\rangle : X \to \epsilon_A$. Let $\pi_1 : \epsilon_A \to A^\top$ and $\pi_2 :
\epsilon_A \to \funP A$ be the projection maps.
We have the following diagram, in which the two triangles and the outer edges
commute (i.e., $m = \eta_{A}\circ g$).
\[
\xymatrix{ & & \funP A \\
X \ar@(u,l)^{m}[urr]\ar^{f}[urr]\ar_{g}[drr] \ar_{\langle f,g \rangle}[rr] & & \epsilon_A \ar_{\pi_2}[u] \ar^{\pi_1}[d] \\ 
& & A^\top \ar@(r,r)_{\eta_A}[uu] } 
\]
Now apply $\funT$ to this diagram and define $\beta \in \funT A^\top$ to be
$\funT(\pi_1 \circ \langle f,g\rangle)(\xi) = \funT g(\xi)$.
First, we have $(\beta,\alpha) \in \overline{\funT}(\epsilon_A)$, witnessed by
$\funT(\langle f,g\rangle)(\xi) \in \funT \epsilon_A$.
We claim that $\funT \eta_A(\beta) \in \lambda$.
But since $X,\xi,m \satone \lambda$ and $m = \eta_A \circ g$, naturality of
$\lambda$ applied to the map $g : X \to A^\top$, gives $A^\top,\beta,\eta_A
\satone \lambda$.
Another naturality argument, applied to
$\eta_A : (A^\top,\be,\eta_{A}) \to (\funP A,\funT \eta_A (\be), \id_{\funP A})$
gives $\funP A,\funT \eta_A (\beta), \id_{\funP A} \satone \lambda$, i.e.,
$\funT \eta_A (\beta) \in \lambda$.

For the converse direction, under the assumption that $\funT$ preserves weak
pullbacks, suppose that $\lambda$ is divisible, and suppose $X,\xi,m \satone
\lambda$. We get $\funT m(\xi) \in \lambda$ and so we find some $\beta \in
\fun A^\top$ with $\beta (\overline{\funT} \epsilon_A) \funT m (\xi) $ and
$\funT \eta_A(\beta) \in \lambda$.
Pick some $\beta'\in \funT \epsilon_A$ with $\funT \pi_2 (\beta') = \funT m(\xi)$
and $\funT\pi_1 (\beta') = \beta$.
Let $R,g_1,g_2$ be the pullback of the diagram $X \rightarrow \funP A \leftarrow
\epsilon_A$, shown in the diagram:

\[
    \xymatrix{ X \ar^m[rr] & & \funP A \\
R\ar^{g_1}[u] \ar_{g_2}[rr] & & \epsilon_A \ar_{\pi_2}[u] \ar^{\pi_1}[d] \\
& & A^\top \ar@(r,r)_{\eta_A}[uu] } 
\]

By weak pullback preservation there is $\rho \in \funT R$ with $\funT g_1(\rho)
= \xi$ and $\funT g_2 (\rho) = \beta'$.
The map $g_1 : (R,\rho) \to (X,\xi)$ is thus a cover, and we have a marking
$m'$ on $R$ defined by $\eta_A \circ \pi_1 \circ g_2$ (follow the bottom-right
path in the previous diagram).
It is now routine to check that $R,\rho,m' \satone \lambda$, and $\sz{m'(u)}
\leq 1$ and $m'(u) \subseteq m(g_1(u))$ for all $u \in R$, so we are done.
\end{proof}

For the moment, we leave the question open, whether a similar characterization
of disjunctive predicate liftings can be proved without weak pullback
preservation.
We also leave it as an open problem to characterize the functors that admit
a disjunctive basis.


\subsubsection*{Acknowledgement}
We thank the two anonymous referees for their detailed and helpful comments.



\bibliographystyle{plain}
\bibliography{ml-book,nabla,mu,extra}


\appendix

\end{document}